\documentclass{article}
\usepackage{graphicx} 
\usepackage[dvipsnames]{xcolor}
\usepackage[colorlinks=true,linkcolor=blue,citecolor=blue,urlcolor=blue,bookmarks=true]{hyperref}
\usepackage{schmidhuber}
\usepackage{bm}

\newtheorem{assumption}[theorem]{Assumption}

\usepackage{tocloft}
\usepackage{multicol}

\newcommand{\pr}[1]{\mathrm{Pr}\left[#1\right]}
\newcommand{\lr}[1]{\left(#1\right)}
\newcommand{\EE}[1]{\mathbf{E}\left[#1\right]}
\newcommand{\cC}{\mathcal{C}}

\newcommand{\cK}{\mathcal{K}}
\newcommand{\cKt}{\Tilde{\mathcal{K}}}

\newcommand{\cP}{\mathcal{P}}
\newcommand{\cS}{\mathcal{S}}
\newcommand{\cT}{\mathcal{T}}
\newcommand{\cU}{\mathcal{U}}

\newcommand{\PP}{\mathbb{P}}
\newcommand{\QQ}{\mathbb{Q}}

\newcommand\Sym{\mathrm{Sym}}
\newcommand{\bern}{\mathrm{Bern}}
\newcommand{\disj}{\mathrm{disj}}
\newcommand{\ag}{\mathrm{ag}}
\newcommand{\dis}{\mathrm{dis}}
\newcommand{\unif}{\mathrm{Unif}}

\newcommand{\diag}{\mathrm{diag}}

\declaretheorem[sibling=theorem]{problem}
\crefname{problem}{Problem}{Problems}
\Crefname{problem}{Problem}{Problems}

\newif\ifanon\anonfalse

\title{Quartic quantum speedups for community detection}
\ifanon
    \author{}
    \date{}
\else
    \author{
    Alexander Schmidhuber\footnote{alexsc@mit.edu, contributed equally}\\MIT \and
    Alexander Zlokapa\footnote{azlokapa@mit.edu, contributed equally}\\MIT
    }
\fi 

\date{October 9, 2025}
\begin{document}
\maketitle
\begin{abstract}
Community detection is a foundational problem in data science. Its natural extension to \emph{hypergraphs} captures higher-order correlations beyond pairwise interactions.
In this work, we develop a quantum algorithm for hypergraph community detection that achieves a quartic quantum speedup over the best known classical algorithm, along with superpolynomial savings in space.
Our algorithm is based on the Kikuchi method, which we extend beyond previously considered problems such as Tensor PCA and $p$XORSAT to a broad family of generalized stochastic block models. To demonstrate (near) optimality of this method, we prove matching lower bounds (up to logarithmic factors) in the low-degree framework, showing that the algorithm saturates a smooth statistical-computational tradeoff. The quantum speedup arises from a quantized version of the Kikuchi method and is based on the efficient preparation of a guiding state correlated with the underlying community structure. Our work suggests that prior quantum speedups using the Kikuchi method are sufficiently robust to encompass a broader set of problems than previously believed; we conjecture that a quantity known as marginal order characterizes the existence of these quantum speedups.

\end{abstract}
\clearpage
\tableofcontents
\clearpage

\section{Introduction}

Developing new super-quadratic quantum speedups is the central challenge of quantum algorithms research \cite{babbush2021focus,hoefler2023disentangling}. Particularly compelling targets are domains of high practical value, such as machine learning and combinatorial optimization, where genuine quantum advantages could yield broad practical impact.

One of the oldest and most widely applicable tasks in machine learning is community detection: first formalized in 1983 with stochastic block models (SBMs)~\cite{holland1983stochastic}, it provides a generic setting to analyze clustering algorithms for data on graphs.
In its original form, an SBM defines a family of random graphs where vertices are labeled by their communities. If a pair of vertices belongs to the same community, an edge is assigned with probability $\rho$; otherwise, an edge is assigned with probability $\rho'$. The simplest task is then to detect if $\rho=\rho'$ or $\rho \neq \rho'$. The signal-to-noise ratio (SNR) $\lambda$ is a quantity that goes to zero when $\rho = \rho'$ and increases as they separate.

Many relational datasets have interactions beyond pairwise relationships, necessitating a generalization of the SBM definition. A \emph{hypergraph} SBM (HSBM) assigns a hyperedge to a set of vertices based on the community labels within the set. Numerous practical settings require community detection on hypergraphs, including protein-protein interaction networks~\cite{xiong2005identification}, gene regulatory networks~\cite{michoel2012alignment}, and other applications in biology, social networks, and computer vision~\cite{momeni2017lotka,sanchez2019high,wood2012mechanism,wood2014uncovering,tekin2018prevalence,grilli2017higher,bairey2016high,ji2022fc,ma2021accurately,ghoshal2009random,zlatic2009hypergraph,govindu2005tensor}.

An HSBM assigns a hyperedge with larger probability to a set of vertices mostly belong to the same community, and with smaller probability if they belong to many different communities. Typically, the task of community detection might have several phases of computational complexity. (Here, we imagine having $n$ vertices and a constant-order hypergraph.)
\begin{itemize}
    \item \emph{Easy phase.} Above some SNR threshold ($\beta > \beta_1$), a polynomial-time algorithm can detect communities.
    \item \emph{Hard phase.} For an intermediate range of SNRs ($\beta_0 < \beta < \beta_1$), a superpolynomial-time algorithm detects communities. It may have an \emph{SNR-computation tradeoff}, where the cost of the algorithm smoothly interpolates between polynomial and exponential as the SNR decreases.
    \item \emph{Impossible phase.} Below some SNR ($\beta < \beta_0$), it is information-theoretically impossible to detect communities.
\end{itemize}
This phase diagram is common across several planted inference problems on tensors, such as Tensor PCA and planted $p$XORSAT~\cite{WAM19,hastings2006community}; one may expect that an HSBM has the same phase diagram. Somewhat surprisingly, whether or not a hard phase exists depends on the specific choice of bias between hyperedges. For the simplest choice where a hyperedge is assigned with larger probability if all the vertices in the candidate hyperedge have the same label, polynomial classical algorithms exist to detect communities in HSBMs right up to the information theoretic threshold~\cite{angelini2015spectral,kim2018stochastic,zhang2022exact,stephan2024sparse}. No hard phase (or SNR-computation tradeoff) occurs, regardless of the hypergraph's order.

The absence of a hard phase for the simple choice of HSBM occurs for a fundamental reason: the marginals of only \emph{two} vertices suffice to solve community detection in this simple HSBM~\cite{angelini2015spectral}. (In contrast, the problems of $p$-order Tensor PCA and planted $p$XORSAT require $p$-body marginals to recover the planted signal.) Hence, we refer to the aforementioned simple HSBM as a 2-marginal HSBM.

In many of the real-world applications listed above, two-body marginals do not capture all the information in the dataset: XOR-like interactions and other phenomena undetectable by pairwise marginals are well-known to occur in gene regulation and protein-protein networks ~\cite{buchler2003schemes,tagkopoulos2008predictive,moore2009epistasis,ramadan2004hypergraph,agrawal2018large,kim2006relating,kiel2013integration,franzosa2011structural}. In this work, we focus on HSBMs where community detection requires more than two-body marginals. We give a basic toy model of such an HSBM and show a quantum speedup, providing concrete evidence for the following more general conjecture.
\begin{quote}
    \centering
    \emph{Hypergraph SBMs whose clusters are undetectable from pairwise marginals \\exhibit a super-quadratic polynomial quantum speedup for community detection.}
\end{quote}
Informally, the HSBM we study assigns hyperedges with high probability to a set of vertices sharing the same label, and with low probability otherwise, chosen such that marginals smaller than $p$-wise do not see the community structure. For $k$ communities, the probability of assigning a hyperedge in the $p$-marginal HSBM is equal to that of the 2-marginal HSBM up to an $O(1/k)$ correction.

Our classical and quantum algorithms are based on the Kikuchi method \cite{WAM19}, which was first proposed for the tasks of Tensor PCA and $p$XORSAT~\cite{WAM19} due to their technical similarity: $p$XORSAT can be recast as a variant of the spiked tensor problem with sparse Rademacher observations instead of Gaussian ones. Since then, all further applications of the Kikuchi algorithm have been restricted to settings directly related to CSP refutation~\cite{guruswami2022algorithms,hsieh2023simple,alrabiah2023near,hsieh2024small}. It has thus been unclear if the Kikuchi algorithm is useful for a wider scope of problems, as well as if the super-quadratic quantum speedup would survive in those potential settings. Our work represents a new application for the classical (and quantum) Kikuchi algorithm in community detection. This is complemented by lower bounds for the $p$-marginal HSBM that suggest a statistical-computational gap based on the recent work of~\cite{kunisky2024low}. More explicitly, we show the following.
\begin{itemize}
    \item \textbf{Lower bounds against low coordinate degree functions.} For any $p$-marginal generalized SBM (including the $p$-marginal HSBM), we show lower bounds that establish a hard phase with an SNR-computation tradeoff for low coordinate degree functions, which generalize low-degree polynomial algorithms.
    \item \textbf{Tight Kikuchi algorithm.} We develop a classical Kikuchi algorithm that matches the lower bounds throughout the entire hard phase up to a logarithmic factor; our algorithm applies to any even hypergraph order $p\geq 2$.
    \item \textbf{Quartic quantum speedup.} We show that a guiding state can be efficiently prepared and that it has sufficient overlap to obtain a quartic speedup over the classical Kikuchi method for the task of strong detection between a non-trivial and null $p$-marginal HSBM. The algorithm produces the quartic speedup with high probability over random $p$-marginal HSBM instances for even $p \geq 4$ and any number of communities $k \geq 2$.
\end{itemize}

The form of the quantum algorithm is similar to previous works on Tensor PCA~\cite{hastings2020classical} and $p$XORSAT \cite{schmidhuber2025quartic}. A Kikuchi Hamiltonian and a guiding state are constructed from the hypergraph adjacency matrix; quantum phase estimation certifies a bound on the spectral norm of the Kikuchi matrix and prepares a ground eigenstate. We prove that the resulting bound on the spectral norm detects the presence of communities. 

Lastly, we comment on a recent result \cite{gupta2025classical} describing a non-spectral classical algorithm for Planted $p$XORSAT, improving quadratically over \cite{WAM19} provided that the locality parameter $p$ is a large constant. This reduces the quartic speedup in \cite{schmidhuber2025quartic} to quadratic in the parameter regime of large constant $p$. (A superpolynomial quantum
space advantage remains for all $p$.) Because the regime of practical interest for $p$XORSAT, Tensor PCA, and hypergraph community detection is small constant $p$ (say, $p=4$)~\cite{tagkopoulos2008predictive,moore2009epistasis,ramadan2004hypergraph}, an analogous result for our work, if possible, would not dequantize our speedup in the natural parameter setting. Nonetheless, it remains open if further classical techniques exist to improve the $p$ dependence of
\cite{gupta2025classical}, or if the non-spectral approach of \cite{gupta2025classical} can be quantized to reestablish a quartic
quantum speedup for all $p$. 

\begin{remark}
\label{remark:remove_log}
For Tensor PCA, a recent refined analysis of the spectral Kikuchi algorithm  has reduced its computational cost (for any given SNR) by a superpolynomial factor~\cite{kothari2025smoothtradeofftensorpca}. This improvement likely extends to the spectral Kikuchi algorithm for $p$XORSAT and community detection, which would in particular imply that the spectral Kikuchi algorithm \cite{WAM19} outperforms the non-spectral algorithm in \cite{gupta2025classical} by a superpolynomial margin -- at least under the current analysis of \cite{gupta2025classical}. Since the quantum Kikuchi algorithm in \cite{schmidhuber2025quartic} is unaffected by the improved analysis, this would re-establish a quartic quantum speedup throughout. It is therefore an interesting direction for further research to see whether the analysis of \cite{gupta2025classical} can be similarly improved.
\end{remark}

\section{Technical summary}
\label{sec:technical_summary}
Our main results apply to the following model for communities on a hypergraph.

\begin{definition}[$p$-marginal HSBM with $k$ communities]\label{def:pbernsbm}
Let $p, k, n \geq 2$ and let $a\in [k]^p$. Define
\begin{align}
    f(a_1, \dots, a_p) = \sum_{i=1}^k \prod_{j=1}^p \lr{1_{a_j=i} - \frac{1}{k}}, \qquad 0 < \theta_0 < 1/2, \qquad 0 < \epsilon < \theta_0,
\end{align}
where $\theta_0, \epsilon$ may scale nontrivially with $n$.
For each $a \in [k]^p$, define
\begin{align}
    \mu_a = \bern(\theta_0 + \epsilon f(a)), \qquad \mu_\mathrm{avg} = \bern(\theta_0).
\end{align}
The $p$-marginal HSBM$(n,k,\theta_0,\eps)$ then consists of the following two probability measures over hypergraphs $\bY$ with vertex set $V = [n]$ and hyperedge sets $E$ defined as follows.
\begin{enumerate}
    \item Under $\QQ$, for each $S \in \binom{[n]}{p}$ draw hyperedges $Y_S \sim \mu_\mathrm{avg}$ independently and add $S$ to $E$ if $Y_S = 1$.
    \item Under $\PP$, first fix labels $x \sim \unif([k]^n)$. Then for each $S = \{s_1 < \cdots < s_p\} \in \binom{[n]}{p}$, draw $Y_S \sim \mu_{x_{s_1},\dots,x_{s_p}}$ independently and add $S$ to $E$ if $Y_S = 1$.
\end{enumerate}
\end{definition}

Although we show a rigorous end-to-end analysis of a quartic speedup only for this model, we conjecture that our algorithm produces a quartic speedup for more general SBM-like models. One type of generalized SBM (GSBM) that we study later (\Cref{def:pberngsbm}) replaces the function $f$ above by any $f:[k]^p\to \R$ such that, for any $p_* \geq 3$,
\begin{align}
    \E_{a \sim [k]^p}[f(a)] = 0, \qquad \E_{a_{r+1},\dots,r_p\in[k]}\left[f(a_1,\dots,a_p)\,|\,a_1,\dots,a_r\right] = 0 \,\forall\, r < p_*, \, a_1,\dots,a_r \in [k].
\end{align}

Even more generally, we expect that our quantum algorithm produces a quartic speedup for any GSBM with a \emph{marginal order} of $p_* \geq 3$, which we introduce in \Cref{def:morder}. These models can be shown to have a statistical-computational gap with precisely the same SNR-computation tradeoff as a $p_*$-order Tensor PCA problem (see \Cref{thm:kun} due to~\cite{kunisky2024low}). For any of these models, the Kikuchi algorithm takes a fairly generic form and can be readily applied. While we show its correctness only for specific models, we anticipate that more generic proofs can address these more general models.

The $p$-marginal HSBM approaches the $2$-marginal HSBM as the number of communities increases: in Lemma~\ref{lem:fprop}, we show that
\begin{align}
    f(a) \to 1_{a_1 = \cdots = a_{p_*}} \quad \mathrm{as} \quad k\to\infty,
\end{align}
i.e., the model assigns hyperedges within communities with probability $\theta_0 + \epsilon$ and between communities with probability $\theta_0$. Crucially, unlike the standard definition of HSBM, our model captures higher-order interactions of the underlying distribution. In many real-world contexts~\cite{momeni2017lotka,sanchez2019high,kim2006relating,kiel2011structural}, pairwise marginals are known to be insufficient to capture community structure.

The task of community detection that our algorithm solves is to distinguish samples from the planted distribution $\PP$ from samples from the null distribution $\QQ$. This is formalized as a strong detection task as follows.

\begin{problem}[Hypergraph community detection]
\label{prob}
    Let $p, k, n \geq 2$ and let $\QQ$ and $\PP$ be specified by a $p$-marginal HSBM. An algorithm that takes as input a degree-$p$ hypergraph $Y$ on $n$ vertices and outputs a bit $r(Y) \in \{0,1\}$ is said to solve the hypergraph community detection problem if 
    \begin{equation} 
    \Pr_{Y \sim \QQ}\left[ r(Y) = 1 \right] = 1- o(1) 
    \quad \text{ and } \quad 
    \Pr_{Y \sim \PP}\left[ r(Y) = 0\right] = 1 - o(1).
    \end{equation}
\end{problem}

For the task of hypergraph community detection, we show lower bounds that rule out \emph{low coordinate degree functions} (LCDF), which include low-degree polynomials. This is generically the best evidence one can hope for in average-case hardness.
\begin{theorem}[Lower bound on hypergraph community detection, informal]
    No function with coordinate degree $\ell$ can solve hypergraph community detection on a $p$-marginal HSBM with $p > 2$ if
    \begin{align}
        \mathrm{SNR} := \frac{\epsilon}{\sqrt{\theta_0(1-\theta_0)}} \lesssim \ell^{1/2-p/4}n^{-p/4}.
    \end{align}
\end{theorem}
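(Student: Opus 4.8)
The plan is to run the low-coordinate-degree method: bound, in $L^2(\QQ)$, the norm of the projection $L^{\le\ell}$ of the likelihood ratio $L=d\PP/d\QQ$ onto functions of coordinate degree at most $\ell$, and then invoke the general transference principle (in the LCDF framework of~\cite{kunisky2024low}) that $\|L^{\le\ell}\|_{L^2(\QQ)}^2=O(1)$ precludes strong detection by any coordinate-degree-$\ell$ function. One shortcut is to observe that the statement is essentially an instance of \Cref{thm:kun} once one identifies the $p$-marginal HSBM as a $p$-marginal GSBM of marginal order $p_\ast=p$ and uses the known $p_\ast$-order Tensor PCA tradeoff; nonetheless I would give the direct, self-contained estimate below, which is what underlies that reduction.

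\textbf{The computation.} First expand $L$ in the product Fourier basis of the $\binom{n}{p}$ Bernoulli observations. Writing $\mu_a(Y_S)/\mu_\mathrm{avg}(Y_S)=1+\epsilon f(x_S)\phi(Y_S)$, where $\phi$ is the centered Bernoulli character normalized so that $\E_{\mu_\mathrm{avg}}[\phi]=0$ and $\E_{\mu_\mathrm{avg}}[\phi^2]=(\theta_0(1-\theta_0))^{-1}$, and averaging over the planted labels gives $L(Y)=\sum_{T\subseteq\binom{[n]}{p}}\epsilon^{|T|}\big(\E_{x}\prod_{S\in T}f(x_S)\big)\prod_{S\in T}\phi(Y_S)$. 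Since the monomials $\prod_{S\in T}\phi(Y_S)$ are orthogonal in $L^2(\QQ)$ with squared norm $(\theta_0(1-\theta_0))^{-|T|}$ and coordinate degree $|T|$, the truncation is the finite sum
\begin{align}
\|L^{\le\ell}\|_{L^2(\QQ)}^2=\sum_{T\subseteq\binom{[n]}{p},\,|T|\le\ell}\mathrm{SNR}^{2|T|}\Big(\E_{x\sim\unif([k]^n)}\prod_{S\in T}f(x_S)\Big)^2 .
\end{align}
The key combinatorial input is that $f$ has vanishing lower-order marginals, i.e.\ $\E[f(a)\cond a_1,\dots,a_r]=0$ for all $r<p$ (a two-line check; equivalently the model has marginal order $p_\ast=p>2$). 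Viewing $T$ as a $p$-uniform hypergraph on its support $V_T$: if some vertex lies in exactly one hyperedge of $T$, conditioning on all other labels leaves a single free coordinate inside one $f$-factor, so the whole expectation vanishes; hence only those $T$ in which every vertex of $V_T$ has degree $\ge2$ contribute, forcing $|V_T|\le\frac p2|T|$, and the term factorizes over connected components with $|\E_x\prod f|\le\|f\|_\infty^{|T|}$ and $\|f\|_\infty=O(1)$. Counting: the number of contributing $T$ with $|T|=m$ is at most $\sum_{v\le pm/2}\binom{n}{v}\binom{\binom{v}{p}}{m}$, which after bounding $\binom{n}{v}\le n^v/v!$, $\binom{\binom{v}{p}}{m}\le\binom{v}{p}^m/m!$, $\binom{v}{p}\le v^p/p!$, and applying Stirling to $v!$ and $m!$ (using $v\le pm/2$), is at most $n^{pm/2}\,D^m\,m^{(p-2)m/2}$ for a $p$-dependent constant $D$. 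Therefore
\begin{align}
\|L^{\le\ell}\|_{L^2(\QQ)}^2\;\le\;\sum_{m\le\ell}\big(D\,\mathrm{SNR}^{2}\,n^{p/2}\,m^{(p-2)/2}\big)^m ,
\end{align}
and since the base is increasing in $m$ for $p>2$, this is $O(1)$ as soon as it is below $\tfrac12$ at $m=\ell$, i.e.\ $\mathrm{SNR}^2\lesssim n^{-p/2}\ell^{-(p-2)/2}$, i.e.\ $\mathrm{SNR}\lesssim\ell^{1/2-p/4}n^{-p/4}$, which is the claimed threshold.

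\textbf{Main obstacle.} The delicate point is the counting step: the exponent of $n$ in the threshold is pinned down by the extremal configurations with $|V_T|=\tfrac p2|T|$, so the degree-$\ge2$ constraint must be used sharply, without losing polynomial-in-$n$ slack; in particular the $m^{(p-2)m/2}$ growth — which controls the entire $\ell$-dependence — comes precisely from the $\binom{v}{p}^m$ choices of hyperedges at $v=pm/2$, and getting this exponent right (rather than, say, $m^{pm/2}$ or $m^{cpm}$) is exactly what makes the bound tight against \Cref{thm:kun}. The hypothesis $p>2$ is used essentially: for $p=2$ the degree-$\ge2$ argument only yields $|V_T|\le|T|$, the base above becomes constant in $m$, and the sum converges only for $\mathrm{SNR}\lesssim n^{-1/2}$ independently of $\ell$ — consistent with the known absence of a hard phase for the $2$-marginal model. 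Finally, the passage from the norm bound to the theorem as stated in \Cref{prob} (a Boolean test $r(Y)$) is routine but should be spelled out: strong detection implies strong separation, $\E_{\PP}r-\E_{\QQ}r=\langle L^{\le\ell}-1,\,r\rangle\le\|L^{\le\ell}-1\|_{L^2(\QQ)}\sqrt{\mathrm{Var}_{\QQ}r}$ for any coordinate-degree-$\ell$ function $r$, and $\|L^{\le\ell}-1\|^2=\|L^{\le\ell}\|^2-1=O(1)$; the full LCDF formalization (also covering thresholded low-degree functions) is provided by~\cite{kunisky2024low}.
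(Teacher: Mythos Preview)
Your argument is correct and essentially self-contained; the paper takes a different, more modular route. The paper does \emph{not} carry out the direct low-degree likelihood ratio computation you perform. Instead it (i) verifies that the $p$-marginal HSBM is a regular, symmetric GSBM in the sense of \Cref{def:gsbm} and that it has marginal order exactly $p_\ast=p$ (\Cref{lem:pberngsbm,lem:fprop}), (ii) computes the characteristic tensor $T_{a,b}\propto\beta^2 f(a)f(b)$ and shows its injective norm is $\Theta_{k,p}(\beta^2)$, and then (iii) invokes \Cref{thm:kun} (Kunisky's black-box tradeoff) to obtain $\beta\lesssim n^{-p/4}D^{1/2-p/4}$. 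So the paper's proof is two structural lemmas plus a citation, whereas yours unrolls what is essentially a special case of the cited result: the Fourier expansion of $L$, the degree-$\ge 2$ constraint on the support hypergraph coming from $p$-vanishing marginals, and the tight count $\#\{T:|T|=m,\ \deg\ge 2\}\le D^m n^{pm/2}m^{(p-2)m/2}$.

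What your approach buys is a fully elementary argument with no appeal to the general GSBM machinery, and it makes the source of the $\ell^{1/2-p/4}$ exponent transparent (the $v!$ in Stirling at $v=pm/2$ is exactly what downgrades the naive $m^{(p-1)m}$ to $m^{(p-2)m/2}$). What the paper's approach buys is generality: once the marginal-order and injective-norm checks are done, the same one-line reduction proves the analogous lower bound for \emph{any} $p_\ast$-marginal Bernoulli GSBM (\Cref{def:pberngsbm}), not only the specific $f$ of \Cref{def:pbernsbm}. Both routes arrive at the same conclusion phrased as impossibility of strong \emph{separation} by cdeg-$\ell$ functions; your last paragraph correctly flags that passing from this to impossibility of strong \emph{detection} (as in \Cref{prob}) is the standard low-degree heuristic step, which the paper also does not prove from scratch.
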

A more formal statement is available in \Cref{thm:ldm_lowerbounds}; we introduce LCDF and prove this lower bound in \Cref{sec:low}.

We show a spectral classical algorithm based on the Kikuchi method that matches this lower bound (up to a logarithmic factor). Briefly, the algorithm works by constructing a matrix (the Kikuchi matrix) whose spectral norm either exceeds a threshold value if a hypergraph is sampled from $\PP$ or falls below that value if the hypergraph is sampled from $\QQ$.
\begin{theorem}[Classical upper bound on hypergraph community detection, informal]
    The $\ell$th order Kikuchi method solves hypergraph community detection for a $p$-marginal HSBM with even $p > 2$ in time $O(n^\ell)$ and space $O(n^\ell)$ if
    \begin{align}
        \mathrm{SNR} := \frac{\epsilon}{\sqrt{\theta_0(1-\theta_0)}} \gtrsim \ell^{1/2-p/4}n^{-p/4} \sqrt{\log n}.
    \end{align}
\end{theorem}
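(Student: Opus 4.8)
The plan is a spectral distinguisher built from the level-$\ell$ Kikuchi matrix. Normalize the hyperedge variables by $W_S := (Y_S-\theta_0)/\sqrt{\theta_0(1-\theta_0)}$ for $S\in\binom{[n]}{p}$, so that under $\QQ$ the $W_S$ are i.i.d.\ with mean $0$ and variance $1$, while under $\PP$ one has $\mathbf{E}[W_S\mid x]=\mathrm{SNR}\cdot f(x_S)$ (using that $f$ is symmetric). Let $\cK=\cK^{(\ell)}$ be indexed by $\binom{[n]}{\ell}$, with $\cK_{S,T}=W_{S\triangle T}$ when $|S\triangle T|=p$ and $\cK_{S,T}=0$ otherwise; this is well-defined precisely because $p$ is even, each row has $D:=\binom{\ell}{p/2}\binom{n-\ell}{p/2}=\Theta_p(\ell^{p/2}n^{p/2})$ nonzeros, and the dimension is $N:=\binom{n}{\ell}=n^{\ell+O(1)}$. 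The algorithm runs $O(\log N)$ power iterations on $\cK$ from a random start and outputs the bit $1$ iff the resulting Rayleigh quotient is at most a threshold $\tau=\Theta\bigl(\sqrt{D\,\ell\log n}\,\bigr)$; this takes time and space $n^{\ell+O(1)}$, i.e.\ essentially $O(n^\ell)$. Correctness reduces to showing $\|\cK\|\le\tau$ w.h.p.\ under $\QQ$ and $\|\cK\|\ge 2\tau$ w.h.p.\ under $\PP$.

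\emph{Upper bound under $\QQ$.} The entries of $\cK$ are not independent, since $W_U$ recurs in every $\cK_{S,T}$ with $S\triangle T=U$; so I would use the Kikuchi decomposition $\cK=\sum_{U\in\binom{[n]}{p}}W_U\,M^{(U)}$, where $M^{(U)}$ is the fixed symmetric $0/1$ matrix $M^{(U)}_{S,T}=\mathbf{1}[S\triangle T=U]$. Two combinatorial facts make matrix Bernstein applicable: each $M^{(U)}$ is a partial perfect matching, so $\|M^{(U)}\|=1$; and $(M^{(U)})^2=\diag(\mathbf{1}[|S\cap U|=p/2])$, whence $\sum_U(M^{(U)})^2=D\cdot I$. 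Under $\QQ$ the $W_U$ are independent, centered, with $\mathbf{E}[W_U^2]=1$ and $|W_U|\le L:=\max\{\sqrt{\theta_0/(1-\theta_0)},\sqrt{(1-\theta_0)/\theta_0}\}$, so matrix Bernstein on $\sum_U W_U M^{(U)}$ with matrix-variance parameter $\bigl\|\sum_U(M^{(U)})^2\bigr\|=D$ gives $\|\cK\|\lesssim\sqrt{D\log N}+L\log N$ with probability $1-o(1)$. The second term is lower order in the regime of interest, and with $\log N\asymp\ell\log n$ and $D\asymp_p\ell^{p/2}n^{p/2}$ this is $\|\cK\|\lesssim_p\ell^{(p+2)/4}n^{p/4}\sqrt{\log n}\lesssim\tau$.

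\emph{Lower bound under $\PP$.} Condition on the labels $x$ and write $\cK_\PP=\Gamma+\cK_0$, with $\Gamma:=\mathbf{E}[\cK_\PP\mid x]$ (entries $\mathrm{SNR}\cdot f(x_{S\triangle T})$ on the support) and $\cK_0$ centered given $x$. Take the community-correlated test vector $w=w(x)$ with $w_S:=\prod_{s\in S}g(x_s)$ and $g(c):=\mathbf{1}[c=1]-1/k$; one community suffices by relabeling symmetry, since $f=\sum_i\prod_j g_i$ with $g_i(c):=\mathbf{1}[c=i]-1/k$. Decomposing each pair with $|S\triangle T|=p$ as $S=A\sqcup B$, $T=A\sqcup C$, $U:=B\sqcup C$, $|B|=|C|=p/2$, we get $w_Sw_T=\prod_{a\in A}g(x_a)^2\prod_{u\in U}g(x_u)$, hence
\[
\frac{w^\top\Gamma w}{\|w\|^2}=\mathrm{SNR}\binom{p}{p/2}\frac{1}{e_\ell(\{g(x_s)^2\})}\sum_{U\in\binom{[n]}{p}}e_{\ell-p/2}\bigl(\{g(x_s)^2:s\notin U\}\bigr)\Bigl(\prod_{u\in U}g(x_u)\Bigr)f(x_U),
\]
where $e_m$ is the elementary symmetric polynomial. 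Passing from subsets to tuples, $\sum_{U}\prod_{u\in U}g(x_u)f(x_U)=\tfrac1{p!}\sum_{i=1}^k\langle g,g_i\rangle_x^{\,p}+O(n^{p-1})$ with $\langle g,g_i\rangle_x:=\sum_s g(x_s)g_i(x_s)$, and since community sizes concentrate at $n/k$ we have w.h.p.\ $\langle g,g_1\rangle_x=(1+o(1))\tfrac nk(1-\tfrac1k)$ and $\langle g,g_i\rangle_x=-(1+o(1))\tfrac n{k^2}$ for $i\ne1$, so (using $p$ even) $\sum_i\langle g,g_i\rangle_x^{\,p}=\Theta_{p,k}(n^p)>0$. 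Combining with the Maclaurin-type estimate $e_m(\{g(x_s)^2\})\asymp\binom nm(\bar g^2)^m$, $\bar g^2=\tfrac1k(1-\tfrac1k)$, which gives $e_{\ell-p/2}/e_\ell\asymp(\ell/n)^{p/2}(\bar g^2)^{-p/2}$, we obtain $w^\top\Gamma w/\|w\|^2\gtrsim_{p,k}\mathrm{SNR}\cdot\ell^{p/2}n^{p/2}\asymp_{p,k}\mathrm{SNR}\cdot D$. Meanwhile $\cK_0$ has independent centered entries of variance $\approx1$ given $x$, so the previous paragraph gives $\|\cK_0\|\lesssim\sqrt{D\,\ell\log n}$ w.h.p., and therefore
\[
\|\cK_\PP\|\ \ge\ \frac{w^\top\cK_\PP w}{\|w\|^2}\ \ge\ c_{p,k}\,\mathrm{SNR}\cdot D-C\sqrt{D\,\ell\log n},
\]
which is at least $2\tau$ as soon as $\mathrm{SNR}\gtrsim_{p,k}\sqrt{\ell\log n/D}\asymp\ell^{1/2-p/4}n^{-p/4}\sqrt{\log n}$ --- exactly the hypothesis. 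Since a random start has inverse-polynomial overlap with the leading eigenvector of $\cK_\PP$ (whose top eigenvalue is well-separated from the bulk here), $O(\log N)$ power iterations certify that the top eigenvalue exceeds $\tau$, so the test succeeds with probability $1-o(1)$ under both $\QQ$ and $\PP$.

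\emph{Main obstacle.} I expect the delicate part to be the planted-side bookkeeping: controlling the subset-to-tuple ``repeated-index'' corrections, estimating the ratio $e_{\ell-p/2}/e_\ell$ (which needs $\ell=o(n)$ and a quantitative form of Maclaurin's inequality, since $g(x_s)^2$ is two-valued rather than constant), and certifying $\sum_i\langle g,g_i\rangle_x^{\,p}=\Omega_{p,k}(n^p)>0$ for typical $x$ --- the step at which evenness of $p$ and $k\ge2$ genuinely enter. The other, structural obstacle common to all Kikuchi analyses is the dependence among the entries of $\cK$: the decomposition $\sum_U W_U M^{(U)}$ with its matching structure is what unlocks matrix Bernstein, and one must additionally track how $L$ and the variance estimates behave when $\theta_0,\epsilon$ scale nontrivially with $n$. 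The $\sqrt{\log n}$ gap to the matching lower bound is precisely the price of the matrix-Bernstein tail; as discussed in \Cref{remark:remove_log}, a sharper trace-moment bound on $\|\cK\|$ under $\QQ$ should remove it.
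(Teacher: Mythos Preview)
Your null-case analysis via matrix Bernstein on the decomposition $\cK=\sum_U W_U M^{(U)}$ is essentially the paper's argument (\Cref{lem:bosonic_nullknorm}), modulo the cosmetic choice of the set-indexed Kikuchi matrix versus the paper's bosonic tuple-indexed variant.

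The planted case is where you genuinely diverge. The paper's certificate (\Cref{def:certificate}) has entries $v_{(S_1,\dots,S_\lambda)}\propto\prod_i f(x_{S_i})$, which it explicitly flags as \emph{lacking} product structure over the $\ell$ coordinates; this forces a block-by-block combinatorial analysis tracking how disagreeing blocks interact with the Kikuchi edge (\Cref{lem:in_kikmean,lem:kmean,lem:kvar}), followed by a second-moment concentration argument on $\bra v\cK\ket v$. You instead exploit the identity $f=\sum_i\prod_j g_i$ to pick a single-community \emph{product} witness $w_S=\prod_{s\in S}g_1(x_s)$, thereby recovering a Tensor-PCA-style certificate that the paper claims is unavailable. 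This is a real simplification for the classical statement: your Rayleigh quotient reduces to elementary symmetric polynomials in $\{g_1(x_s)^2\}$ and the single scalar $\sum_i\langle g_1,g_i\rangle_x^{\,p}$ (which indeed equals $n^p k\mu\,(1+o(1))$ w.h.p., so the constant matches), sidestepping the paper's overlap casework entirely. The costs are the step you correctly flag as delicate---controlling $e_{\ell-p/2}^{(\setminus U)}/e_\ell$ uniformly in $U$ for a two-valued sequence, where the sign cancellations in $\sum_U\cdots$ mean a crude upper bound on the ratio does not suffice---and a $k$-dependent loss in the leading constant from projecting onto one community. The paper's $f$-based certificate is not gratuitous, however: it is exactly the state the quantum guiding vector (\Cref{def:guide_vector}) is designed to approximate, so their classical argument is doing double duty as the foundation for the quantum overlap analysis in \Cref{sec:qalg}. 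Your $w$ would not serve that purpose without reworking the guiding state.
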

See \Cref{thm:kikuchi_bosonic} for a formal statement; we prove this theorem in \Cref{sec:classical_support}. The $\sqrt{\log n}$ factor, which also appears in previous algorithms using the Kikuchi method~\cite{WAM19,hastings2020classical, schmidhuber2025quartic}, is believed to be loose; indeed, it has already been removed for Tensor PCA \cite{kothari2025smoothtradeofftensorpca} and we expect it can be removed for $p$XORSAT and community detection as well, cf. \Cref{remark:remove_log}. Similarly, the constraint that $p$ is even is likely unnecessary~\cite{WAM19}.

We then show a quantum algorithm based on the guided Hamiltonian problem: We construct a guiding state that can be efficiently prepared and has improved overlap with the leading eigenspace of the Kikuchi matrix. Quantum Phase Estimation then prepares the eigenstate, and measurement of the corresponding eigenvalue can be used to certify its spectral norm and decide if a hypergraph was sampled from the planted (i.e., clustered) or null model. We show that this quantized Kikuchi method achieves a quartic speedup over the classical Kikuchi method, as summarized informally below. 
\begin{theorem}[Quantum upper bound on hypergraph community detection, informal]
    Given an $\ell$th order Kikuchi method that solves hypergraph community detection for a $p$-marginal HSBM instance with even $p > 2$ and any constant number of communities $k \geq 2$, there is an explicit quantum algorithm that solves the same problem on the same HSBM in time $n^{\ell/4}\cdot \Tilde O(n^p)$ up to negligible factors, using $\Tilde O(\ell \log(n))$ qubits and $\Tilde O(\ell n^{p})$ classical space.
\end{theorem}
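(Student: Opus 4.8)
The plan is to cast the quantum algorithm as an instance of the guided sparse Hamiltonian problem, following \cite{hastings2020classical,schmidhuber2025quartic}. There are three ingredients: (i) an efficient block-encoding of the level-$\ell$ (bosonic) Kikuchi Hamiltonian $\cK=\cK_\ell$ built from $\bY$; (ii) the spectral separation supplied by the classical analysis; and (iii) a data-dependent guiding state $|\phi\rangle$ with $\Omega(n^{-\ell/4})$ overlap (up to logarithmic factors) with the top of the spectrum of $\cK$ under $\PP$. For (i): $\cK$ acts on the span of size-$\ell$ multisets of $[n]$, which embeds into $\lceil\log_2\binom{n+\ell-1}{\ell}\rceil+\Tilde O(1)=\Tilde O(\ell\log n)$ qubits; each row of $\cK$ has $\mathrm{poly}(\ell)\,n^{p/2}$ nonzero entries, and the sparse-access oracle returning the $j$th neighbour of a multiset $S$ together with the corresponding entry $\propto Y_{S\triangle T}$ costs $\Tilde O(\mathrm{poly}(\ell,\log n))$ gates given random access to the $\Tilde O(n^p)$-bit description of $\bY$. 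Normalizing $\cK$ by the known quantity $2\lambda_{\mathrm{yes}}$ (so $\|\cK\|\le 1$), standard techniques give a block-encoding with $\Tilde O(1)$ ancillas and $\Tilde O(\mathrm{poly}(\ell,\log n))$ cost per use.

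For (ii) I would import \Cref{thm:kikuchi_bosonic}: in the assumed regime there are thresholds $\lambda_{\mathrm{no}}<\lambda_{\mathrm{yes}}$ with $\lambda_{\mathrm{yes}}/\lambda_{\mathrm{no}}$ bounded by a polylogarithmic factor (the $\sqrt{\log n}$ slack) such that $\|\cK\|\le\lambda_{\mathrm{no}}$ w.h.p.\ under $\QQ$, while under $\PP$ there is an eigenvalue $\ge\lambda_{\mathrm{yes}}$ with a spectral gap below it down to $\lambda_{\mathrm{no}}$. Set $\lambda^\star=\sqrt{\lambda_{\mathrm{no}}\lambda_{\mathrm{yes}}}$, so after normalization both $\lambda^\star$ and the gap above it are $\Omega(1/\plog{n})$. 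The algorithm runs quantum phase estimation on $|\phi\rangle$ to precision $\delta=\Omega(\lambda^\star/\plog{n})$ (with $\Tilde O(\ell\log n)$ repetitions to suppress QPE tails), marks the branch where the estimated eigenvalue exceeds $\lambda^\star$, and amplitude-estimates the marked mass to additive accuracy $\tfrac12 n^{-\ell/2}$. Under $\QQ$ the marked mass is $o(n^{-\ell/2})$ (nothing clears $\lambda^\star$), so the algorithm outputs $1$; under $\PP$, ingredient (iii) forces the marked mass to be $\gtrsim n^{-\ell/2}/\plog{n}$, so it outputs $0$. Amplitude estimation uses $\Tilde O(n^{\ell/4})$ rounds, each invoking the block-encoding $\Tilde O(1/\delta)=\Tilde O(1)$ times and (re)preparing $|\phi\rangle$ once at cost $\Tilde O(\ell n^p)$; this yields $n^{\ell/4}\cdot\Tilde O(n^p)$ time, $\Tilde O(\ell\log n)$ qubits and $\Tilde O(\ell n^p)$ classical space.

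The heart of the argument --- and the step I expect to be the main obstacle --- is (iii): $\langle\phi|\Pi_{\ge\lambda^\star}|\phi\rangle\gtrsim n^{-\ell/2}/\plog{n}$ w.h.p.\ over $\bY\sim\PP$. I would split $\cK=\cK^{\mathrm{sig}}+\cK^{\mathrm{noise}}$ with $\cK^{\mathrm{sig}}$ carrying the planted means $\E[Y_{S\triangle T}-\theta_0]=\epsilon\sum_{i=1}^k\prod_{v\in S\triangle T}(1_{x_v=i}-\tfrac1k)$; since the model has marginal order $p$ (\Cref{def:morder}), on the support of $\cK$ this piece is built from the $k$ structured vectors $w^{(i)}_S\propto\prod_{v\in S}(1_{x_v=i}-\tfrac1k)$ --- exactly rank $k$ when $k=2$, with the $w^{(i)}$ as leading approximate eigenvectors in general --- each of which has Rayleigh quotient $\ge\lambda_{\mathrm{yes}}$ against $\cK$ precisely in the regime of \Cref{thm:kikuchi_bosonic}, hence carries $\Omega(1/\plog{n})$ of its mass inside $\Pi_{\ge\lambda^\star}$. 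Adapting the analysis of \cite{schmidhuber2025quartic} to $k$ communities, I would show the data-dependent $|\phi\rangle$ has $|\langle\phi,\hat w^{(i)}\rangle|\ge n^{-\ell/4}/\plog{n}$ for some $i$ and then transfer this to $\langle\phi|\Pi_{\ge\lambda^\star}|\phi\rangle$. The point is that $n^{-\ell/4}$ is quadratically better than the $n^{-\ell/2}$ overlap of the uniform superposition, and this is exactly what upgrades a $\sqrt N$-type (quadratic) speedup to an $N^{1/4}$-type (quartic) one with $N=\binom n\ell$.

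Two things make (iii) delicate. First, the transfer to $\Pi_{\ge\lambda^\star}$ cannot use a naive Davis--Kahan bound: because $\|\cK^{\mathrm{noise}}\|$ and the gap are comparable, the planted eigenspace of the \emph{noisy} $\cK$ is only a constant- (or $\plog{n}$-) angle rotation of $\mathrm{span}\{w^{(i)}\}$, which is useless against a claimed overlap of $n^{-\ell/4}$; as in the single-community case one must instead control $\|\Pi_{\ge\lambda^\star}|\phi\rangle\|$ directly --- e.g.\ by a resolvent/contour or truncated-moment argument exploiting the explicit form of $|\phi\rangle$ --- while tracking leading constants so that $\lambda^\star$ provably separates the two spectra with $\plog{n}$ margin across \emph{all} admissible $\ell$ in the hard phase. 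Second, the genuinely new work over \cite{hastings2020classical,schmidhuber2025quartic} is the passage from a one-dimensional signal space to the $k$-dimensional $\mathrm{span}\{w^{(i)}\}$, whose generators are neither orthogonal nor equinormal: one must show the Gram matrix is $\Theta(1)\binom n\ell(k^{-1}(1-k^{-1}))^\ell$ on the diagonal and $o(\cdot)$ off it (a concentration estimate for elementary-symmetric-type functions of $x\sim\unif([k]^n)$), bound the cross overlaps $\langle\phi,w^{(i)}\rangle$, and combine the $k$ per-community overlaps by a union bound over $i\in[k]$ (equivalently, by inverting the $k\times k$ Gram matrix) --- clean only for $k=O(1)$, which is the source of that hypothesis. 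The parity constraint ($p$ even) and the $\sqrt{\log n}$ loss are inherited unchanged from \Cref{thm:kikuchi_bosonic}.
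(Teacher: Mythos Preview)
Your high-level architecture (block-encoding of the sparse Kikuchi Hamiltonian, importing the spectral separation of \Cref{thm:kikuchi_bosonic}, QPE on a data-dependent guide, amplitude amplification) matches the paper, and you correctly flag the transfer from overlap-with-certificate to overlap-with-$\Pi_{\ge\lambda^\star}$ as the real obstacle. However, your proposed route through step (iii) diverges from the paper's and, as written, has a gap.

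First, the paper does \emph{not} work with a $k$-dimensional signal space spanned by the $w^{(i)}$. It uses a single certificate vector $\ket{v}=\Pi_\lambda\ket{v'}^{\otimes \ell/p}$ with $\ket{v'}_S\propto f(x_S)=\sum_i\prod_{j}(1_{x_{S_j}=i}-1/k)$ on $p$-sets; all the $k$-dependence is absorbed into scalar moments such as $\mu=\E_x[f(x_S)^2]$. No Gram-matrix inversion over $i\in[k]$ is needed, and the constant-$k$ hypothesis enters only through $C_{k,p}$ and $\mu$. Your claim that $\cK^{\mathrm{sig}}$ is (approximately) rank $k$ with the $w^{(i)}$ as eigenvectors is also not quite right: on agreeing coordinates one picks up $(\chi^{(i)}_v)^2\neq 1$, so the signal does not factor this way even for $k=2$.

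Second, and more importantly, your proposed resolvent/contour or truncated-moment argument does not address the real difficulty: the guiding state and $\cK$ are built from the \emph{same} hyperedge randomness $Y$, so one cannot treat $\ket{\phi}$ as a fixed or independent test vector in any spectral-perturbation estimate. The paper's fix is \emph{sample splitting}: partition $\binom{[n]}{p}$ uniformly into $L=\lceil\log n\rceil$ batches, build the guiding state $\ket{\tilde u}$ from one batch and the Kikuchi matrix $\cKt$ from the rest. Conditioned on the hidden labels $x$, these are now independent. The transfer then proceeds in two clean steps: (a) a Markov-type argument shows $\ket{v}$ carries $\Omega(1)$ mass on $\Pi_{\ge\tau}(\cKt)$; (b) a ``directional unbias'' lemma (second moment plus Chebyshev under $\PP(x)$) shows that for \emph{any} unit vector $\ket{s}$ independent of the guiding batch, $\braket{s}{\tilde u}\ge\tfrac12(\zeta\beta)^{\ell/p}\braket{s}{v}$ with high probability. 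Taking $\ket{s}=\Pi_{\ge\tau}\ket{v}/\|\Pi_{\ge\tau}\ket{v}\|$ gives the eigenspace overlap directly. Without the splitting trick (or something equivalent that decorrelates $\ket{\phi}$ from $\cK$), step (iii) remains unproven.
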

Compared to the classical cost of $O(n^\ell)$, this algorithm achieves a quartic speedup and a super-polynomial space advantage. We formally state the result in \Cref{thm:kikuchi_q} and prove it in \Cref{sec:qalg}. We also anticipate that, analogous to the Tensor PCA case~\cite{hastings2020classical}, the eigenstate itself encodes the community labels. Performing state tomography and a rounding procedure should thus recover the community labels. 

\begin{remark}[Generality of quantum algorithm]
    Although our proofs for the quartic speedup use the precise form of $f$ given in \Cref{def:pbernsbm}, the quantum algorithm itself does not use any information about $f$. Hence, we expect exactly the same algorithm to apply to any generalized SBM with marginal order $p_* \geq 3$.
\end{remark}

\subsection{Organization of the paper}
The remainder of this paper is organized as follows. We give a detailed description of our classical Kikuchi upper bounds in \Cref{sec:Classical}, and of our low coordinate degree lower bounds in \Cref{sec:lower_bounds_lcdf}. \Cref{sec:Quantum} summarizes our quantum upper algorithm for hypergraph community detection. The corresponding technical proofs for the classical Kikuchi method are presented in \Cref{sec:classical_support} and \Cref{sec:low}, respectively, while the technical proofs related to our quantum algorithm are provided in \Cref{sec:qalg}.

\section{The Kikuchi method for \texorpdfstring{$p$}\ -marginal HSBM}\label{sec:Classical}
The $p$-marginal HSBM we study in this work belongs to the family of higher-order planted inference problems, which generally concern the detection of a signal or planted structure hidden in random noise. The development of efficient algorithms for these problems has seen intensive study for over two decades (see, e.g.,~\cite{Fei02,GK01,GJ02,CGL07,CCF10,AOW15,BM22,DT23,AR01,Sch08,OW14,MW16,KMOW17}), and is typically characterized by a Signal-to-Noise ratio $\mathrm{SNR}$. 

When the SNR is small enough, the problem is statistically impossible, but once the SNR exceeds a certain threshold, the two distributions can be distinguished, although not necessarily in polynomial time. For planted inference problems that exhibit a statistical-computational gap, that is, a regime where the problem is information-theoretically solvable but computationally hard, the Kikuchi hierarchy \cite{WAM19} is generally expected to be the the fastest and simplest algorithm in this regime. However, on a technically rigorous level the algorithmic Kikuchi method was as of now only shown to apply to Tensor PCA and $p$XORSAT~\cite{WAM19}, which are technically similar: $p$XORSAT can be recast as a variant of the spiked tensor problem with sparse Rademacher observations instead of Gaussian ones. Since then, all further applications of the Kikuchi algorithm have been restricted to directly related settings~\cite{guruswami2022algorithms,hsieh2023simple,alrabiah2023near,hsieh2024small}. 

In this work, we establish that the Kikuchi hierarchy extends to other planted inference problems by finding a new application in community detection. We prove matching lower bounds against low coordinate degree functions. Our results suggest that the classical and quantum Kikuchi algorithm is more widely applicable and may encompass more practically relevant applications than previously thought. At a technical level, our proofs require recent advancements in lower bounds~\cite{kunisky2024low} and more involved combinatorial arguments in our upper bounds due to the increased complexity of the problem. We summarize in this section the Kikuchi upper bound and show matching classical lower bounds (up to logarithmic factors) that rule out algorithms including low-degree polynomials.

\subsection{Kikuchi method}
The Kikuchi method, introduced by \cite{WAM19} and independently discovered by \cite{hastings2020classical}, is a general technique for reducing a degree-$p$ optimization problem to a degree-$2$ optimization problem. This is desirable because degree-$2$ problems may be modeled with matrices, allowing linear algebraic methods to be used.

For the $p$-marginal HSBM$(n,k,\theta_0,\eps)$, the relevant SNR is given by the ratio of the bias $\eps$ and the standard deviation $\sqrt{\theta_0(1-\theta_0)}$ of the random edge distribution. We show that the $p$-marginal HSBM$(n,k,\theta_0,\eps)$ exhibits a statistical-computational gap: It is solvable in polynomial time for \begin{equation}
    \frac{\eps}{\sqrt{\theta_0(1-\theta_0)}} \gg n^{-p/4},
\end{equation} and statistically impossible for \begin{equation}
    \frac{\eps}{\sqrt{\theta_0(1-\theta_0)}} \ll n^{(1-p)/2}.
\end{equation}
In between, the best classical algorithm is expected to be the Kikuchi method. \begin{center}
    \textbf{ Kikuchi method}
\end{center}
\noindent\hrulefill \\
\noindent\textbf{Input:} A $p$-marginal HSBM$(n,k,\theta_0,\eps)$ instance $\mathbf{Y}$.  \\
\noindent\textbf{Preprocessing:} Choose a sufficiently large $\ell$ and a threshold $\tau$ in accordance with a \emph{Kikuchi theorem} (see \Cref{thm:kikuchi_bosonic} for an example) at SNR $\beta = \eps / \sqrt{\theta_0(1-\theta_0)}$. That is, the Kikuchi theorem guarantees that if $\bY \sim \QQ$, then whp the spectral norm  $\norm{K_\ell}$ of the $\ell$-th order Kikuchi matrix is less than $\tau$; for $\bY \sim \PP$, it is larger than $\tau$. \\
\textbf{Classical algorithm:}
Construct the $\ell$-th order Kikuchi matrix and estimate its largest eigenvalue, for instance using the Power Method. \\
\textbf{Output:} If the largest eigenvalue is above $\tau$, return ``Planted''. Otherwise, return ``Random''.

\noindent\hrulefill

We describe the setting more formally. Let $\bY$ be a hypergraph drawn from the $p$-marginal HSBM$(n,k,\eps,\theta_0)$. Recall that for any hyperedge $S \in \binom{[n]}{p}$, the indicator function of the hyperedge is denoted $Y_S$. Define \begin{equation}
    A_S = \frac{Y_S - \theta_0}{\sqrt{\theta_0(1-\theta_0)}} \quad \text{ and } \quad \beta = \frac{\eps}{\sqrt{\theta_0(1-\theta_0)}}.
\end{equation} By construction, we have \begin{equation}
   \E_\QQ [A_S] = 0, \qquad \E_\QQ [A^2_S] = 1.
\end{equation} 
The original definition of the Kikuchi matrix due to \cite{WAM19} is as follows: For $S, V \in \binom{[n]}{\ell}$, define the symmetric difference $S\Delta V = S\cup V -S\cap V.$ The $\binom{n}{\ell} \times \binom{n}{\ell}$ Kikuchi matrix is defined entry-wise as
\begin{align}\label{eq:kik}
    \cK^{\mathrm{WAM}}_{S,V} = \begin{cases}
        A_{S \triangle V} & \text{if } |S \triangle V| = p\\
        0 & \mathrm{else}.
    \end{cases}
\end{align}   
We analyze our algorithm in the framework of a slight modification of the Kikuchi matrix, defined on $\ell$-tuples (that keep track of the ordering of indices) instead of $\ell$-sets. This matrix was originally studied by Hastings \cite{hastings2020classical} and achieves the same detection bounds as \cite{WAM19}.  We call this the ``bosonic Kikuchi matrix'', following the description in \cite{hastings2020classical}. The Kikuchi matrix corresponds to a symmetrized version of the bosonic Kikuchi matrix (cf. Appendix B of \cite{schmidhuber2025quartic}); this difference allows us to greatly simplify the analysis of our quantum algorithm. Throughout the following sections, we take $\ell = \lambda p$ for $\lambda \in \mathbb{N}$.
\begin{definition}
    Denote $\calT_n(\ell)$ the set of $\ell$-tuples $S = (i_1,\dots, i_\ell) \in [n]^\ell$ with no repeated entries. This set has cardinality $|\calT_n(\ell)| = n!/(n-\ell)!$.
\end{definition}
\begin{definition}
    Let $S = (i_1,\dots, i_\ell), V = (j_1,\dots, j_\ell) \in \calT_n(\ell)$ and define $$d = 2 \cdot | \{a \in [\ell] : i_a \not = j_a\}| $$ to count the number of disagreements. We denote by $S \ominus V$ the $d$-tuple that first lists the disagreeing entries of $S$ (in increasing index order) followed by those of $V$ (same order). 
\end{definition}
\begin{definition}[Bosonic Kikuchi matrix]
    For even $p \geq 2$, let $T$ be a symmetric $p$-tensor indexed by $(\mu_1, \dots, \mu_p) \in [n]^p$. The $\ell$-th order bosonic Kikuchi matrix $\cK =               \cK(T)$ is defined on $\calT_n(\ell)$ entry-wise by \begin{equation}
        \cK_{S,V} = \begin{cases}
            T_{\mu_1, \dots, \mu_p} \qquad \text{ if } (\mu_1, \dots, \mu_p) = S \ominus V, \\
            0 \qquad \qquad \ \ \ \text{ otherwise.}
        \end{cases}
    \end{equation}
\end{definition}
The dimension of the bosonic Kikuchi matrix differs from the dimension of the standard Kikuchi matrix \cref{eq:kik} only by a factor of $\ell!$, which is negligible compare to the overall dimension $O(n^\ell)$. Throughout this work, we assume our parameters satisfy conditions 
\begin{align}\label{eq:paramconds_sec}
  n^{-p/4} \gtrsim \frac{\epsilon}{\sqrt{\theta_0(1-\theta_0)}} \gtrsim n^{1/2-p/2}, \quad \theta_0 \gtrsim n^{1-p}, \quad \ell = o(\sqrt n), \quad \ell = \omega(1), \quad p \text{ even}
\end{align}
where $\gtrsim$ indicates asymptotic inequalities for sufficiently large $n$. The first condition simply states that the problem is neither in the information-theoretically impossible regime, nor in the computationally trivial regime; combined with the second condition, it also implies that $\epsilon/\theta_0 \leq 1$.
The following theorem, proven in \Cref{sec:classical_support}, establishes that the largest eigenvalue of $\cK$ distinguishes a hypergraph with community structure from one without. 
\begin{theorem}
Consider the $p$-marginal HSBM$(n,k,\theta_0,\eps)$ with $p$ even. Let $\ell \in[p / 2, n-p / 2], k \geq 2$, and $\epsilon, \theta_0$ in accordance with \eqref{eq:paramconds_sec}. Then for all
\begin{equation}
    \beta := \frac{\epsilon}{\sqrt{\theta_0(1-\theta_0)}} \geq \frac{3\sqrt{6}}{C_{k,p}} \ell^{1/2 - p/4} n^{-p/4}\sqrt{\log(n)},
\end{equation} where $C_{k,p}$ is a constant that depends only on $p$ and $k$, the $\ell$-th order bosonic Kikuchi matrix $\cK$ satisfies 
\begin{equation}
\Pr_\QQ\left[\lambda_{\max }(\cK) \geq \frac{2}{3}\tau \right] \vee \Pr_\PP\left[\lambda_{\max }(\cK) \leq \frac{4}{3}\tau \right] = o(1),\end{equation} where $\tau = \frac{1}{2}C_{k,p} \beta n^{p/2}\ell^{p/2}$. Hence, estimating the largest eigenvalue of the Kikuchi matrix solves hypergraph community detection.
\label{thm:kikuchi_bosonic}
\end{theorem}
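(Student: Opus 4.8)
The plan is the standard two-sided Kikuchi argument: decompose $\cK$ into a deterministic ``signal'' part plus an independent mean-zero ``noise'' part, bound the noise spectrally via the trace moment method, and lower-bound the signal by a planted test vector; the new ingredients are the biased Bernoulli edge law and the multi-community function $f$. Write $\cK = \sum_{U \in \binom{[n]}{p}} A_U\, \Gamma_U$, where $(\Gamma_U)_{S,V} = 1$ iff the set of disagreeing coordinates of $S \ominus V$ equals $U$, so each $\Gamma_U$ is a partial permutation with $\|\Gamma_U\| \le 1$ and the $A_U$ are independent over $U$ with $\E_\QQ[A_U] = 0$ and $\E_\PP[A_U \mid x] = \beta f(x_U)$ (directly from $Y_S \sim \bern(\theta_0 + \epsilon f(x_S))$). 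Under $\QQ$ the matrix $\cK$ is itself centered; under $\PP$, conditionally on $x$, write $\cK = M + E$ with $M := \E_\PP[\cK \mid x] = \beta \sum_U f(x_U)\Gamma_U$ deterministic and $E := \cK - M$ centered. In all cases the noise matrix is a sum of independent centered terms with $\E[A_U^2] = \Theta(1)$ (conditionally, $\mathrm{Var}(A_U \mid x) \le 2/(1-\theta_0) = O(1)$ since $\epsilon < \theta_0$), entries bounded by $O(1/\sqrt{\theta_0})$, and higher even moments $\E[A_U^{2m}] \le O_m(\theta_0^{-(m-1)})$.

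Step 1 (spectral bound on the noise, the main obstacle). I would bound $\|\cK\|$ and $\|E\|$ by $\E\,\mathrm{Tr}(\cK^{2t}) = \sum \E[A_{U_1}\cdots A_{U_{2t}}]$ over closed length-$2t$ walks on $\calT_n(\ell)$ whose consecutive vertices differ by the hyperedge $U_i$; taking expectations forces every hyperedge to appear an even number of times. The target estimate is
\[
  \E\,\mathrm{Tr}(\cK^{2t}) \;\le\; |\calT_n(\ell)| \cdot (C_p\, d\, t)^t, \qquad d := \#\{V : |S \ominus V| = p\} = \Theta\big(\tbinom{\ell}{p/2}\, n^{p/2}\big),
\]
the row-degree of the Kikuchi graph. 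Granting this, Markov's inequality with $t = \lceil \log |\calT_n(\ell)| \rceil = \Theta(\ell \log n)$ gives $\Pr[\|\cK\| \ge \sqrt{e^2 C_p d t}\,] \le |\calT_n(\ell)|^{-1} = o(1)$, hence whp $\|\cK\| = O(\sqrt{d\,\ell\log n}) = O(\ell^{p/4+1/2} n^{p/4} \sqrt{\log n})$, and the same conditionally (hence whp) for $\|E\|$. Establishing the displayed trace bound is the technical heart: one must (i) enumerate closed walks on $\ell$-tuples rather than $\ell$-sets, organized by the partition of the $2t$ steps into hyperedge-equivalence classes, and (ii) control walks that reuse a hyperedge many times, where small $\theta_0$ amplifies $\E[A_U^{2m}] \le O(\theta_0^{-(m-1)})$ — showing these are lower order is exactly where the hypotheses $\theta_0 \gtrsim n^{1-p}$ and $\ell = o(\sqrt n)$ are used. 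The surviving $\sqrt{\log n}$ is the usual Kikuchi slack, believed removable (cf. \Cref{remark:remove_log}); the detailed combinatorics mirror \cite{WAM19,hastings2020classical,schmidhuber2025quartic} and are deferred to \Cref{sec:classical_support}.

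Step 2 (signal lower bound). Conditionally on $x$, I would lower-bound $\lambda_{\max}(M)$ by the Rayleigh quotient of the planted vector $v = v(x) \in \R^{\calT_n(\ell)}$, $v_S := \prod_{a=1}^\ell 1_{x_{i_a} = i^*}$ for a fixed community $i^* \in [k]$ — the indicator of $\ell$-tuples lying entirely inside community $i^*$. The key structural fact is that whenever $v_S = v_V = 1$ and $|S \ominus V| = p$, all $p$ coordinates of $S \ominus V$ carry label $i^*$, so $M_{S,V} = \beta f(i^*,\dots,i^*) = \beta f_*$ with $f_* := (1-1/k)^p + (k-1)k^{-p} > 0$; positivity of $f_*$ (hence of the whole quadratic form, with no cancellation) crucially uses that $p$ is even. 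Thus $v^\top M v = \beta f_* \cdot \#\{(S,V): v_S = v_V = 1,\ |S \ominus V| = p\}$ and $\lambda_{\max}(M) \ge v^\top M v / \|v\|^2$. A Chernoff bound makes all community sizes $(1 \pm o(1))n/k$ whp; then $\|v\|^2$ counts ordered $\ell$-tuples of distinct vertices inside community $i^*$, and each such $S$ has $(1 \pm o(1))\tbinom{\ell}{p/2}(n/k)^{p/2}$ partners $V$ with $v_V = 1$ and $|S \ominus V| = p$ (choose the $p/2$ changed positions, then distinct new vertices inside community $i^*$). Dividing, $\lambda_{\max}(M) \ge (1-o(1))\,\beta f_* \tbinom{\ell}{p/2}(n/k)^{p/2} \ge C_{k,p}\,\beta\, n^{p/2}\ell^{p/2}$ whp, where $C_{k,p} \asymp f_*\, k^{-p/2}/(p/2)!$ depends only on $k,p$; concretely one fixes $C_{k,p}$ to be any constant below $f_* k^{-p/2}/(p/2)!$, which is the constant appearing in $\tau$.

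Step 3 (assembly). Set $\tau = \tfrac12 C_{k,p}\beta n^{p/2}\ell^{p/2}$, so at the threshold $\beta = \tfrac{3\sqrt6}{C_{k,p}}\ell^{1/2-p/4}n^{-p/4}\sqrt{\log n}$ one has $\tau = \tfrac{3\sqrt6}{2}\ell^{p/4+1/2}n^{p/4}\sqrt{\log n}$ (the $C_{k,p}$ cancels), and the absolute constant $3\sqrt6$ is chosen large enough to dominate the $p$-dependent trace-method constant from Step 1. Under $\QQ$: by Step 1, whp $\lambda_{\max}(\cK) \le \|\cK\| = O(\ell^{p/4+1/2}n^{p/4}\sqrt{\log n}) \le \tfrac23\tau$. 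Under $\PP$: by Weyl's inequality and Steps 1--2, whp $\lambda_{\max}(\cK) \ge \lambda_{\max}(M) - \|E\| \ge C_{k,p}\beta n^{p/2}\ell^{p/2} - O(\ell^{p/4+1/2}n^{p/4}\sqrt{\log n}) = 2\tau - o(\tau) - \tfrac13\tau \ge \tfrac43\tau$, using $\beta \ge$ threshold to force the noise term below $\tfrac13\tau$. Both bad events have probability $o(1)$, so thresholding $\lambda_{\max}(\cK)$ at $\tau$ solves hypergraph community detection. The one genuinely delicate point, as noted, is the trace estimate of Step 1; everything else reduces to Chernoff bounds, Weyl's inequality, and constant-chasing.
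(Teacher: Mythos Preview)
Your proposal is correct and takes a genuinely different route from the paper. For the noise (Step~1), the paper invokes Matrix Bernstein as a black box rather than the trace method you sketch, arriving at the same $O(\sqrt{d\,\ell\log n})$ bound without the walk combinatorics. The larger divergence is in the signal bound (Step~2): the paper's certificate is the product-structured $\ket{v} = \Pi_\lambda \ket{v'}^{\otimes \lambda}$ with $v'_S = f(x_S)$, and the paper lower-bounds $\bra{v}\cK\ket{v}/\|v\|^2$ directly by a first- and second-moment computation over $\PP$ (the combinatorics of \Cref{lem:in_kikmean,lem:kmean,lem:kvar} on expressions like $\E_x\prod_u f(x_{A_u})f(x_{B_u})f(x_C)$), with no Weyl decomposition. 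Your indicator-of-one-community vector is considerably simpler and, combined with the split $\cK = M + E$ and Weyl, trades that second-moment computation for a second invocation of the noise bound on $\|E\|$ under $\PP(x)$---a clean swap. The paper's more elaborate certificate is not gratuitous, however: it is precisely the state that the quantum guiding vector (with $u'_S = A_S$) approximates, so its analysis is reused wholesale in the proof of \Cref{thm:final_kikuchi_bosonic}; your simpler vector would not serve that purpose. One minor slip: $\|\Gamma_U\| \le (p/2)!$ rather than $1$, since for fixed $S$ and hyperedge set $U$ there are up to $(p/2)!$ orderings of the $V$-half; this is harmless and absorbed in your $C_p$.
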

\begin{remark}[SNR-computation tradeoff]
   For $p > 2$, the detection threshold smoothly interpolates between the computational threshold $\beta = \Tilde \Omega(n^{-p/4})$ at $\ell = O(1)$ and the information-theoretic threshold $\beta = \Tilde\Omega(n^{(1-p)/2})$ at $\ell = O(n)$. 
\end{remark}

\begin{remark}[Cost of Kikuchi algorithm]
    Generically, since $\cK$ is a matrix of dimension $O(n^\ell)$, the Kikuchi method corresponds to a classical algorithm that costs time and space $\tilde O(n^\ell)$. In comparison, our quantum algorithm will use  $\tilde O(\ell \log n)$ qubits and $\tilde O(\ell n^p)$ classical bits, as well as achieving a quartic speedup in time.
\end{remark}

\begin{remark}[Threshold choice]
    The multiplicative factor of $2$ between the random and planted threshold in \Cref{thm:kikuchi_bosonic} is chosen arbitrarily for ease of exposition; in practice, any factor $1 + \Omega(1/\poly(n))$ suffices, resulting in a slightly smaller choice of $\ell$ classically and quantumly. 
\end{remark}

\section{Lower bounds against low coordinate degree functions}
\label{sec:lower_bounds_lcdf}
Here, we show that the Kikuchi algorithm is classically optimal for a family of models including the $p$-marginal HSBM under standard conjectures in average-case hardness (\Cref{thm:ldm_lowerbounds}). Our lower bounds will address a larger class of models that share the form of the $p$-marginal HSBM definition but with looser conditions on $f$.

Due to the difficulty of showing lower bounds against all possible classical algorithms, the evidence in planted / average-case settings is typically restricted to ruling out a large class of classical algorithms. One common class is \emph{low-degree polynomial} (LDP) algorithms: these solve a detection problem by computing a low-degree polynomial of the input data $x$ whose whose value is different under the planted $x \sim \PP$ and the null $x \sim \QQ$ distributions. LDP lower bounds are typically shown through the low-degree likelihood ratio.

Recently, techniques for lower bounds against low-degree polynomial algorithms have been generalized to a larger class of classical algorithms: \emph{low coordinate degree functions} (LCDF)~\cite{kunisky2025low}. While LDP are linear combinations of low-degree monomials, LCDF are linear combinations of arbitrary functions of entries in a small number of coordinates. We review the definitions of LCDF, strong separation, and strong detection.

\begin{definition}[Strong separation]
    Consider a sequence of pairs of probability measures $\PP_n, \QQ_n$ over measurable spaces $\Omega_n$. We say that functions $f_n:\Omega_n\to\R$ achieve \emph{strong separation} if
    \begin{align}
        \E_{Y \sim \PP_n} f_n(Y) - \E_{Y \sim \QQ_n} f_n(Y) = \omega\lr{\sqrt{\Var_{Y\sim\QQ_n} f_n(Y)} + \sqrt{\Var_{Y\sim\PP_n} f_n(Y)}}
    \end{align}
    as $n \to\infty$.
\end{definition}

\begin{definition}[Strong detection]
    Consider a sequence of probability measures $\PP_n, \QQ_n$ over $\Omega_n$. We say that functions $r_n:\Omega_n\to\{0, 1\}$ achieve \emph{strong detection} if
    \begin{align}
        \lim_{n\to\infty} \PP_n[r_n(y) = 0] = \lim_{n\to\infty} \QQ_n[r_n(y)=1] = 0,
    \end{align}
    that is, if the sequence of hypothesis tests $r_n$ have both Type I and Type II error probabilities tending to zero.
\end{definition}

Strong separation has an operational implication in hypothesis testing: given a strongly separating family of functions, one obtains hypothesis tests with $o(1)$ Type I and II errors. The converse is also true; hence, strong separation by a family of functions is equivalent to strong detection. We also note that, as discussed in~\cite{wein2025computational}, the task of detection rather than refutation is typically best lower bounded by ruling out algorithms that are low (coordinate) degree rather than SoS; for high-dimensional statistical problems, such as ours, either approach is expected to perform equally well.

Our proofs address the family of functions given by LCDF. For a product measure $\QQ$ on $\Omega^N$ (such as in our $p$-marginal HSBM, \Cref{def:pbernsbm}), we write $y_T \in \Omega^T$ for the restriction of $y$ to the coordinates in $T \subseteq [N]$. We define subspaces of $L^2(\QQ)$:
\begin{align}
    V_T &= \{f \in L^2(\QQ)\,:\,f(y)\text{ depends only on }y_T\}\\
    V_{\leq D} &= \sum_{T\subseteq [N] \,:\, |T| \leq D} V_T.
\end{align}
The space $V_{\leq D}$ is the space of low degree coordinate functions, formalized as follows.

\begin{definition}[Coordinate degree]
    For $f \in L^2(\QQ)$, the \emph{coordinate degree} of $f$ is
    \begin{align}
        \mathrm{cdeg}(f) = \min\{D\,:\, f \in V_{\leq D}\}.
    \end{align}
\end{definition}

\begin{remark}
    When $\Omega \subseteq \R$, functions of coordinate degree at most $D$ include polynomials of degree at most $D$.
\end{remark}

We can now state our LCDF lower bound for $p$-marginal HSBM formally; it is proven in \Cref{sec:low}.

\begin{theorem}[Lower bound on hypergraph community detection]
    Let $\PP_n$ and $\QQ_n$ be the planted and null distributions of a $p$-marginal HSBM$(n,k,\theta_0,\eps)$ on $n$ vertices with $p > 2$. No sequence of functions of coordinate degree at most $\ell$ can strongly separate $\QQ_n$ from $\PP_n$ for
    \begin{align}
        \beta := \frac{\epsilon}{\sqrt{\theta_0(1-\theta_0)}} < C_{k,p} \ell^{1/2-p/4}n^{-p/4}
    \end{align}
    for some constant $C_{k,p}$ independent of $n$.
    \label{thm:ldm_lowerbounds}
\end{theorem}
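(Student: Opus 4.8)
The plan is to establish the lower bound via the standard low-coordinate-degree likelihood ratio calculation, following the machinery developed in~\cite{kunisky2024low}. Since strong separation by coordinate-degree-$\ell$ functions is obstructed whenever the norm of the degree-$\ell$ projection of the likelihood ratio is $O(1)$, it suffices to bound $\|(L_n)^{\leq \ell}\|_{L^2(\QQ_n)}^2$, where $L_n = d\PP_n/d\QQ_n$. First I would write the likelihood ratio for the $p$-marginal HSBM explicitly: since under $\PP$ the labels $x \sim \mathrm{Unif}([k]^n)$ are the only hidden randomness and, conditioned on $x$, the hyperedges are independent Bernoullis, we get $L_n(Y) = \E_{x}\prod_{S \in \binom{[n]}{p}} \frac{\mu_{x_S}(Y_S)}{\mu_{\mathrm{avg}}(Y_S)}$. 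Expanding each ratio around $1$ using $\mu_a(Y_S)/\mu_{\mathrm{avg}}(Y_S) = 1 + \epsilon f(a) \cdot \frac{Y_S - \theta_0}{\theta_0(1-\theta_0)}$ (valid because $\epsilon/\theta_0 \leq 1$ by~\eqref{eq:paramconds_sec}), the product becomes a sum over subsets $\mathcal{S}$ of hyperedges of $\epsilon^{|\mathcal{S}|}\prod_{S \in \mathcal{S}} f(x_S) A_S / \sqrt{\theta_0(1-\theta_0)}$, and taking the $\QQ$-inner product of two such terms kills all cross terms by independence and the normalization $\E_\QQ A_S = 0$, $\E_\QQ A_S^2 = 1$.

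The key structural input is the \emph{marginal order} condition satisfied by $f$: because $\E_{a}[f(a)] = 0$ and all conditional marginals of order $< p_*$ (here $p_* = p$ for the HSBM since $f(a) = \sum_i \prod_j (1_{a_j=i}-1/k)$ has every proper-subset marginal vanishing) are zero, the expectation $\E_x \prod_{S \in \mathcal{S}} f(x_S)$ vanishes unless the hypergraph formed by $\mathcal{S}$ has no vertex of ``low degree'' — more precisely, unless every connected piece of the union hypergraph on $\mathcal{S}$ is such that the vertices cannot be integrated out one at a time. This is exactly the combinatorial condition isolated in~\cite{kunisky2024low} that reduces the moment calculation for a generalized SBM to that of a $p_*$-order Tensor PCA problem. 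I would therefore invoke \Cref{thm:kun}: it yields $\E_x \prod_{S \in \mathcal{S}} f(x_S) \lesssim (C_k)^{|\mathcal{S}|}$ supported only on ``non-contractible'' collections $\mathcal{S}$, and on those collections the number of involved coordinates is at least $\frac{p}{2}|\mathcal{S}|$ (each hyperedge must overlap others enough to avoid a leaf), so the coordinate-degree-$\ell$ truncation restricts to $|\mathcal{S}| \lesssim \ell \cdot \frac{2}{p}$ — wait, more carefully, to $|\mathcal{S}|$ with at most $\ell$ total vertices, hence roughly $|\mathcal{S}| \leq 2\ell/p$.

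Assembling these pieces, $\|(L_n)^{\leq \ell}\|^2 \leq \sum_{q \leq 2\ell/p} \epsilon^{2q} \theta_0^{-q}(1-\theta_0)^{-q} \cdot (C_k)^{q} \cdot N(q)$, where $N(q)$ counts non-contractible collections of $q$ hyperedges on $[n]$; the Tensor PCA combinatorics of~\cite{kunisky2024low} give $N(q) \lesssim (C_p)^q q^{q} n^{(p/2)q}$ (the $n^{(p/2)q}$ because one gets roughly $(p/2)q$ free vertex choices, and the $q^q$ from matching up overlaps). Substituting $\beta = \epsilon/\sqrt{\theta_0(1-\theta_0)}$, each term is bounded by $\big(C_{k,p} \, q \, \beta^2 \, n^{p/2}\big)^q$; since $q \lesssim \ell$, this is a convergent geometric series (summing to $O(1)$) precisely when $\beta^2 n^{p/2}\ell \lesssim 1$, i.e.\ $\beta \lesssim \ell^{-1/2} n^{-p/4}$. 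This is off from the claimed $\ell^{1/2 - p/4} n^{-p/4}$ by a factor of $\ell^{(p/2-1)/... }$— so the tighter bound must come from a sharper count: the non-contractible collections spanning exactly $\ell$ vertices number only $\lesssim (C_p \ell)^{? }$ rather than $q^q$-many, and one should organize the sum by the \emph{number of vertices used} $m \leq \ell$ rather than by $q$, getting a per-vertex rather than per-hyperedge budget. Reconciling the exponent — showing the dominant contribution comes from collections using all $\ell$ coordinates with $q \approx 2\ell/p$ hyperedges, where the count is $\lesssim n^{\ell} \ell^{\ell(1 - p/2)/...}$type — and verifying the resulting series converges exactly at $\beta \asymp \ell^{1/2-p/4} n^{-p/4}$ is the main obstacle; this is where the $p_*$-Tensor-PCA reduction of \Cref{thm:kun} does the heavy lifting, and I would lean on its stated bound rather than re-deriving the combinatorics. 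The remaining steps — converting the $L^2$ bound into failure of strong separation, and handling the edge cases $\theta_0 \to 0$ via the second condition in~\eqref{eq:paramconds_sec} — are routine.
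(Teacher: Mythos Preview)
Your proposal takes a materially different route from the paper and has a genuine gap that you yourself flag but do not close.

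The paper does \emph{not} compute the projected likelihood ratio norm directly. Instead, it verifies that the $p$-marginal HSBM is a non-trivial, regular, strongly symmetric GSBM (\Cref{lem:pberngsbm}, \Cref{lem:fprop}) with marginal order exactly $p_* = p$, then applies \Cref{thm:kun} as a black box. The only remaining computation is to bound the injective norm of the characteristic tensor $T$, whose entries here are $T_{a,b} = \frac{\epsilon^2}{p!}\bigl(\tfrac{1}{\theta_0}+\tfrac{1}{1-\theta_0}\bigr) f(a)f(b)$; this injective norm is $\Theta_{k,p}(\beta^2)$, so the hypothesis of \Cref{thm:kun} with $p_*=p$ reads $\beta^2 \lesssim n^{-p/2}\,\ell^{\,1-p/2}$, i.e.\ $\beta \lesssim n^{-p/4}\,\ell^{\,1/2-p/4}$, which is exactly the claim. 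No combinatorics of ``non-contractible collections'' is redone.

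Your approach attempts to re-derive that combinatorics, and the gap is precisely where you say ``wait'' and ``this is off.'' Your per-hyperedge count $N(q) \lesssim (C_p)^q q^q n^{(p/2)q}$ gives $\beta \lesssim \ell^{-1/2} n^{-p/4}$, which is weaker by a factor of $\ell^{(p-2)/4}$ than the target; the sharper exponent requires organizing the sum by the number $m$ of vertices used and getting the correct $n^m$ and $\ell$-dependent multiplicity, which you do not carry out. Saying you would ``lean on \Cref{thm:kun}'' does not rescue the argument as written, because \Cref{thm:kun} is phrased as a condition on the injective norm of the characteristic tensor, not on the counts $N(q)$ you introduce; to invoke it you would have to abandon your likelihood-ratio expansion and instead verify the injective-norm hypothesis --- which is what the paper does in two lines. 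If you want to keep your direct approach, you must actually execute the sharper vertex-indexed count (essentially reproducing the proof inside \cite{kunisky2024low}); otherwise, switch to the paper's route and compute $\|T\|_{\mathrm{inj}}$.
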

\Cref{thm:kikuchi_bosonic,thm:ldm_lowerbounds} together establish that the Kikuchi method is optimal, up to the logarithmic factor $\sqrt{\log(n)}$ which is negligible whenever $\ell \gg \log(n).$
We note that this logarithmic gap between the lower and upper bounds is a common artifact introduced by the Matrix-Chernoff bound appearing in the analysis of the Kikuchi method, and also appears in previous work \cite{WAM19,hastings2020classical, schmidhuber2025quartic}. Just like these prior works, we conjecture that the logarithmic gap can be closed by a tighter analysis, such as the one recently carried out for Tensor PCA \cite{kothari2025smoothtradeofftensorpca} (cf. \Cref{remark:remove_log}).
\section{Quantum algorithm for \texorpdfstring{$p$}\ -marginal HSBM}
\label{sec:Quantum}

\subsection{Quantum algorithm}

Our quantum algorithm for community detection is a quantized version of the Kikuchi method; hence, much of the required technical work is achieved by showing the theorems claimed in the previous section. Just like the classical algorithm, it distinguishes a hypergraph with community structure from a random hypergraph by estimating the largest eigenvalue of the $\ell$th order Kikuchi matrix for a suitable choice of $\ell$. However, our quantum algorithm does so using $\Tilde{O}(n^{\ell/4} \cdot \poly(n))$ quantum gates and only $\Tilde{O}(\ell \log(n))$ qubits. On the contrary, estimating this eigenvalue classically, say, using the Power Method, requires time at least $\Omega(n^\ell)$. (Note that simply writing down one vector in the dimension of $\calK_\ell$ takes $\Omega(n^\ell)$ time and space.) Hence our quantum algorithm constitutes a (nearly) quartic speedup in time as well as superpolynomial savings in space over the best know classical algorithm. 
 
\begin{center}
    \textbf{ Quantized Kikuchi method}
\end{center}
\noindent\hrulefill\\
\noindent\textbf{Input:} A $p$-marginal HSBM$(n,k,\theta_0,\eps)$ instance $\mathbf{Y}$.  \\
\noindent\textbf{Preprocessing:} As in the classical Kikuchi algorithm, choose a sufficiently large $\ell$ and a threshold $\tau$ in accordance with a classical ``Kikuchi Theorem'' at SNR $\beta$, such as \Cref{thm:kikuchi_bosonic}.\\
\textbf{Quantum algorithm:}
Encode the following in Amplitude Amplification and repeat $O(n^{\ell/4})$ times:
\begin{itemize}[left = 0.2cm]
        \item Prepare $\ell/p$ unentangled copies of a ``small'' guiding state $\ket{\phi}$. Symmetrize the resulting state to obtain a guiding state $\ket{\Phi}$. 
        \item Perform Quantum Phase Estimation with the sparse Hamiltonian $K_\ell$ on the initial state $\ket{\Phi}$.
        \item Measure the eigenvalue register and record whether an eigenvalue above the threshold $\tau$ was sampled.
\end{itemize}
\textbf{Output:} If during any of the repetitions, an eigenvalue above $\tau$ was found, return ``Planted''. Otherwise, return ``Random''. 

\noindent\hrulefill

The quartic quantum speedup is a combination of two quadratic speedups; one is standard and due to Amplitude Amplification \cite{brassard2000quantum}. The other is due to the construction of an efficient guiding state that has improved overlap with the leading eigenspace of the Kikuchi matrix in the planted case. We now discuss this guiding state in some detail. 

Recall that the Kikuchi method is spectral: $\norm{\cK}$ exceeds $\tau$ when an instance comes from the planted distribution and falls beneath it when the instance comes from the null distribution. A natural state that certifies this spectral norm via the variational lower bound $\norm{\cK} \geq \bra{v}\cK\ket{v}$ is (tensor products of) the state
\begin{align}
    \ket{v}_S \propto f(x_S), \qquad \left(\text{ for } S \in \binom{[n]}{p}\right).
\end{align}
For details, see \Cref{def:certificate}. However, preparing this state requires knowledge of the community labels $x$. With only the hypergraph available, we cannot prepare $\ket{v}$ as a guiding state. To get around this issue, recall that the hyperedge on set $S$ of a $p$-marginal HSBM is placed if $Y_S=1$ for random variable $Y_S \sim \bern(\theta_0 + \epsilon f(x))$. Hence, preparing a state proportional to 
\begin{align}
    \ket{u}_S \propto Y_S - \theta_0
\end{align}
serves as an approximation to $\ket{v}$. This motivates our guiding state, which we define formally in \Cref{def:guide_vector}. Informally, we first show the following result, a formal version of which is given in \Cref{lem:guideoverlap}.
\begin{lemma}[Guiding state overlap with certificate state, informal]\label{lem:momenttail_informal}
    Let $\ket{u}, \ket{v}$ be the unnormalized certificate state and guiding states of \Cref{def:certificate,def:guide_vector}.
    Then
    \begin{align}
        \mathrm{Pr}_\PP \left[\frac{\bra{u}\ket{v}}{\norm{u}\norm{v}} \leq \Tilde{O}\left(n^{-\ell/4} (\ell!)^{1/2p-1/4} \right)\right] &= o(1).
    \end{align}
\end{lemma}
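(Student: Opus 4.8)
The plan is to lower-bound the (unnormalized) inner product $\braket{u}{v}$ and upper-bound the norms $\norm{u}$ and $\norm{v}$ separately, each holding with probability $1-o(1)$ under $\PP$. Recall $\ket{v}_S \propto f(x_S)$ and $\ket{u}_S \propto Y_S - \theta_0$ for $S \in \binom{[n]}{p}$; under $\PP$ we have $\E[Y_S \mid x] = \theta_0 + \epsilon f(x_S)$, so that $\E[\braket{u}{v} \mid x] = \epsilon \sum_S f(x_S)^2$. The heart of the argument is therefore a concentration statement: conditioned on the labels $x$, the quantity $\braket{u}{v} = \sum_S (Y_S - \theta_0) f(x_S)$ is a sum of $\binom{n}{p}$ independent, bounded, mean-$\epsilon f(x_S)$ random variables, so a Bernstein/Hoeffding bound gives $\braket{u}{v} = (1-o(1))\,\epsilon \sum_S f(x_S)^2$ with high probability provided $\sum_S f(x_S)^2$ is not too small. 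I would next invoke Lemma~\ref{lem:fprop} (or a direct second-moment computation) to show that $\sum_S f(x_S)^2$ concentrates around $\binom{n}{p}\,\E_{a\sim[k]^p}[f(a)^2] = \Theta(n^p)$ with probability $1-o(1)$ over $x \sim \unif([k]^n)$: here $\E_a[f(a)^2] = \sum_{i}\E_a[\prod_j (1_{a_j=i}-1/k)^2]$ is a positive constant $c_{k,p}$ depending only on $k,p$, and the variance of $\sum_S f(x_S)^2$ is $O(n^{2p-1})$ by the usual overlap-counting argument (pairs of $p$-sets sharing at least one vertex), so Chebyshev suffices. Combining, $\braket{u}{v} = \Theta(\epsilon\, n^p)$ whp.

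For the denominator, $\norm{v}^2 = \sum_S f(x_S)^2 = \Theta(n^p)$ whp by the same concentration as above, so $\norm{v} = \Theta(n^{p/2})$. For $\norm{u}^2 = \sum_S (Y_S - \theta_0)^2$, again conditioning on $x$: each term has conditional mean $\theta_0(1-\theta_0) + \epsilon^2 f(x_S)^2 = \Theta(\theta_0)$ (using $\epsilon < \theta_0$ and $f$ bounded), and $\sum_S (Y_S-\theta_0)^2$ concentrates around its mean $\Theta(\theta_0 n^p)$ by Bernstein — the per-term variance is $O(\theta_0)$ and there are $\binom{n}{p}$ terms, and the deviation we need to rule out is a constant factor. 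So $\norm{u} = \Theta(\sqrt{\theta_0}\, n^{p/2})$ whp. Putting the three pieces together,
\[
\frac{\braket{u}{v}}{\norm{u}\,\norm{v}} = \Theta\!\left(\frac{\epsilon\, n^p}{\sqrt{\theta_0}\, n^{p/2}\cdot n^{p/2}}\right) = \Theta\!\left(\frac{\epsilon}{\sqrt{\theta_0}}\right) = \Theta(\beta)
\]
at the level of a single copy $\ket{\phi}$; since the full guiding state $\ket{\Phi}$ is a symmetrization of $\ell/p$ unentangled copies and the certificate state $\ket{v}$ of Definition~\ref{def:certificate} is likewise the corresponding symmetrized $\ell/p$-fold product, the overlap for the full states is $\beta^{\ell/p}$ up to a symmetrization factor. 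Tracking the combinatorics of symmetrizing a tensor power of $\ell/p$ copies of a state living on $\binom{[n]}{p}$-indexed space — the Gram/overlap picks up a ratio of the number of ordered tuples to the number of symmetrized configurations, which is where the $(\ell!)^{1/2p - 1/4}$ and the $n^{-\ell/4}$ factors enter — and substituting the Kikuchi threshold relation $\beta = \tilde\Theta(\ell^{1/2-p/4} n^{-p/4})$ from Theorem~\ref{thm:kikuchi_bosonic} yields $\beta^{\ell/p} = \tilde\Theta\big((\ell^{1/2-p/4} n^{-p/4})^{\ell/p}\big)$, which after combining with the symmetrization combinatorics gives the stated bound $\tilde O\big(n^{-\ell/4}(\ell!)^{1/2p-1/4}\big)$.

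The main obstacle I anticipate is not the scalar concentration (which is routine Bernstein/Chebyshev bookkeeping) but the \emph{symmetrization combinatorics}: carefully relating the normalized overlap of the symmetrized guiding state $\ket{\Phi}$ with the symmetrized certificate state to the $\ell/p$-th power of the single-copy overlap, while correctly accounting for (i) the norm of a symmetrized product of non-orthogonal single-copy states, (ii) cross terms in $\braket{\Phi}{v}$ coming from distinct ways of matching the $\ell/p$ blocks, and (iii) collisions between blocks of size $p$ that overlap in vertices (which are subleading but must be bounded). The cleanest route is probably to work on $\calT_n(\ell)$ directly, expand $\ket{\Phi}$ and $\ket{v}$ in that basis, and show the leading contribution comes from tuples that split into $\ell/p$ disjoint $p$-blocks — there are $\sim n^\ell / (\ell/p)!$ such up to ordering — with the collision terms contributing a lower-order correction; Stirling's approximation on the resulting factorials then produces the $(\ell!)^{1/2p-1/4}$ and $n^{-\ell/4}$ scalings. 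I would also need the mild technical inputs that $\ell = o(\sqrt n)$ (so block collisions are rare, by a birthday-type bound) and that $\epsilon/\theta_0 \le 1$, both guaranteed by \eqref{eq:paramconds_sec}.
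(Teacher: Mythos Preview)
Your overall strategy---concentrate $\braket{u}{v}$, $\norm{u}^2$, and $\norm{v}^2$ separately via second-moment/Chebyshev arguments, then union-bound---is exactly what the paper does. The paper (Lemma~\ref{lem:guideoverlap}, together with Lemmas~\ref{lem:certnorm} and~\ref{lem:guidenorm}) works directly with the full projected states $\ket{u}=\Pi_\lambda\ket{u'}^{\otimes\lambda}$ and $\ket{v}=\Pi_\lambda\ket{v'}^{\otimes\lambda}$ rather than first establishing single-copy concentration and lifting, but this is only an organizational difference: it computes $\E_\PP\braket{u}{v}=|\cC_\lambda|(\beta\mu)^\lambda$ and bounds $\Var_\PP\braket{u}{v}$ by the same overlap-counting you sketch, arriving at $\frac{\braket{u}{v}}{\norm{u}\,\norm{v}}=(\beta\sqrt\mu)^\lambda(1+o(1))$ with high probability.

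The one genuine confusion in your write-up is the attribution of the $n^{-\ell/4}$ and $(\ell!)^{1/2p-1/4}$ factors to ``symmetrization combinatorics.'' These factors come \emph{entirely} from $(\beta\sqrt\mu)^{\ell/p}$ after substituting the Kikuchi threshold $\beta=\tilde\Theta(\ell^{1/2-p/4}n^{-p/4})$: you already computed $\beta^{\ell/p}=\tilde\Theta\big(\ell^{\ell(1/2p-1/4)}n^{-\ell/4}\big)$, which matches the target up to $e^{O(\ell)}$ via Stirling. The collision-free projection $\Pi_\lambda$ contributes only a $(1+O(\ell^2/n))=(1+o(1))$ multiplicative correction to each of $\braket{u}{v}$, $\norm{u}^2$, $\norm{v}^2$, since $|\cC_\lambda|/\binom{n}{p}^\lambda=1+O(\lambda^2 p^2/n)$ by the birthday bound (Lemma~\ref{lem:coverlap}). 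So what you flag as the ``main obstacle''---the symmetrization and block-collision bookkeeping---is in fact a routine lower-order correction, and there is no additional factor to extract from it; once you realize this, your single-copy-then-lift argument and the paper's direct computation become essentially the same proof.
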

\begin{remark}
    When $\ell = o(n^\eps)$ for any $\eps > 0$, \Cref{lem:momenttail_informal} shows that guiding state has sufficient overlap to achieve a quartic quantum speedup up to logarithmic factors.
\end{remark}
A crucial difficulty for all existing quantum Kikuchi algorithms is to show that the leading eigenspace of the Kikuchi matrix (which can be interpreted as a ``noisy version'' of the certificates) retains its good overlap with the guiding state. Although there is no clear reason to expect that this perturbation significantly degrades the overlap, proving the rigorously is technically challenging. Similar obstacles were faced (and overcome) by \cite{hastings2020classical} and \cite{schmidhuber2025quartic}. The work of \cite{hastings2020classical} uses a somewhat intricate strategy to deal with it, drawing from special properties of Gaussian random variables and low-degree polynomials (e.g., the Carbery--Wright theorem on anticoncentration). On the other hand, \cite{schmidhuber2025quartic} first enforces independence between the guiding state and the Kikuchi matrix by building them out of separate batches of the input, and subsequently use a second moment method. We overcome this obstacle using a similar overall strategy as this second approach.
The result is formally stated in \Cref{thm:final_kikuchi_bosonic}, and shows that the improved overlap is maintained up to factors that are negligible compared to the overall scaling of $n^{O(\ell)}$.

Our main result applies this guiding state to obtain a quartic quantum speedup for the $p$-marginal HSBM problem over the best known classical algorithm via the guided Hamiltonian problem. It applies in the same setting as the classical Kikuchi algorithm described in \Cref{sec:Classical}. We show how to efficiently prepare the guiding state $\ket{\Tilde{u}}$ by preparing a tensor product of ``small guiding states'' and subsequently projecting into a collision-free subspace (cf. \Cref{sec:symmetrization}).
By running quantum phase estimation starting from the guiding state, our quantum algorithm prepares a high-energy eigenstate and measures its energy; repeating this step $\Tilde{O}(n^{\ell/4})$ times (combined with amplitude amplification) solves strong detection on the hypergraph stochastic block model. The cost of phase estimation depends on the sparsity of the Kikuchi matrix and the cost of preparing the sparse oracles; since both of these are at most $n^{p/2}$ (see \Cref{sec:space_adv}), we obtain the following key result, which we prove in \Cref{sec:putting_together}.
\begin{theorem}
    Let $p > 2$ be even and $k\geq 2$. Let $\bY \sim \mathrm{HSBM}(k,n,\eps,\theta_0)$ be a $p$-marginal HSBM instance in accordance with \cref{eq:paramconds_sec}. Let $\ell \in[p / 2, n-p / 2]$ be chosen such that the classical Kikuchi method specified by \Cref{thm:kikuchi_bosonic} solves the community detection problem for $\bY$ in time $\Tilde O(n^{\ell})$ at SNR $\beta = \eps/\sqrt{\theta_0(1-\theta_0)}$. Then the quantum algorithm solves the same problem at the same SNR using $$\Tilde O\left(n^{\ell/4}\cdot n^p \cdot \ell^{\frac{\ell}{4}-\frac{\ell}{2p}}  \cdot \log(n)^{\ell/2p} \cdot \exp(O(\ell))\right)$$ quantum gates, $\Tilde{O}(\ell \log(n))$ qubits, and classical space $\tilde O(\ell n^{p})$.
    \label{thm:kikuchi_q}
\end{theorem}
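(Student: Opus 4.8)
The plan is to recast hypergraph community detection as an instance of the guided sparse Hamiltonian detection problem for the $\ell$th‑order bosonic Kikuchi matrix $\cK = \cK(\bY)$, and then carefully account for the cost of solving it by quantum phase estimation (QPE) together with amplitude amplification. By \Cref{thm:kikuchi_bosonic}, for the stated SNR and the chosen $\ell$ there is a threshold $\tau = \tilde\Theta(\beta n^{p/2}\ell^{p/2})$ with $\lambda_{\max}(\cK) < \tfrac23\tau$ whp under $\QQ$ and $\lambda_{\max}(\cK) > \tfrac43\tau$ whp under $\PP$, so it suffices to decide, with one‑sided error, whether $\cK$ has an eigenvalue above $\tau$. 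The quantum routine is: prepare the guiding state $\ket{\Phi}$ of \Cref{def:guide_vector} by producing $\ell/p$ copies of the small guiding state and projecting onto the collision‑free subspace $\calT_n(\ell)$ (\Cref{sec:symmetrization}); run QPE for $\cK$ with additive resolution $\delta = \Theta(\tau)$ on $\ket{\Phi}$; and flag the event that the measured eigenvalue exceeds $\tau$. Wrapping this in amplitude amplification with $O(1/\gamma)$ iterations, where $\gamma$ is the overlap of $\ket{\Phi}$ with the span of eigenvectors of eigenvalue $\ge\tau$, and outputting ``Planted'' iff some iteration flags, gives the algorithm. Correctness is immediate from \Cref{thm:kikuchi_bosonic}: under $\QQ$ no eigenvalue reaches $\tau$ even after the $\delta$‑scale QPE error (this is why $\delta$ must be a small enough constant fraction of $\tau$), so no flag fires whp; under $\PP$ an eigenvalue exceeds $\tfrac43\tau$, so the amplified flag fires whp provided $\gamma$ is not too small.

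The quantitative overlap bound is supplied by \Cref{thm:final_kikuchi_bosonic}, which upgrades the certificate‑overlap estimate of \Cref{lem:momenttail_informal} (equivalently \Cref{lem:guideoverlap}) — the overlap between $\ket{\Phi}$ and the idealized certificate state $\ket{v}$ of \Cref{def:certificate} — to a bound on the overlap with the actual, noisy leading eigenspace of $\cK$. It yields, with probability $1-o(1)$ over $\bY\sim\PP$,
\begin{align}
    \gamma \;\ge\; \tilde\Omega\!\lr{ n^{-\ell/4}\,(\ell!)^{1/4-1/2p} } \;=\; \tilde\Omega\!\lr{ n^{-\ell/4}\, \ell^{-(\ell/4-\ell/2p)}\, \exp(-O(\ell)) },
\end{align}
where the equality is Stirling's formula, $(\ell!)^{1/4-1/2p} = \ell^{\ell(1/4-1/2p)}e^{-\ell(1/4-1/2p)}\poly(\ell)$, and the $\tilde\Omega$ absorbs a $\mathrm{polylog}(n)^{\ell/2p}$ factor from the precise statement. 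Hence $1/\gamma \le \tilde O\!\lr{ n^{\ell/4}\,\ell^{\ell/4-\ell/2p}\,\mathrm{polylog}(n)^{\ell/2p}\,\exp(O(\ell)) }$, the first factor of the claimed gate count.

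To bound the cost of one iteration (one QPE call plus one reflection about $\ket{\Phi}$), note that $\cK$ acts on $\calT_n(\ell)$, of dimension $n!/(n-\ell)! \le n^\ell$, so the state register uses $\lceil\log_2(n!/(n-\ell)!)\rceil = O(\ell\log n)$ qubits; the QPE eigenvalue register and the amplification ancillas add $O(\ell+\log n)$ further qubits, for $O(\ell\log n)$ qubits in total. The matrix is $d$‑sparse with $d \le \binom{\ell}{p/2}n^{p/2} \le 2^\ell n^{p/2}$ (a row $S$ is adjacent to $V$ iff $S\ominus V$ has size $p$, i.e.\ iff $S,V$ disagree on $p/2$ coordinates, specified by the $p/2$ positions and the $p/2$ new values), and each entry equals one normalized hyperedge indicator $A_{S\ominus V}$, so $\|\cK\|_{\max} \le \sqrt{(1-\theta_0)/\theta_0} \le n^{(p-1)/2}$ using $\theta_0\gtrsim n^{1-p}$ from \cref{eq:paramconds_sec}. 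Sparse‑Hamiltonian simulation of $e^{i\cK t}$ for $t=\tilde\Theta(1/\delta)$ costs $\tilde O(d\,\|\cK\|_{\max}\,t)$ oracle queries; plugging in $d\le 2^\ell n^{p/2}$, $\|\cK\|_{\max}\le n^{(p-1)/2}$, and $\delta=\Theta(\tau)$ with $\tau=\tilde\Theta(\beta n^{p/2}\ell^{p/2})$ and $\beta\gtrsim\ell^{1/2-p/4}n^{-p/4}$ bounds the query count by $\tilde O(2^\ell n^{(3p-2)/4}) = \tilde O(n^p\exp(O(\ell)))$. Each sparse‑oracle query — map a row $S$ and a neighbor index to $V$ and the entry $A_{S\ominus V}$ — is a $\poly(\ell,\log n)$ combinatorial computation plus one lookup into a quantum‑queryable classical copy of the $\binom{n}{p}$‑bit hypergraph, costing $\mathrm{polylog}(n)$ gates; the one‑time construction of that copy together with the $\ell$‑tuple bookkeeping accounts for the $\tilde O(\ell n^p)$ classical space (\Cref{sec:space_adv}). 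Preparing one small guiding state similarly costs $\tilde O(n^p)$, and the symmetrization/collision‑removal projection succeeds with probability $1-o(1)$ at cost $\tilde O(\poly(n))$; both are dominated by the QPE cost, so one iteration costs $\tilde O(n^p\exp(O(\ell)))$ gates.

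Multiplying the $\tilde O\!\lr{n^{\ell/4}\,\ell^{\ell/4-\ell/2p}\,\mathrm{polylog}(n)^{\ell/2p}\,\exp(O(\ell))}$ iterations by the $\tilde O(n^p\exp(O(\ell)))$ per‑iteration cost gives the stated $\tilde O\!\lr{n^{\ell/4}\cdot n^p\cdot\ell^{\,\ell/4-\ell/2p}\cdot\log(n)^{\ell/2p}\cdot\exp(O(\ell))}$ gates, with $\tilde O(\ell\log n)$ qubits and $\tilde O(\ell n^p)$ classical space; a union bound over the $o(1)$ failure events of \Cref{thm:kikuchi_bosonic}, \Cref{thm:final_kikuchi_bosonic}, the symmetrization projection, and QPE yields $1-o(1)$ success. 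The substantive content — the spectral gap and, especially, the claim that the guiding‑state overlap survives the passage from the certificate $\ket{v}$ to the true perturbed eigenspace of a random $\cK$ — lives in \Cref{thm:kikuchi_bosonic} and \Cref{thm:final_kikuchi_bosonic} (the latter handled, as in \cite{schmidhuber2025quartic}, by building $\cK$ and $\ket{\Phi}$ from disjoint batches of $\bY$ and a second‑moment argument), so what remains here is cost accounting. The step I expect to need the most care is the precision/simulation‑time/gap interplay: the multiplicative $\tfrac23$‑vs‑$\tfrac43$ gap forces $\delta=\Theta(\tau)$, and it is exactly this choice — together with the crude bounds $d\le 2^\ell n^{p/2}$ and $\|\cK\|_{\max}\le n^{(p-1)/2}$ — that keeps the per‑iteration $\poly(n)$ overhead at $\tilde O(n^p)$, with all remaining $\ell$‑dependence swept into $\exp(O(\ell))$ and the polylog powers into $\log(n)^{\ell/2p}$; verifying that nothing in this chain secretly contributes an extra power of $n$ in the exponent is the crux of the bookkeeping.
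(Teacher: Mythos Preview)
Your proposal is correct and follows essentially the same route as the paper: invoke \Cref{thm:kikuchi_bosonic} for the spectral threshold, \Cref{thm:final_kikuchi_bosonic} for the guiding-state overlap with the leading eigenspace (this is where the sample-split matrix $\cKt$ enters, which you acknowledge in the last paragraph though your main body writes $\cK$), and then do the gate/qubit/space accounting via sparse oracle access as in \Cref{sec:space_adv} and \Cref{lem:efficient_guide_prep}. Two small points: the factorial exponent in your displayed overlap bound should be $(\ell!)^{1/2p-1/4}$ rather than $(\ell!)^{1/4-1/2p}$ (your very next equality $\ell^{-(\ell/4-\ell/2p)}$ has the right sign, so this is a transcription slip), and your claim that each hypergraph lookup costs $\mathrm{polylog}(n)$ gates implicitly assumes QRAM --- the paper sidesteps this by directly quoting the $\tilde O(n^p\ell\log n)$ oracle cost from \Cref{sec:space_adv} rather than decomposing it as (queries)$\times$(cost per query).
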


\begin{remark}
    For constant or slowly growing  $\ell$ (specifically, $\ell = o(n^\eps)$ for any $\eps > 0$) this gives a (nearly) quartic speedup in time and a superpolynomial reduction in space. For general parameters satisfying \cref{eq:paramconds_sec}, the speedup interpolates between quartic and super-quadratic. For example, at $p=4$ and the maximal value of $\ell \lesssim \sqrt{n}$, the degree of the speedup approaches $\frac{16}{5}$, which is still super-cubic.
\end{remark}
\subsection{Quantum space advantage}
\label{sec:space_adv}
We end with final remarks on the space advantage of the quantum Kikuchi algorithm, which is already well established in \cite{hastings2020classical} and \cite{schmidhuber2023complexity}.
Our model has $O(\theta_0 + \epsilon)\binom{n}{p}\sim (\theta_0 + \epsilon) n^p$ hyperedges. The sparsity of the Kikuchi method is at most $O(n^{p/2}$. Hence, implementing the sparse oracles (Theorem 71 in~\cite{schmidhuber2025quartic}) uses $O(n^p \ell \log n)$ gates and $O(\ell \log n)$ qubits; the overall number of qubits is $\Tilde O(\ell \log n)$. If we account for the classical space requirement of storing the input ($O(n^p)$) this is a superpolynomial space advantage, as claimed throughout this paper. If we only count the quantum space requirements of working qubits required by our algorithm, this is an exponential improvement over the classical space requirement of the classical Kikuchi method.

\section{Conclusion}
Our work identifies a new application for the classical Kikuchi algorithm and shows that the quantized algorithm maintains a quartic speedup. This suggests that the speedup identified in the original Kikuchi quantum algorithm of~\cite{hastings2020classical} is preserved in a larger class of inference problems with a statistical-computational gap. We remark on several related questions our work raises with regards to this larger class of problems.
\begin{itemize}
    \item \emph{Odd order of hypergraph.} Our result holds for hypergraphs with even order. It has been long understood that the analysis of the Kikuchi method for even order is simpler than odd order~\cite{WAM19,guruswami2022algorithms}, but we expect the quantum speedup to be preserved for HSBMs with odd order.
    \item \emph{Applications beyond community detection.} Although we analyze the particular setting of community detection, we note that the quantum algorithm we describe may be generically applicable in random hypergraph settings. Given that marginal order provides a sufficient condition for a statistical-computational gap, a natural conjecture is that the Kikuchi quantum speedup holds for all such generalized SBMs.
    \item \emph{Recovery of community labels.} Beyond detection, practical applications are often interested in recovering the community structure. In the quantum Kikuchi algorithm for Tensor PCA, the ground state contains sufficient information to recover the planted spike via state tomography and a rounding procedure. We leave open the analogous question of recovering community labels here.
\end{itemize}

\bibliographystyle{alpha}
\bibliography{references}

\newpage
\appendix
\section{Correctness and efficiency of the classical Kikuchi algorithm}
\label{sec:classical_support}
In this section, we formally describe the classical algorithm based on the Kikuchi hierarchy, and prove the detection threshold in \Cref{thm:kikuchi_bosonic}. The upper bound in the null case follows from a simple Matrix Chernoff bound. Proving the lower bound in the planted case is straight-forward in previous work \cite{WAM19,hastings2020classical,schmidhuber2025quartic} for Tensor PCA and $p$XORSAT, because a simple certificate with product structure suffices. However, the natural certificate in our setting has entries proportional to $f(x)$, which do not exhibit a product structure. As a result, our proof of \Cref{thm:kikuchi_bosonic} uses more complex combinatorial arguments. 

Recall the setup from \Cref{sec:technical_summary}. We draw random variables
\begin{align}
    Y_S \sim \bern(\theta_0 + \epsilon f(x_S))
\end{align}
conditioned on some $x \in [k]^n$, and we define
\begin{align}
    f(x_S) = \sum_{i=1}^k \prod_{j=1}^p \lr{1_{x_{S_j}=i} - \frac{1}{k}}.
\end{align}
We consider drawing $x \sim [k]^n$ uniformly at random and then drawing all $Y_S$. We use $\beta=\epsilon\sqrt{\frac{1}{\theta_0(1-\theta_0)}}$ to denote the Signal-to-Noise ratio (SNR).
The proof of \Cref{thm:kikuchi_bosonic} boils down to two statements: \begin{itemize}
    \item Show that in the planted case, there exists a \emph{certificate} vector $\ket{v}$ such that $\beta n^{p/2}\ell^{p/2}$ with high probability. 
    \item Show (using Matrix-Chernoff) that in the null case, no such vector exists.
\end{itemize}

\subsection{Kikuchi hierarchy}\label{sec:kik}
We repeat the definition of the Kikuchi matrix here.
\begin{definition}
    Denote $\calT_n(\ell)$ the set of $\ell$-tuples $S = (i_1,\dots, i_\ell) \in [n]^\ell$ with no repeated entries. This set has cardinality $|\calT_n(\ell)| = n!/(n-\ell)!$.
\end{definition}
\begin{definition}
    Let $S = (i_1,\dots, i_\ell), V = (j_1,\dots, j_\ell) \in \calT_n(\ell)$ and define $$d = 2 \cdot | \{a \in [\ell] : i_a \not = j_a\}| $$ to count the number of disagreements. We denote by $S \ominus V$ the $d$-tuple that first lists the disagreeing entries of $S$ (in increasing index order) followed by those of $V$ (same order). 
\end{definition}
\begin{definition}[Bosonic Kikuchi matrix]
    For even $p \geq 2$, let $T$ be a symmetric $p$-tensor indexed by $(\mu_1, \dots, \mu_p) \in [n]^p$. The $\ell$-th order bosonic Kikuchi matrix $\cK =               \cK(T)$ is defined on $\calT_n(\ell)$ entry-wise by \begin{equation}
        \cK_{S,V} = \begin{cases}
            T_{\mu_1, \dots, \mu_p} \qquad \text{ if } (\mu_1, \dots, \mu_p) = S \ominus V, \\
            0 \qquad \qquad \ \ \ \text{ otherwise.}
        \end{cases}
    \end{equation}
\end{definition}

\begin{definition}[Homogeneous subspace]\label{def:hom}
    Given subset $S \in \binom{[n]}{p}$, define the symmetrized basis vector $\ket{S} = \frac{1}{\sqrt{p!}} \sum_{\pi \in \Sym(p)} \ket{\pi(i_1,\dots,i_p)}$. Define the collision-free index set
    \begin{align}
        \cC_m = \left\{(S_1, \dots, S_m) \in \binom{[n]}{p}^m \,:\, S_a \cap S_b = \emptyset \, \forall\, a \neq b\right\}.
    \end{align}
    The homogeneous subspace projector $\Pi_m$ projects onto collision-free tuples
    \begin{align}
        \Pi_m = \sum_{(S_1,\dots,S_m) \in \cC_m} \ketbra{S_1} \otimes \cdots \otimes \ketbra{S_m}.
    \end{align}
\end{definition}

The following facts about the collision-free index set will be useful.
\begin{lemma}[Counting overlaps]\label{lem:coverlap}
    For $\lambda =  o(\sqrt n)$, the collision-free index set has size
    \begin{align}
        |\cC_\lambda| = \prod_{r=0}^{\lambda-1} \binom{n-rp}{p} = \binom{n}{p}^\lambda \lr{1 + O\lr{\frac{\lambda^2}{n}}}.
    \end{align}
    Moreover, the number $r(C, C')$ of coordinates contained in at least one set of $C \in \cC_\lambda$ and at least one set of $C' \in \cC_\lambda$ satisfies
    \begin{align}
        \#\{(C, C') \,:\, r(C, C') = r\} \leq |\cC_\lambda|^2 \binom{\lambda p}{r} \lr{\frac{\lambda p}{n}}^r.
    \end{align}
\end{lemma}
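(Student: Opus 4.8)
The plan is to prove both statements by elementary counting. For the size formula, I would construct a tuple $(S_1,\dots,S_\lambda)\in\cC_\lambda$ greedily: there are $\binom{n}{p}$ choices for $S_1$, then $\binom{n-p}{p}$ for $S_2$ (the $p$ coordinates of $S_1$ are forbidden), and in general $\binom{n-rp}{p}$ choices for $S_{r+1}$, giving $|\cC_\lambda|=\prod_{r=0}^{\lambda-1}\binom{n-rp}{p}$ on the nose. To extract the asymptotics I would factor $|\cC_\lambda|/\binom{n}{p}^\lambda=\prod_{r=0}^{\lambda-1}\binom{n-rp}{p}/\binom{n}{p}$, write the $r$-th factor as $\prod_{j=0}^{p-1}(n-rp-j)/(n-j)$, sandwich it between $1-O(rp^2/n)$ and $1$ (using $n-j\geq n/2$ and $(1-x)^p\geq 1-px$), and multiply the bounds via $\prod_i(1-x_i)\geq 1-\sum_i x_i$. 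The accumulated error telescopes to $O(p^2\lambda^2/n)=O(\lambda^2/n)$, which is $o(1)$ exactly because $\lambda=o(\sqrt{n})$; this same hypothesis is what keeps each per-factor error below $1$ so that the product inequality applies.

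For the overlap count, the structural point is that every $C=(S_1,\dots,S_\lambda)\in\cC_\lambda$ has union $U(C):=\bigcup_a S_a$ of size exactly $m:=\lambda p$, and $r(C,C')=|U(C)\cap U(C')|$. I would count pairs by fixing $C$, equivalently fixing the $m$-set $U(C)$, and counting the $C'\in\cC_\lambda$ whose union meets $U(C)$ in exactly $r$ coordinates. The number of ways to refine a prescribed $m$-set into an ordered tuple of $\lambda$ pairwise-disjoint $p$-blocks is $m!/(p!)^\lambda$, independent of which $m$-set it is, so this count factors as $\frac{m!}{(p!)^\lambda}\binom{m}{r}\binom{n-m}{m-r}$. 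Summing over the $|\cC_\lambda|$ choices of $C$ and using $|\cC_\lambda|=\binom{n}{m}\cdot\frac{m!}{(p!)^\lambda}$ (so that $\frac{m!}{(p!)^\lambda}=|\cC_\lambda|/\binom{n}{m}$), the total number of pairs with overlap $r$ is exactly $|\cC_\lambda|^2\cdot\binom{m}{r}\binom{n-m}{m-r}/\binom{n}{m}$, i.e.\ $|\cC_\lambda|^2$ times a hypergeometric probability mass.

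The last step is to upper bound $\binom{m}{r}\binom{n-m}{m-r}/\binom{n}{m}$ by $\binom{m}{r}(m/n)^r$, which I would do with an over-counting (union-bound) argument on the upper tail: any $m$-set meeting a fixed $m$-set in at least $r$ coordinates is specified — with repetitions — by choosing $r$ of the shared coordinates ($\binom{m}{r}$ ways) and the remaining $m-r$ coordinates freely from the other $n-r$ ($\binom{n-r}{m-r}$ ways), so $\binom{m}{r}\binom{n-m}{m-r}\leq\binom{m}{r}\binom{n-r}{m-r}$; then $\binom{n-r}{m-r}/\binom{n}{m}=\prod_{i=0}^{r-1}(m-i)/(n-i)\leq(m/n)^r$ since $m\leq n$. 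Assembling these yields the claimed bound $\#\{(C,C'):r(C,C')=r\}\leq|\cC_\lambda|^2\binom{\lambda p}{r}\lr{\lambda p/n}^r$ with no spurious lower-order factor. I do not anticipate a real obstacle; the one mild subtlety worth flagging is to route the final estimate through $\binom{n-r}{m-r}/\binom{n}{m}$ rather than applying Stirling to $\binom{n-m}{m-r}/\binom{n}{m}$ directly — the latter would only give $\binom{\lambda p}{r}(\lambda p/n)^r(1+o(1))$, whereas the lemma as stated wants the clean constant-free inequality.
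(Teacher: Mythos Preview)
Your proposal is correct and follows essentially the same route as the paper. For the overlap count you arrive at the hypergeometric expression $|\cC_\lambda|^2\binom{m}{r}\binom{n-m}{m-r}/\binom{n}{m}$ and then bound it via $\binom{n-r}{m-r}/\binom{n}{m}=\prod_{i=0}^{r-1}(m-i)/(n-i)\leq(m/n)^r$; the paper does the same computation phrased probabilistically (uniform $C,C'\in\cC_\lambda$, union bound over the $\binom{\lambda p}{r}$ choices of shared coordinates, same ratio bound). Your treatment of the size asymptotic is more explicit than the paper's, which simply records the product formula without spelling out the $1+O(\lambda^2/n)$ estimate.
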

\begin{proof}
    We compute
    \begin{align}
        \#\{(C, C') \,:\, r(C, C') = r\} = |\cC_\lambda|^2 \pr{r(C, C')=r}
    \end{align}
    over uniformly sampled $C, C' \in \cC_\lambda$. The elements of $C$ are uniform over $\binom{[n]}{\lambda p}$; the probability $C'$ has $r$ overlapping elements is thus
    \begin{align}
        \frac{\binom{n-r}{\lambda p - r}}{\binom{n}{\lambda p}} \leq \lr{\frac{\lambda p}{n}}^r.
    \end{align}
    There are $\binom{\lambda p}{r}$ choices of where to place the overlapping elements and thus
    \begin{align}
        \pr{r(C, C')=r} \leq \binom{\lambda p}{r} \lr{\frac{\lambda p}{n}}^r.
    \end{align}
\end{proof}

\begin{definition}[Certificate vector]
\label{def:certificate}
    Let $\ket{v'}$ be the unnormalized vector given by
    \begin{align}
        \ket{v'} = \sum_{S \in \binom{[n]}{p}} f(x_S) \ket{S}.
    \end{align}
    The certificate vector $\ket{v}$ is the unnormalized vector $\Pi_\lambda \ket{v'}^{\otimes \ell/p}$.
\end{definition}

\subsection{Analysis in the random case}
\begin{lemma}\label{lem:bosonic_nullknorm}
The bosonic Kikuchi matrix of a $p$-marginal HSBM satisfies
\begin{equation}
    \Pr_\QQ \left[  \norm{\cK} \geq \sqrt{6 n^{p/2} \ell^{1+p/2} \log(n)} \right] \leq o(1).
\end{equation}
\end{lemma}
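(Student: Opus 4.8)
The plan is to prove this by a matrix Bernstein (matrix Chernoff) estimate applied to the natural entrywise decomposition of $\cK$. Under $\QQ$ the edge indicators $Y_S$, and hence the standardized scalars $A_S=(Y_S-\theta_0)/\sqrt{\theta_0(1-\theta_0)}$, are independent with $\E_\QQ A_S=0$, $\E_\QQ A_S^2=1$ and $|A_S|\le 1/\sqrt{\theta_0(1-\theta_0)}$. Writing $\Gamma_S$ for the deterministic $0/1$ matrix on $\calT_n(\ell)$ with $(\Gamma_S)_{S',V'}=1$ exactly when the tuple $S'\ominus V'$ is an ordering of the $p$ distinct elements of $S$, we have $\cK=\sum_{S\in\binom{[n]}{p}}A_S\,\Gamma_S$, a sum of independent, symmetric, mean-zero random matrices. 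I would then apply matrix Bernstein, which requires a uniform bound $R$ on $\|A_S\Gamma_S\|$ and a bound $\sigma^2$ on the matrix variance $\|\E_\QQ\cK^2\|=\|\sum_S\Gamma_S^2\|$ (the cross terms vanish by independence and mean-zero).

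For a single pattern matrix, $(\Gamma_S)_{S',V'}$ can be nonzero only if $|S'\cap S|=p/2$, in which case the $p/2$ disagreeing positions of $S'$ are forced and $V'$ is obtained by inserting an ordering of the remaining $p/2$ elements of $S$ there (``swap''-type configurations, where a disagreeing entry of $V'$ equals an entry of $S'$, yield a repeated index in $S'\ominus V'$ and hence contribute nothing). So every row and column of $\Gamma_S$ has at most $(p/2)!$ unit entries, giving $\|\Gamma_S\|\le (p/2)!$ by Gershgorin and $R\le (p/2)!/\sqrt{\theta_0(1-\theta_0)}$. For the variance, $\sum_S\Gamma_S^2$ is positive semidefinite with nonnegative entries, so its operator norm is at most its (uniform) row sum; expanding $(\Gamma_S^2)_{S',S''}=\sum_{V'}(\Gamma_S)_{S',V'}(\Gamma_S)_{V',S''}$ and summing over $S''$ and $S$, and using the $(p/2)!$-row-sparsity of $\Gamma_S$ together with the fact that the pattern $S$ is \emph{determined} by the pair $(S',V')$, the row sum of $\sum_S\Gamma_S^2$ collapses to $(p/2)!$ times the row sum of $\sum_S\Gamma_S$, which counts ordered choices of $p/2$ disagreeing positions among the $\ell$ coordinates and $p/2$ fresh labels outside $S'$. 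This yields $\sigma^2\le (p/2)!\cdot(p/2)!\binom{\ell}{p/2}\binom{n-\ell}{p/2}\le \ell^{p/2}n^{p/2}$, consistent with the Frobenius heuristic $\E_\QQ\|\cK\|_F^2/|\calT_n(\ell)|\approx(\ell n)^{p/2}$.

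Feeding $D:=|\calT_n(\ell)|\le n^\ell$, $\sigma^2\le(\ell n)^{p/2}$ and $R\le (p/2)!/\sqrt{\theta_0(1-\theta_0)}$ into matrix Bernstein at $t=\sqrt{6n^{p/2}\ell^{1+p/2}\log n}$, the variance term contributes $t^2/(2\sigma^2)\ge 3\ell\log n$, so provided the almost-sure term satisfies $Rt/3\le\sigma^2$ the exponent is at least $\tfrac32\ell\log n$ and $\Pr_\QQ[\|\cK\|\ge t]\le 2n^\ell e^{-\frac32\ell\log n}=2n^{-\ell/2}=o(1)$ (using $\ell=\omega(1)$). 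The inequality $Rt/3\le\sigma^2$ rearranges to $\theta_0\gtrsim \ell^{1-p/2}n^{-p/2}\log n$, which holds throughout the regime of interest --- in particular whenever the SNR is at or above the Kikuchi detection threshold of \Cref{thm:kikuchi_bosonic}, since $\beta\gtrsim\ell^{1/2-p/4}n^{-p/4}\sqrt{\log n}$ together with $\beta<\sqrt{\theta_0/(1-\theta_0)}$ forces $\theta_0$ to be at least of that order --- so the heavy right tail of $A_S$ on sparse hypergraphs never dominates. I expect the main obstacle to be the combinatorics underlying $\sigma^2$: because the bosonic matrix lives on ordered $\ell$-tuples and $S\ominus V$ records a particular interleaving of the disagreeing coordinates, one must verify carefully both the $(p/2)!$-sparsity of each $\Gamma_S$ and the identity relating the row sum of $\sum_S\Gamma_S^2$ to that of $\sum_S\Gamma_S$; the Gershgorin bound and the Bernstein calculation are then routine.
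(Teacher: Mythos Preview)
Your proposal is correct and follows essentially the same route as the paper: decompose $\cK=\sum_S A_S\Gamma_S$, bound the row sparsity of each $\Gamma_S$ by $(p/2)!$, compute the matrix variance $\sigma^2\le (p/2)!^2\binom{\ell}{p/2}\binom{n-\ell}{p/2}\le (\ell n)^{p/2}$ via a row-sum bound, and plug into matrix Bernstein at $t=\sqrt{6\sigma^2\ell\log n}$. The only notable difference is that you carry the factor $1/\sqrt{\theta_0(1-\theta_0)}$ in the almost-sure bound $R$ and explicitly verify the side condition $Rt/3\le\sigma^2$ via the SNR threshold, whereas the paper simply sets $R=(p/2)!$ and asserts $t=o(\sigma^2/R)$ for $p\ge4$; your treatment of this point is the more careful one.
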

\begin{proof}
    Write the bosonic Kikuchi matrix as \begin{equation}
        \cK = \sum_{E \in \binom{[n]}{p}} A_E \cK^{(E)}, \qquad \cK_{S,V}^{(E)} = \begin{cases}
            1 \qquad \text{ if }   S \ominus V = (\mu_1, \dots, \mu_p)\textbf{ with } \{\mu_1, \dots, \mu_p\} = E, \\
            0 \qquad \text{ otherwise.}
        \end{cases}
    \end{equation} The $A_E = (Y_E-\theta_0)/\sqrt{\theta_0(1-\theta_0)}$ are independent random variables associated to the hyperedges of the $p$-marginal HSBM, with mean zero and variance one. $K_{S, V}^{(E)}=1$ if and only if $S$ and $V$ differ in exactly $p / 2$ positions and $S \ominus V$ has underlying set $E$ (order ignored). For fixed $S$ and $E$ with $|E \cap S|=p / 2$, the first $p / 2$ entries of $S \ominus V$ are determined by $S$; there are $(p / 2)!$  permutations for the $V$-half, hence $\left\|K^{(E)}\right\|_{\infty} \leq(p / 2)!$  and $R := \lambda_{\text{max}}(\cK^{(E)}) \leq (p/2)!$.
The Matrix Bernstein inequality \cite{Tro12} states that 
    \begin{equation}
\begin{aligned}
\operatorname{Pr}\left\{\lambda_{\max }\left(\calK \right) \geq t\right\} & \leq \frac{n!}{(n-\ell)!} \exp \left(-\frac{\sigma^2}{R^2} \cdot h\left(\frac{R t}{\sigma^2}\right)\right) \\
& \leq \frac{n!}{(n-\ell)!} \exp \left(\frac{-t^2}{\sigma^2+ Rt / 3}\right) \\
& \leq \begin{cases}\frac{n!}{(n-\ell)!}  \exp \left(-3 t^2 / 8 \sigma^2\right) & \text { for } t \leq \frac{\sigma^2}{R} ; \\
\frac{n!}{(n-\ell)!} \exp (-3 t / 8R) & \text { for } t \geq \frac{\sigma^2}{R}.\end{cases}
\end{aligned}
\end{equation}
Here, we have introduced the notation $\sigma^2$ for the ``variance'' \begin{equation}
    \sigma^2 = \norm{\sum_j \E_\QQ \left[\left(\cK^{(E)}\right)^2\right]}.
\end{equation} 
To compute this quantity, we note that \begin{equation}
\left(\cK^{(E)}\right)^2 \ket{S} = \begin{cases}
           (p/2)! \sum_{\pi \in \Pi_{S|E}} \ket{\pi(S)} \qquad \text{ if }   |S \cap E| = p/2, \\
            0 \qquad \text{ otherwise.}
        \end{cases}
    \end{equation}
Here, $\Pi_{S|E}$ denotes the set of permutations that reorder the $p / 2$ indices of $S$ lying in $E$ while fixing all other coordinates. 
Summing over all sets $E$ yields \begin{equation}
\sum_E \left(\cK^{(E)}\right)^2 \ket{S} = 
           \binom{n-\ell}{p/2}(p/2)!
           \sum_{P \in \binom{[\ell]}{p/2}}
           \sum_{\pi \in \Pi_{S|P}} \ket{\pi(S)} .
    \end{equation}
By a row sum bound, the spectral norm of the variance matrix is therefore \begin{equation}
    \sigma^2 \leq  \binom{n-\ell}{p/2}\binom{\ell}{p/2}(p/2)!^2 = (1-o(1)) \cdot n^{p/2}\ell^{p/2}.
\end{equation}
For the choice $t = \sqrt{6 \sigma^2 \ell \log(n)}$ we have $t = o( \sigma^2/R)$ for $p \geq 4$, and hence Matrix Bernstein implies, for large enough $n$, that \begin{equation}
     \Pr[\norm{\cK} \geq \sqrt{6 \sigma^2 \ell \log(n)}] \leq \frac{n!}{(n-\ell)!} \exp \left(-18 / 8 \cdot \ell \log n \right) =  o\lr{n^{-\ell}}.
\end{equation}
\end{proof}
\begin{remark}
    The constant $\sqrt{6}$ is chosen almost arbitrarily and can be improved by a tighter analysis. 
\end{remark}

\subsection{Analysis in the planted case}
We compute several necessary moments in terms of the quantity
\begin{align}
    \mu = \EE{f(x_S)^2} = k^{1-2p}\lr{(k-1)^p + (k-1)}.
\end{align}
We also compute moments in terms of the following quantity.
\begin{lemma}\label{lem:in_kikmean}
    Let $r \geq 1$ and $u \in [r]$; for $s_u \in \{0,1,\dots,p\}$, any sets $A_u, B_u \in \binom{[n]}{p}$ such that
    \begin{align}
        |A_u \cap A_v| = |B_u \cap B_v| = p\delta_{uv}, \qquad |A_u \cap B_u| = s_u, \qquad \sum_{u=1}^r |A_u \triangle B_u| = p
    \end{align}
    we have for $C = \sqcup_u A_u \triangle B_u \in \binom{[n]}{p}$ that
    \begin{align}
        \E_x \lr{\prod_{u=1}^r f(x_{A_u}) f(x_{B_u})} f(x_C) = k \prod_{u=1}^r g_{k,p}(s_u)
    \end{align}
    for
    \begin{align}
        g_{k,p}(s) = a^{2p-s} + 2(k-1)a^{p-s}b^p + (k-1)a^s b^{2(p-s)} + (k-1)(k-2)b^{2p-s}, \quad a = \frac{k-1}{k^2}, \quad b=-\frac{1}{k^2}.
    \end{align}
\end{lemma}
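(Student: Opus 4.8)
The plan is to expand $f$ in the form $f(x_S) = \sum_{i=1}^{k} \prod_{v \in S} h_i(v)$, where $h_i(v) := 1_{x_v = i} - 1/k$, and to reduce the expectation over uniform $x \in [k]^n$ to a product of single-coordinate moments. The only facts needed about the $h_i$ are $\E_{x_v}[h_i(v)] = 0$ and $\E_{x_v}[h_i(v) h_{i'}(v)] = a\cdot 1_{i=i'} + b\cdot 1_{i\neq i'}$, with $a = (k-1)/k^2$ and $b = -1/k^2$. Substituting this expansion into $\bigl(\prod_{u=1}^r f(x_{A_u}) f(x_{B_u})\bigr) f(x_C)$ attaches a color $i_0 \in [k]$ to $C$ and colors $i_u, i_u' \in [k]$ to $A_u, B_u$. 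Expanding all the color sums and using independence of the coordinates $x_v$, the expectation becomes a sum over $(i_0, i_1, i_1', \dots, i_r, i_r')$ of a product over coordinates $v$, whose $v$-th factor is the single-coordinate expectation of the product of all $h$-factors whose index set contains $v$; by the first identity, only coordinates carrying at least two $h$-factors survive.

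The combinatorial heart of the argument is to show that the hypotheses force the blocks $A_u \cup B_u$ to be pairwise disjoint, and that every surviving coordinate carries \emph{exactly} two $h$-factors. Write $D_u = A_u \cap B_u$ (of size $s_u$), $P_u = A_u \setminus B_u$ and $Q_u = B_u \setminus A_u$ (each of size $p - s_u$), so $A_u \cup B_u = D_u \sqcup P_u \sqcup Q_u$ and $A_u \triangle B_u = P_u \sqcup Q_u$. The hypotheses $A_u \cap A_v = B_u \cap B_v = \emptyset$ for $u \neq v$ kill every overlap among $\{D_u, P_u, Q_u\}$ and $\{D_v, P_v, Q_v\}$ except $P_u$ against $Q_v$ and $Q_u$ against $P_v$; but those remaining overlaps vanish because $P_u, Q_u, P_v, Q_v$ are disjoint pieces of the disjoint union $C = \sqcup_u (P_u \sqcup Q_u)$. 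Hence any surviving coordinate $v$ lies in $A_w \cup B_w$ for a unique block $w$, and is of exactly one of three types: $v \in D_w$ (not in $C$, factors $h_{i_w}(v) h_{i_w'}(v)$), $v \in P_w$ (in $C$, factors $h_{i_w}(v) h_{i_0}(v)$), or $v \in Q_w$ (in $C$, factors $h_{i_w'}(v) h_{i_0}(v)$); the count $|C| = \sum_u 2(p - s_u) = p$ confirms that the $P$'s and $Q$'s exhaust $C$, so the $h_{i_0}$-factors are paired off exactly.

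Applying the second identity, block $w$ contributes $\bigl(a 1_{i_w = i_w'} + b 1_{i_w \neq i_w'}\bigr)^{s_w}\bigl(a 1_{i_w = i_0} + b 1_{i_w \neq i_0}\bigr)^{p-s_w}\bigl(a 1_{i_w' = i_0} + b 1_{i_w' \neq i_0}\bigr)^{p-s_w}$, so the full expectation equals $\sum_{i_0 \in [k]} \prod_{u=1}^r G(s_u, i_0)$, where $G(s, i_0)$ is the sum of that expression over $i_u, i_u' \in [k]$. A color relabeling shows $G(s, i_0)$ does not depend on $i_0$, which yields the prefactor $k$. Fixing $i_0$ and splitting the double sum into the five cases $i_u = i_u' = i_0$, $i_u = i_0 \neq i_u'$, $i_u' = i_0 \neq i_u$, $i_u = i_u' \neq i_0$, and $i_u \neq i_u'$ with both $\neq i_0$ — multiplicities $1, k-1, k-1, k-1, (k-1)(k-2)$ — and reading off the exponents of $a$ and $b$ gives $G(s, i_0) = a^{2p-s} + 2(k-1) a^{p-s} b^p + (k-1) a^s b^{2(p-s)} + (k-1)(k-2) b^{2p-s} = g_{k,p}(s)$, as claimed.

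I expect the only real obstacle to be the bookkeeping in the second paragraph: confirming that no coordinate is shared between $A_u$ and $B_v$ with $u \neq v$, nor between $D_u$ and $C$, since precisely this is what guarantees two factors — hence a single $a$ or $b$ — at each coordinate. Everything afterward is a finite computation. As an alternative to the five-case enumeration, one can write $G(s, i_0) = \sum_{i, i'} M_{i i'}^{s} M_{i i_0}^{p-s} M_{i' i_0}^{p-s}$ for the $k \times k$ matrix $M = (a - b) I + b J$ (with $J$ the all-ones matrix) and use its eigendecomposition, but the direct case split is already short.
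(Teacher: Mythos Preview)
Your proof is correct and follows essentially the same approach as the paper's: expand each $f$ via $f(x_S)=\sum_i\prod_{t\in S}(1_{x_t=i}-1/k)$, use independence of coordinates to reduce to pairwise moments $a,b$, observe that every contributing coordinate is hit exactly twice, and then do the five-case color enumeration over $(i_u,i_u',i_0)$ to read off $g_{k,p}(s)$ and the prefactor $k$. Your bookkeeping is in fact more explicit than the paper's --- in particular, your verification that $P_u\cap Q_v=\emptyset$ for $u\neq v$ (via the disjoint-union hypothesis on $C$) justifies a step the paper states without proof.
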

\begin{proof}
    We compute
    \begin{align}
        \EE{\lr{1_{x_t=i} - \frac{1}{k}}\lr{1_{x_t=j} - \frac{1}{k}}} = \begin{cases}
            a & i=j\\
            b & i \neq j
        \end{cases}
    \end{align}
    for $a, b$ defined in the lemma statement. Observe that each index $t$ appears in exactly two sets: either $t \in A_u \cap B_u$, or $t \in A_u \triangle B_u$ and exactly one of $A_u, B_u$. Fix some $u$ and $c \in [k]$. Then the $u$th block contributes
    \begin{align}
        h_{k,p}(s_u, c) &= \sum_{i,j \in [k]} \lr{\prod_{t \in A_u \cap B_u} \EE{1_{x_t=c} - \frac{1}{k}}} \lr{\prod_{t \in A_u \cap C} \EE{1_{x_t=c} - \frac{1}{k}}} \lr{\prod_{t \in B_u \cap C} \EE{1_{x_t=c} - \frac{1}{k}}}.
    \end{align}
    The choice $i=j=c$ appears once and contributes $a^{2p-s_u}$; the choice $i=c, j\neq c$ appears $k-1$ times and contributes $a^{p-s_u}b^p$ and similarly for $i \leftrightarrow j$; the choice $i=j \neq c$ appears $k-1$ times and contributes $a^s b^{2(p-s_u)}$; the choice where $i,j,c$ are all distinct appears $(k-1)(k-2)$ times and contributes $b^{2p-s_u}$. Hence, $h_{k,p}(s_u,c) = g_{k,p}(s_u)$ as defined in the lemma statement.
    By independence of the coordinates in $x$, we can rewrite
    \begin{align}
        \E_x \lr{\prod_{u=1}^r f(x_{A_u}) f(x_{B_u})} f(x_C) = \sum_{c=1}^k \prod_{u=1}^r h_{k,p}(s_u, c) = k \prod_{u=1}^r g_{k,p}(s_u).
    \end{align}
\end{proof}

\begin{lemma}[Expectation of certificate energy]\label{lem:kmean}
    Assuming $\ell = o(\sqrt n)$, we have
    \begin{align}
        \E_\PP \bra{v}\cK\ket{v} = \beta C_{k,p} \binom{n}{p}^\lambda \mu^\lambda n^{p/2} \lambda^{p/2}(1+o(1))
    \end{align}
    for constant $C_{k,p}$ independent of $\lambda, n, \beta$.
\end{lemma}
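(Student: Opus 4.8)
\emph{Proof plan.} The plan is to first integrate out the hyperedge noise: under $\PP$, conditioning on the community labels $x$ gives $\E[A_S\mid x] = \epsilon f(x_S)/\sqrt{\theta_0(1-\theta_0)} = \beta f(x_S)$, so $\E_\PP[\cK\mid x] = \beta\,\cK(F)$, where $F$ is the symmetric $p$-tensor with $F_{\mu_1,\dots,\mu_p} = f(x_{\mu_1},\dots,x_{\mu_p})$. Since $\ket{v}$ also depends on $x$, it remains to compute $\beta\,\E_x\bra{v}\cK(F)\ket{v}$. Writing out \Cref{def:certificate} gives $\ket{v} = \sum_{(S_1,\dots,S_\lambda)\in\cC_\lambda}\big(\prod_u f(x_{S_u})\big)\,\ket{S_1}\otimes\cdots\otimes\ket{S_\lambda}$, and $\cK(F)$ connects two $\ell$-tuples only when they disagree in exactly $p/2$ positions, with matrix entry $F$ evaluated at the $p$-tuple of disagreeing coordinates. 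Expanding the quadratic form thus produces a sum over admissible pairs $C=(S_u),\,C'=(S'_u)\in\cC_\lambda$ weighted by $\prod_u f(x_{S_u})f(x_{S'_u})$ times a combinatorial factor coming from the block-symmetrization prefactors $(p!)^{-\lambda}$ and the number of ordered $\ell$-tuple representatives realizing each nonzero contribution; for the dominant pairs below a short count shows these factors cancel exactly, so such a pair simply contributes its weight times $f(x_E)$, where $E$ is the set of $p$ disagreeing coordinates.

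The heart of the argument is a classification of admissible pairs $(C,C')$ by how the $p/2$ disagreeing coordinate positions are distributed among the $\lambda$ blocks. The dominant configuration is the one in which the disagreements lie in $p/2$ \emph{distinct} blocks $u_1<\cdots<u_{p/2}$, one per block, with the $p/2$ incoming coordinates all fresh (pairwise distinct and disjoint from every coordinate of $C$). By \Cref{lem:coverlap} and $\ell=o(\sqrt n)$, the number of such pairs is $|\cC_\lambda|\cdot\binom{\lambda}{p/2}p^{p/2}\cdot(n-O(\ell))^{p/2} = \binom{n}{p}^\lambda\,n^{p/2}\lambda^{p/2}\cdot\frac{p^{p/2}}{(p/2)!}(1+o(1))$. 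For each such pair the coordinates of the $\lambda-p/2$ untouched blocks are disjoint from $E$ and from each other, so $\E_x$ factors as $\mu^{\lambda-p/2}$ times $\E_x\big[\prod_{i=1}^{p/2}f(x_{S_{u_i}})f(x_{S'_{u_i}})\,f(x_E)\big]$, and the latter is precisely the quantity of \Cref{lem:in_kikmean} with $r=p/2$ and $s_i=|S_{u_i}\cap S'_{u_i}|=p-1$ for all $i$, namely $k\,g_{k,p}(p-1)^{p/2}$. Collecting everything and including the leading $\beta$ yields the claimed expression with $C_{k,p}=\frac{k\,p^{p/2}\,g_{k,p}(p-1)^{p/2}}{(p/2)!\,\mu^{p/2}}$, a nonzero constant depending only on $k,p$ (evenness of $p$ is what prevents $g_{k,p}(p-1)$ from vanishing).

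The remaining work, which I expect to be the main obstacle, is to show every other admissible configuration is lower order. A pair whose disagreements occupy fewer than $p/2$ blocks loses at least a factor $\lambda^{-1}$ in its count relative to the dominant one, while its $x$-expectation changes only by a bounded factor (a product of $O(1)$ ratios of $\mu$ and $g_{k,p}(s)$ values, all constants since $k,p$ are); pairs in which an incoming coordinate collides with an existing one, or more generally with $|E|<p$, similarly lose a factor $n^{-1}$ in the count. Summing the $O(\lambda^{p/2})$ block-distribution patterns, each contributing $O\big(\binom{n}{p}^\lambda\mu^\lambda n^{p/2}\lambda^{p/2-1}\big)$ (together with $O(\binom{n}{p}^\lambda\mu^\lambda n^{p/2-1}\lambda^{p/2})$ from the collision patterns), shows the total error is $o\big(\binom{n}{p}^\lambda\mu^\lambda n^{p/2}\lambda^{p/2}\big)$. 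The delicate point is that, unlike in the Tensor PCA or $p$XORSAT analyses, $f$ has no product structure, so each pattern's $x$-expectation cannot be simplified away: it must be evaluated through \Cref{lem:in_kikmean} (checking its $\sqcup$-disjointness hypotheses in each case) and then shown, uniformly in $n$ and $\lambda$, to be too small to promote any lower-count pattern to leading order. Carefully organizing this case analysis, together with tracking the $(1+o(1))$ corrections from \Cref{lem:coverlap}, is where most of the effort lies.
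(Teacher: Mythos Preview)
Your proposal is correct and follows essentially the same approach as the paper: classify admissible pairs $(S,T)\in\cC_\lambda^2$ by the number $r$ of disagreeing blocks and the overlap profile $(s_1,\dots,s_r)$, use \Cref{lem:in_kikmean} for the $x$-expectation on the disagreeing part, and count. The only organizational difference is that the paper enumerates all $(r,s)$ with $1\le r\le p/2$ and $\sum_u(p-s_u)=p/2$, computes the exact counting factor $N_{\mathrm{disj}}(r,s)\asymp \binom{n}{p}^\lambda n^{p/2}\lambda^r$ for each, and then reads off the asymptotic (since this is a finite sum indexed by $p$, the $\lambda^{p/2}$ term from $r=p/2$ dominates and the rest is $O(\lambda^{-1})$ relative to it); you instead isolate the $r=p/2$ term upfront and bound the remainder. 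Your constant $C_{k,p}=\frac{k\,p^{p/2}g_{k,p}(p-1)^{p/2}}{(p/2)!\,\mu^{p/2}}$ is exactly what the paper's sum yields at leading order (note that for $r=p/2$ the constraint forces $s_u=p-1$ for all $u$, so there is a unique profile).

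One simplification for your error analysis: the ``collision'' configurations you flag, where an incoming coordinate of some $B_u\setminus A_u$ lands in another block of $S$, in fact contribute nothing. If the collision is with an agreeing block $S_j=T_j$, then $T_u$ and $T_j$ intersect, contradicting $T\in\cC_\lambda$; if it is with another disagreeing block $A_{u'}$, then the shared coordinate appears at distinct positions in the two ordered $\ell$-tuples, forcing a repeat in $S\ominus T$ and hence a zero tensor entry. So the paper's enumeration over fresh incoming coordinates is already exhaustive, and you do not need a separate $n^{-1}$-suppressed case.
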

\begin{proof}
    For $S, T \in \cC_\lambda$, we track the indices at which blocks agree and disagree by
    \begin{align}
        \ag(S, T) = \{i \in [\lambda] \,:\, S_i = T_i\}, \qquad \dis(S, T) = [\lambda] \setminus \ag(S, T).
    \end{align}
    Note that $\ket{v}$ is only supported on $S$ such that for all $i \neq j$, $|S_i \cap S_j| = 0$.
    Recall that each set $S_i \in S$ is encoded as state $\ket{S_i} = \frac{1}{\sqrt{p!}} \sum_{\pi \in \Sym(p)} \ket{\pi(j_1,\dots,j_p)}$ for $S_i = \{j_1,\dots,j_p\}$. Since matrix elements of $\cK$ are indexed by an ordered tuple $(j_1,\dots,j_\ell) \in \cT_n(\ell)$, we have matrix element
    \begin{align}
        \bra{S}\cK\ket{T} = \frac{1}{(p!)^\lambda} \sum_{\substack{\pi_1,\dots,\pi_\lambda\\\sigma_1,\dots,\sigma_\lambda}} \cK_{(\pi_1(S_1)|| \cdots ||\pi_\lambda(S_\lambda)),(\sigma_1(T_1)|| \cdots ||\sigma_\lambda(T_\lambda))}.
    \end{align}
    Since the elements of $\cK$ are either zero or elements of a symmetric tensor, we simply obtain the following:
    \begin{align}
        \E_\PP \bra{v}\cK\ket{v} &= \E_\PP \sum_{\substack{(S_1,\dots,S_\lambda)\in \cC_\lambda\\ (T_1,\dots,T_\lambda)\in \cC_\lambda}} \lr{\prod_{i=1}^\lambda f(x_{S_i})f(x_{T_i})} \bra{S_1,\dots,S_\lambda}\cK\ket{T_1,\dots,T_\lambda}\\
        &= \beta \E_x \sum_{\substack{(S_1,\dots,S_\lambda)\in \cC_\lambda\\ (T_1,\dots,T_\lambda)\in \cC_\lambda\\ |S\triangle T|=p}} \lr{\prod_{i \in \ag(S,T)}f(x_{S_i})^2} \lr{\prod_{i\in\dis(S,T)} f(x_{S_i})f(x_{T_i})} f(x_{S\ominus T}).
    \end{align}
    To evaluate the sum, we sum over the number $r = |\dis(S,T)| \in \{1,\dots,p/2\}$ of disagreeing blocks; label the blocks in $S$ as $A_1, \dots, A_r$ and the blocks in $T$ as $B_1, \dots, B_r$. We also sum over the number of overlaps in each disagreeing block, $s_u = |A_u \cap B_u|$; i.e., over the set
    \begin{align}
        \cS_r = \left\{(s_1, \dots, s_r) \in \{0,\dots,p\}^r \,:\, \sum_{u=1}^r s_u = rp - \frac{p}{2}\right\},
    \end{align}
    where the constraint ensures that
    \begin{align}
        p = |S \ominus T| = \sum_{u=1}^r |A_u \triangle B_u| = \sum_{u=1}^r 2(p-s_u).
    \end{align}
    This gives for a particular $r$ and $s$ the term
    \begin{align}
        \E_x \lr{\prod_{i \in \ag} f(x_{T_i})^2} \lr{\prod_{u=1}^r f(x_{A_u}) f(x_{B_u})} f(x_C) &= \lr{\prod_{i \in \ag} \E_x f(x_{T_i})^2} \E_x \lr{\prod_{u=1}^r f(x_{A_u}) f(x_{B_u})} f(x_C) \\
        &= \mu^{\lambda-r} \psi_r(s),
    \end{align}
    where $C = \sqcup_u A_u \triangle B_u \in \binom{[n]}{p}$ and $\psi_r(s)$ is given by \Cref{lem:in_kikmean}. For some counting factor $N_\disj(r,s)$, we thus have
    \begin{align}
        \E_\PP \bra{v}\cK\ket{v} &= \beta \sum_{r=1}^{p/2} \sum_{s\in\cS_r} N_\disj(r,s) \mu^{\lambda-r}\psi_r(s).
    \end{align}
    We compute the counting factor step by step.
    \begin{itemize}
        \item There are $|\cC_\lambda|$ ways to choose $(S_1, \dots, S_\lambda) \in \cC_\lambda$. To construct $(T_1, \dots, T_\lambda)$, we first decide on one of $\binom{\lambda}{r}$ positions to disagree on.
        \item Within each disagreeing position $u \in [r]$, we choose where the $s_u$ agreements in $A_u$ occur, giving a factor of $\binom{p}{s_u}$.
        \item To construct the corresponding $B_u$ that disagrees with $A_u$ in the remaining $p-s_u$ positions, we need to choose the values in $B_u \setminus A_u$ out of the $n-\lambda p$ remaining values not in $S$. We need to choose a total of $p/2$ values across all $B_u \setminus A_u$, giving $\binom{n-\lambda p}{p/2}$. They are then assigned into blocks with occupancies $p-s_u$, giving a factor of $(p/2)! / \prod_{u=1}^r (p-s_u)!$.
    \end{itemize}
    This gives
    \begin{align}
        N_\disj(r,s) = |\cC_\lambda| \binom{\lambda}{r} \binom{n-\lambda p}{p/2} (p/2)! \prod_{u=1}^r \frac{\binom{p}{s_u}}{(p-s_u)!} = \binom{n}{p}^\lambda \lr{1 + O\lr{\frac{\lambda^2}{n}}} \binom{\lambda}{r} \binom{n-\lambda p}{p/2} (p/2)! \prod_{u=1}^r \frac{\binom{p}{s_u}}{(p-s_u)!}.
    \end{align}
    This gives for asymptotically large $\lambda = o(\sqrt n)$ the final quantity
    \begin{align}
        \E_\PP \bra{v}\cK\ket{v} = \beta C_{k,p} \binom{n}{p}^\lambda \mu^\lambda n^{p/2} \lambda^{p/2} (1+o(1)).
    \end{align}
\end{proof}

\begin{lemma}[Variance of certificate energy]\label{lem:kvar}
    For any $\ell = o(\sqrt n)$ and $\beta = \Omega(n^{-p/2} \log n)$, we have
    \begin{align}
        \Var \bra{v}\cK\ket{v} = o\lr{\lr{\E \bra{v}\cK\ket{v}}^2}.
    \end{align}
\end{lemma}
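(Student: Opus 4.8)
The plan is to split the randomness of $\bY$ into the community labels $x$ and the conditionally independent edge noise, and to apply the law of total variance. Collecting the symmetrization of the basis states $\ket{S},\ket{T}$ into bounded combinatorial coefficients $N_{S,T}\in[0,1]$ (a symmetric tensor has only one value on each hyperedge, so $\bra{S}\cK\ket{T}=N_{S,T}A_{S\triangle T}$ whenever $|S\triangle T|=p$ and is $0$ otherwise), the computation underlying \Cref{lem:kmean} shows
\begin{align}
    \bra{v}\cK\ket{v} = \sum_{E\in\binom{[n]}{p}} A_E\, G_E(x), \qquad G_E(x) := \sum_{\substack{(S,T)\in\cC_\lambda^2\,:\,S\triangle T = E}} N_{S,T}\, c_S\, c_T, \qquad c_S := \prod_{i=1}^{\lambda} f(x_{S_i}),
\end{align}
so $\bra{v}\cK\ket{v}$ is a linear form in the independent edge variables $A_E$ with coefficients depending only on $x$. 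Conditioned on $x$ the $A_E$ are independent with $\E[A_E\mid x]=\beta f(x_E)$ and $\Var(A_E\mid x)=\sigma_E^2\le 4$ (using $\epsilon\le\theta_0\le 1/2$, which forces $\theta_0+\epsilon f(x_E)\le 2\theta_0$), so the law of total variance yields
\begin{align}
    \Var_\PP\!\big(\bra{v}\cK\ket{v}\big) = \beta^2\,\Var_x\!\Big(\textstyle\sum_E f(x_E)\,G_E(x)\Big) + \E_x\!\Big[\textstyle\sum_E \sigma_E^2\, G_E(x)^2\Big] =: (\mathrm I) + (\mathrm{II}).
\end{align}
Since \Cref{lem:kmean} gives $\E_\PP\bra{v}\cK\ket{v} = \beta\,\E_x\sum_E f(x_E)G_E(x) = \beta\,C_{k,p}\,M\,n^{p/2}\lambda^{p/2}(1+o(1))$ with $M := \binom{n}{p}^\lambda\mu^\lambda$, it suffices to show $(\mathrm I),(\mathrm{II}) = o\big(\beta^2 M^2 n^p\lambda^p\big)$.

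I would bound $(\mathrm{II})$ first, since this is where the hypothesis $\beta=\Omega(n^{-p/2}\log n)$ enters. As $\sigma_E^2=O(1)$, it suffices to control $\E_x\sum_E G_E(x)^2 = \sum_E\sum_{(S,T),(S',T')} N_{S,T}N_{S',T'}\,\E_x[c_Sc_Tc_{S'}c_{T'}]$, the inner sum over configuration pairs with $S\triangle T = S'\triangle T' = E$. The key structural input is that $\E_x[1_{x_j=i}-1/k]=0$ forces the $x$-expectation of a product of $f(x_R)$'s to vanish unless every coordinate lies in at least two of the blocks; one checks this holds automatically here, and a \Cref{lem:in_kikmean}-type accounting gives $|\E_x[c_Sc_Tc_{S'}c_{T'}]|\le \mu^{2\lambda}(1/\mu)^{O(p)}O_{k,p}(1)$, with each unperturbed (``agreeing'') block contributing a factor $\mu$. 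Counting configurations exactly as in \Cref{lem:kmean} --- the sum is dominated by the pattern with $p/2$ single-coordinate disagreements in each pair, and by pairs $(S,T),(S',T')$ sharing only the (forced) coordinates of $E$ --- yields $\E_x\sum_E G_E(x)^2\lesssim_{k,p} M^2\lambda^p$. Hence $(\mathrm{II})/\big(\E_\PP\bra{v}\cK\ket{v}\big)^2\lesssim_{k,p} 1/(\beta^2 n^p)=o(1)$, using precisely $\beta^2 n^p = \Omega(\log^2 n)\to\infty$ (in fact any $\beta=\omega(n^{-p/2})$ would do).

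The term $(\mathrm I)$ is $\beta$-free and is a truncated-second-moment estimate over $x$. Expanding $\E_x[(\sum_E f(x_E)G_E)^2] = \sum_{(S,T),(S',T')} N_{S,T}N_{S',T'}\,\E_x[c_Sc_Tc_{S'}c_{T'} f(x_{S\triangle T}) f(x_{S'\triangle T'})]$ (now with $S\triangle T, S'\triangle T'$ unconstrained) and grouping by the number $j$ of coordinates shared between the block-sets $\{S,T\}$ and $\{S',T'\}$: the $j=0$ (vertex-disjoint) terms factorize and sum to $\big(\E_x\sum_E f(x_E)G_E\big)^2(1-o(1))$, cancelling against the subtracted square; for $j\ge 1$, the number of such pairs is at most a $\big(C_{k,p}\lambda^2/n\big)^j$ fraction of all pairs, while --- by the same \Cref{lem:in_kikmean}-type accounting, which entangles only $O(p+j)$ extra blocks --- the per-pair expectation stays $\le \mu^{2\lambda}(1/\mu)^{O(p+j)}O_{k,p}(1)$, and the fully-overlapping ``diagonal'' $(S,T)=(S',T')$ carries only $\mu_4^{O(\lambda)}$ against a count suppressed by $\binom{n}{p}^{-\lambda}$, hence is $\exp(-\Omega(\lambda\log n))$-negligible. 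Summing the geometric series over $j\ge 1$ and using $\lambda=o(\sqrt n)$ gives $\Var_x(\sum_E f(x_E)G_E)\lesssim_{k,p} (\lambda^2/n)\,M^2 n^p\lambda^p = o(M^2 n^p\lambda^p)$, which closes the argument.

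The main obstacle is the combinatorics behind $(\mathrm I)$ (and, in milder form, $(\mathrm{II})$): one must carefully enumerate the joint overlap patterns of the (up to) $4\lambda$ blocks in $(S,T,S',T')$ and verify that each extra shared coordinate between the two halves costs a genuine polynomial-in-$n$ factor in the configuration count while inflating the exponent of $\mu$ only by $O(1)$ per entangled block. This is more delicate than the single expectation of \Cref{lem:kmean}, since $f$ has no product structure and the disagreement patterns of the two halves interact; making the bound $|\E_x[c_Sc_Tc_{S'}c_{T'} f_E f_{E'}]|\le \mu^{2\lambda}(1/\mu)^{O(p+j)}O_{k,p}(1)$ rigorous with the exponent controlled linearly in the number of entangled blocks is the crux of the proof.
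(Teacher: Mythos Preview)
Your approach is correct and, at the level of the decomposition, equivalent to the paper's: your terms $(\mathrm{II})$ and $(\mathrm{I})$ are exactly the paper's ``diagonal'' ($S\ominus T=S'\ominus T'$) and ``off-diagonal'' contributions, just packaged via the law of total variance rather than by expanding the second moment directly. The counting you outline (group configuration pairs by the number of shared coordinates, exploit the $(\lambda^2/n)^j$ suppression, and treat the forced overlap at $E$ in the diagonal term) is also the paper's strategy.

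Where the paper differs---and this is worth noting because it neutralizes precisely the ``main obstacle'' you flag---is in how it bounds the $x$-expectations. Instead of pursuing a \Cref{lem:in_kikmean}-type accounting of the full product $c_Sc_Tc_{S'}c_{T'}f(x_{S\triangle T})f(x_{S'\triangle T'})$ and tracking the exponent of $\mu$ block by block, the paper simply uses the deterministic bound $|f|\le 1$ on the (at most $p$) disagreeing blocks in each of $(S,T)$ and $(S',T')$ and on the two edge factors, and then applies Cauchy--Schwarz to the remaining product of agreeing blocks $\prod_{i\in\ag}f(x_{S_i})^2\prod_{i\in\ag'}f(x_{S'_i})^2$, obtaining $\le \mu^{2\lambda-p}(\alpha/\mu^2)^r$ where $\alpha=\E_x f^4$ and $r$ is the number of shared coordinates. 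This loses only a constant factor $\mu^{-O(p)}$ but completely sidesteps the delicate joint-pattern analysis you describe; no control on how the two disagreement patterns interact is needed. The resulting sums over $r$ then close via binomial series exactly as you anticipate, yielding the same final estimate $\Var \lesssim (\lambda^2/n + 1/(\beta^2 n^p))\cdot(\E\bra{v}\cK\ket{v})^2$.
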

\begin{proof}
    We expand the second moment as
    \begin{align}
        \E_\PP \bra{v}\cK\ket{v}^2 &= \E_\PP \sum_{\substack{(S_1,\dots,S_\lambda)\in \cC_\lambda\\ (T_1,\dots,T_\lambda)\in \cC_\lambda}} \sum_{\substack{(S'_1,\dots,S'_\lambda)\in \cC_\lambda\\ (T'_1,\dots,T'_\lambda)\in \cC_\lambda}} \lr{\prod_{i=1}^\lambda f(x_{S_i})f(x_{T_i}) f(x_{S'_i})f(x_{T'_i})} \nonumber\\
        &\qquad \times \bra{S_1,\dots,S_\lambda}\cK\ket{T_1,\dots,T_\lambda} \bra{S'_1,\dots,S'_\lambda}\cK\ket{T'_1,\dots,T'_\lambda}\\
        &= \E_\PP \sum_{\substack{(S_1,\dots,S_\lambda)\in \cC_\lambda\\ (T_1,\dots,T_\lambda)\in \cC_\lambda\\ |S\triangle T|=p}} \sum_{\substack{(S'_1,\dots,S'_\lambda)\in \cC_\lambda\\ (T'_1,\dots,T'_\lambda)\in \cC_\lambda\\ |S'\triangle T'|=p}} \lr{\prod_{i \in \ag(S,T)}f(x_{S_i})^2}\lr{\prod_{i \in \ag(S',T')}f(x_{S'_i})^2}\nonumber\\
        &\qquad \times  \lr{\prod_{i\in\dis(S,T)} f(x_{S_i})f(x_{T_i})} \lr{\prod_{i\in\dis(S',T')} f(x_{S'_i})f(x_{T'_i})} \frac{(Y_{S\ominus T}-\theta_0) (Y_{S'\ominus T'}-\theta_0)}{\theta_0(1-\theta_0)}.
    \end{align}
    Let $r(S, T, S', T')$ denote the number of coordinates in $[n]$ that are in both $\cup_i S_i \cup_i T_i$ and $\cup_i S'_i \cup_i T'_i$.
    We deterministically bound
    \begin{align}
        \left|\lr{\prod_{i\in\dis(S,T)} f(x_{S_i})f(x_{T_i})} \lr{\prod_{i\in\dis(S',T')} f(x_{S'_i})f(x_{T'_i})}\right| \leq 1
    \end{align}
    and compute
    \begin{align}
        \E_{\PP(x)} \frac{(Y_{S\ominus T}-\theta_0) (Y_{S'\ominus T'}-\theta_0)}{\theta_0(1-\theta_0)} = \begin{cases}
            \beta^2 f(x_{S \ominus T}) f(x_{S' \ominus T'}) & S \ominus T \neq S' \ominus T'\\
            1 + \frac{1-2\theta_0}{\sqrt{\theta_0(1-\theta_0)}}\beta f(x_{S\ominus T}) & S\ominus T = S' \ominus T'.
        \end{cases}
    \end{align}
    We refer to the first case as ``off-diagonal" terms and the second case as ``diagonal" terms. We will upper-bound each case in a similar fashion, starting with the diagonal terms. For convenience, we will set $m = \lambda p + p/2$.
    
    Let $\cP$ denote the set of pairs $(S,T)$ with $S, T \in \cC_\lambda$ such that $|S \triangle T| = p$. Let $U = \cup_i S_i \cup_i T_i$ and $U' = \cup_i S'_i \cup_i T'_i$; then for $(S, T) \in \cP$ we have $|U| = m$. The condition $r(S,T,S',T')=r$ is equivalent to $|U\cap U'|=r$. Define for $C \in \binom{[n]}{p}$ counting factor
    \begin{align}
        A_U(C) = \#\{(S, T) \in \cP \,:\, S \ominus T=C, \, \cup_i S_i \cup_i T_i = U\}.
    \end{align}
    Then for $\cU_C = \{U\subset [n]\,:\, |U|=m, \, C \subset U\}$ we define counting factor
    \begin{align}
        N_\diag(r) &= \#\{(S,T,S',T') \in \cP^2 \,:\, S \ominus T = S' \ominus T', \, r(S,T,S',T')=r\} \\
        &= \sum_{C\in\binom{[n]}{p}} \sum_{U\in\cU_C} A_U(C) \sum_{U'\in\cU_C \,:\,|U\cap U'|=r} A_{U'}(C).
    \end{align}
    We count $|\cU_C| = \binom{n-p}{m-p}$ so
    \begin{align}
        A_U(C) = \frac{\sum_{U \in \cU_C} A_U(C)}{|\cU_C|} = \frac{A(C)}{\binom{n-p}{m-p}}
    \end{align}
    for
    \begin{align}
        A(C) = \#\{(S, T) \in \cP \,:\, S \ominus T=C\}.
    \end{align}
    We also count
    \begin{align}
        \#\{U' \in \cU_C \,:\, |U \cap U'|=r\} = \binom{m-p}{r - p} \binom{n-m}{m - r}
    \end{align}
    since outside the $p$ elements of $C$, we must choose $r-p$ elements from $U\setminus C$ to complete the overlap, and then the remaining elements of $U'$ from $[n]\setminus U$. Combining everything gives
    \begin{align}
        N_\diag(r) = \sum_{C\in\binom{[n]}{p}} \frac{A(C)^2}{\binom{n-p}{m-p}} \binom{m-p}{r-p}\binom{n-m}{m-r}.
    \end{align}
    We finish evaluating this with
    \begin{align}
        \sum_{C \in \binom{[n]}{p}} A(C)^2 = \frac{|\cP|^2}{\binom{n}{p}} = \Theta\lr{\binom{n}{p}^{2\lambda-1}n^p\lambda^p}
    \end{align}
    using the $|\cP| = \Theta(|\cC_\lambda|n^{p/2}\lambda^{p/2})$ from \Cref{lem:kmean} and $|\cC_\lambda| = \binom{n}{p}^\lambda$ from \Cref{lem:coverlap}. Hence, we find
    \begin{align}
        N_\diag(r) &= \begin{cases}
            0 & r < p\\
            O\lr{\binom{n}{p}^{2\lambda} n^{p-r} \lambda^p \binom{m-p}{r-p}} & r \geq p.
        \end{cases}
    \end{align}
    We now similarly bound the number of off-diagonal terms as
    \begin{align}
        N_\mathrm{offdiag}(r) &= \#\{(S,T,S',T') \in \cP^2 \,:\, S\ominus T \neq S' \ominus T',\, r(S,T,S',T')=r\} \\
        &\leq \#\{(S,T,S',T') \in \cP^2 \,:\, r(S,T,S',T')=r\}\\
        &\leq |\cP|^2 \binom{m}{r} \lr{\frac{m}{n}}^r\\
        &= O\lr{\binom{n}{p}^{2\lambda} n^p \lambda^p \binom{m}{r} \lr{\frac{m}{n}}^r}
    \end{align}
    by a union bound. In both the diagonal and off-diagonal cases we have by Cauchy-Schwarz, when $r(S,T,S',T')=r$, that
    \begin{align}
        \left|\E_x \lr{\prod_{i \in \ag(S,T)}f(x_{S_i})^2}\lr{\prod_{i \in \ag(S',T')}f(x_{S'_i})^2}\right| &\leq \mu^{|\ag(S,T)| + |\ag(S',T')| - 2r}\alpha^r\\
        &\leq \mu^{2\lambda - p}\lr{\frac{\alpha}{\mu^2}}^r
    \end{align}
    since $|\ag(S,T)| \in \{\lambda-p/2, \dots, \lambda-1\}$ and $\mu < 1$. Only $r \geq 1$ terms contribute to the variance, giving
    \begin{align}
        \Var \bra{v}\cK\ket{v} &\leq \mu^{2\lambda - p} \sum_{r=1}^m \lr{\frac{\alpha}{\mu^2}}^r \lr{\beta^2 N_\mathrm{offdiag}(r) + N_\diag(r)}\\
        &\leq \mu^{2\lambda - p} \binom{n}{p}^\lambda n^p \lambda^p \lr{\beta^2 \sum_{r=1}^m \binom{m}{r} \lr{\frac{m\alpha}{n\mu^2}}^r + \sum_{r=p}^m  \binom{m-p}{r-p} \lr{\frac{\alpha}{n \mu^2}}^r}\\
        &\leq \mu^{2\lambda - p} \binom{n}{p}^\lambda n^p \lambda^p \lr{\left[\beta^2 \lr{1 + \frac{m\alpha}{n\mu^2}}^m-1\right] + \lr{\frac{\alpha}{n \mu^2}}^p \lr{1+\frac{\alpha}{n \mu^2}}^{m-p}}\\
        &\leq \lr{\beta \binom{n}{p}^\lambda \mu^\lambda n^{p/2} \lambda^{p/2}}^2 \cdot O\lr{\frac{\lambda^2}{n} + \frac{1}{\beta^2n^p}}.
    \end{align}
    This gives $\Var \bra{v}\cK\ket{v} = o\lr{\lr{\E \bra{v}\cK\ket{v}}^2}$ for all $\lambda = o(\sqrt n)$ and $\beta \gg n^{-p/2}$; we set $\beta = \Omega(n^{-p/2}\log n)$ as a sufficient condition in the lemma statement.
\end{proof}

\begin{corollary}
\label{cor:planted_energy}
    For any $\ell = o(\sqrt n)$ and  $\beta = \Omega(n^{-p/2} \log n)$, we have with probability $1-o(1)$ that
    \begin{align}
        \norm{\cK} \geq C_{p,k} \beta n^{p/2} \ell^{p/2},
    \end{align} where $C_{p,k}$ is independent of $n$ and $\ell$.
    Note that the information-theoretic threshold is $\beta \gtrsim n^{(1-p)/2}$, which satisfies the above condition on $\beta$.
\end{corollary}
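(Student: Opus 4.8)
The plan is to certify a large eigenvalue of $\cK$ in the planted case via the Rayleigh quotient of the certificate vector $\ket{v}$ of \Cref{def:certificate}, which reduces everything to the two moment estimates already in hand. First I would note that $\cK$ is real symmetric: since $T$ is a symmetric tensor, $\cK_{S,V} = T_{S\ominus V} = T_{V\ominus S} = \cK_{V,S}$, so $\norm{\cK} \geq \lambda_{\max}(\cK) \geq \bra{v}\cK\ket{v}/\braket{v}{v}$ whenever $\ket{v}\neq 0$. It then suffices to show that, with probability $1-o(1)$, the numerator $\bra{v}\cK\ket{v}$ is at least a constant fraction of $\E_\PP\bra{v}\cK\ket{v}$ while the denominator $\braket{v}{v}$ is at most a constant multiple of $\E_\PP\braket{v}{v}$, and finally to divide the two expectations.

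For the numerator, \Cref{lem:kmean} gives $\E_\PP\bra{v}\cK\ket{v} = \beta\, C_{k,p}\binom{n}{p}^\lambda\mu^\lambda n^{p/2}\lambda^{p/2}(1+o(1))$, and \Cref{lem:kvar} --- whose hypotheses $\ell = o(\sqrt n)$ and $\beta = \Omega(n^{-p/2}\log n)$ are precisely those of the corollary --- bounds its variance by $o((\E_\PP\bra{v}\cK\ket{v})^2)$, so Chebyshev's inequality pins $\bra{v}\cK\ket{v} \geq \tfrac12\E_\PP\bra{v}\cK\ket{v}$ with probability $1-o(1)$. For the denominator, orthonormality of the symmetrized basis vectors indexed by the collision-free set $\cC_\lambda$ gives $\braket{v}{v} = \sum_{(S_1,\dots,S_\lambda)\in\cC_\lambda}\prod_{i=1}^\lambda f(x_{S_i})^2$, and disjointness of the blocks makes the factors independent, so $\E_\PP\braket{v}{v} = |\cC_\lambda|\mu^\lambda = \binom{n}{p}^\lambda\mu^\lambda(1+O(\lambda^2/n))$ by \Cref{lem:coverlap}. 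I would then repeat the second-moment bookkeeping of \Cref{lem:kvar} in a simpler form (no Kikuchi matrix and no $\beta$): sum over the number $r \geq 1$ of coordinates shared by two tuples $S, S' \in \cC_\lambda$, bound the at most $O(r)$ correlated blocks crudely by $\norm{f}_\infty^{O(r)}$ and each remaining block by $\mu$, and insert the overlap count $\#\{(C,C')\,:\,r(C,C')=r\}\leq|\cC_\lambda|^2\binom{\lambda p}{r}(\lambda p/n)^r$ from \Cref{lem:coverlap}; this yields $\E_\PP\braket{v}{v}^2 = (\E_\PP\braket{v}{v})^2(1+O(\lambda^2/n))$, hence a variance that is $o$ of the squared mean, and Chebyshev again gives $\braket{v}{v} \leq \tfrac32\E_\PP\braket{v}{v}$ --- in particular $\ket{v}\neq 0$ --- with probability $1-o(1)$.

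Intersecting the two $1-o(1)$ events and dividing,
\begin{align}
    \norm{\cK} \geq \frac{\bra{v}\cK\ket{v}}{\braket{v}{v}} \geq \frac{\tfrac12\,\beta\,C_{k,p}\binom{n}{p}^\lambda\mu^\lambda n^{p/2}\lambda^{p/2}(1+o(1))}{\tfrac32\binom{n}{p}^\lambda\mu^\lambda(1+O(\lambda^2/n))} = \Omega\lr{\beta\, n^{p/2}\lambda^{p/2}},
\end{align}
and substituting $\lambda = \ell/p$ and absorbing the $p$-dependent factor $p^{-p/2}$ into the constant gives $\norm{\cK} \geq C_{p,k}\,\beta\, n^{p/2}\ell^{p/2}$ with $C_{p,k}$ independent of $n$ and $\ell$. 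The final remark is immediate, since $\beta \gtrsim n^{(1-p)/2} = n^{1/2}\cdot n^{-p/2}$ dominates $n^{-p/2}\log n$ for large $n$, so the information-theoretically solvable regime falls inside the range treated here.

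The real content of \Cref{thm:kikuchi_bosonic} lives in \Cref{lem:kmean,lem:kvar}, so at the level of this corollary the only delicate point is the denominator estimate: one cannot afford the naive deterministic bound $\braket{v}{v}\leq|\cC_\lambda|\norm{f}_\infty^{2\lambda}$, because $\norm{f}_\infty^2 > \mu$ would introduce a spurious $\exp(\Theta(\ell))$ factor that an $\ell$-independent constant $C_{p,k}$ cannot absorb. Hence one genuinely needs the concentration of $\braket{v}{v}$ around $|\cC_\lambda|\mu^\lambda$ from the extra (routine) variance estimate above; everything else is an application of Chebyshev and arithmetic.
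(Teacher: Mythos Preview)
Your approach is correct and is exactly the paper's: the paper's one-line proof reads ``Follows immediately from applying Chebyshev's inequality to the quantity $\bra{v}\cK\ket{v}$ (using the mean and variance of~\Cref{lem:kmean} and \Cref{lem:kvar}) and union bounding with the concentration bound of $\norm{v}^2$ shown in \Cref{lem:certnorm}.'' The only cosmetic difference is that the paper packages your denominator argument as a separate forward-referenced lemma (\Cref{lem:certnorm}, stated in the quantum section with the bound $(\alpha/\mu^2)^r$ via Cauchy--Schwarz in place of your $\norm{f}_\infty^{O(r)}$), while you rederive the same second-moment estimate inline; your observation that the naive deterministic bound on $\braket{v}{v}$ would cost an unacceptable $\exp(\Theta(\ell))$ is exactly why that lemma is needed.
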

\begin{proof}
    Follows immediately from applying Chebyshev's inequality to the quantity $\bra{v}\cK\ket{v}$ (using the mean and variance of~\Cref{lem:kmean} and \Cref{lem:kvar}) and union bounding with the concentration bound of $\norm{v}^2$ shown in \Cref{lem:certnorm}.
\end{proof}

\subsection{Proof of the detection threshold}

The proof of \Cref{thm:kikuchi_bosonic} follows from combining \Cref{lem:bosonic_nullknorm} with \Cref{cor:planted_energy} proven below.
\begin{proof}[Proof of \Cref{thm:kikuchi_bosonic}] By \Cref{cor:planted_energy} and \Cref{lem:bosonic_nullknorm}, we have
  \begin{align}
        \Pr_\PP \left[\lambdamax{\cK} \leq C'_{p,k} \cdot \beta n^{p/2} \ell^{p/2}\right] = o(1), \quad \text{ and }  \quad \Pr_\QQ \left[  \norm{\cK} \geq  \sqrt{6 n^{p/2}  \ell^{1+p/2} \log(n)} \right] \leq o(1).
    \end{align}
Hence estimating the largest eigenvalue of the Kikuchi matrix up to an (arbitrary) multiplicative factor $0.49$ achieves detection provided \begin{equation}
    C'_{p,k} \cdot \beta n^{p/2} \ell^{p/2} \geq 2 \sqrt{6 n^{p/2}  \ell^{1+p/2} \log(n)}.
\end{equation} 
Rewriting this in terms of the signal to noise ratio $\beta = \frac{\eps}{\sqrt{\theta_0(1-\theta_0)}}$ shows that we achieve detection for any SNR \begin{align}
    \beta  &=\frac{2 \sqrt{6}}{C'_{p,k}} \sqrt{\frac{1}{n^{p/2}\ell^{p/2-1}} \log(n)},
\end{align} and setting $C' = \frac{2}{3}C_{k,p}$ completes the proof.
For $p > 2$, increasing $\ell$ thus allows detection at a smaller noise rate, and the detection threshold smoothly interpolates between the computational threshold $\beta \sim n^{-p/4}$ and the information-theoretic threshold $\beta \sim n^{(1-p)/2}.$
\end{proof}
In the next section, we show that our Kikuchi-based algorithm is optimal (up to a $\sqrt{\log(n)}$ factor, which can likely be removed by using stronger bounds than Matrix Bernstein).

\section{Lower bounds against low coordinate degree functions}\label{sec:low}
We now establish that our classical Kikuchi-based algorithm is asymptotically optimal (up to a $\sqrt{\log(n)}$ factor, which can likely be removed via a tighter concentration bound than Matrix Bernstein).
\subsection{Technical preliminaries}

We use the framework of~\cite{kunisky2024low} to show our LCDF lower bounds. These address a large class of models known as \emph{generalized stochastic block models} (GSBMs).

\begin{definition}[GSBM]\label{def:gsbm}
Let $p \geq 2$, let $k,n \geq1$, and let $\Omega$ be a measurable space. A generalized stochastic block model is specified by, for each $a \in [k]^p$, a probability measure $\mu_a$ on $\Omega$. Write
\begin{align}
    \mu_\mathrm{avg} = \frac{1}{k^p} \sum_{a \sim \unif([k]^p)} \mu_a
\end{align}
so that $\mu_\mathrm{avg}$ is another probability measure on $\Omega$. The GSBM then consists of the following two probability measures over $Y \in \Omega^{\binom{[n]}{p}}$:
\begin{enumerate}
    \item Under $\QQ$, draw $Y \sim \QQ$ with $Y_S \sim \mu_\mathrm{avg}$ independently for each $S \in \binom{[n]}{p}$.
    \item Under $\PP$, first draw $x \sim \unif([k]^n)$. Then for each $S = \{s_1 < \cdots < s_p\} \in \binom{[n]}{p}$, draw $Y_S \sim \mu_{x_{s_1},\dots,x_{s_p}}$ independently.
\end{enumerate}
\end{definition}

The lower bounds apply to GSBMs that satisfy the following assumptions.

\begin{assumption}\label{ass:gsbm}
A GSBM must satisfy the following.
\begin{enumerate}
    \item (Non-trivial.) There exists $a \in [k]^p$ such that $\mu_a \neq \mu_\mathrm{avg}$.
    \item (Regular.) For all $a \in [k]^p$, the likelihood ratio $d\mu_a/d\mu_\mathrm{avg}$ belongs to $L^2(\mu_\mathrm{avg})$.
    \item (Weakly symmetric.) For all $a, b \in [k]^p$ and all permutations of $\sigma \in \mathrm{Sym}([p])$,
    \begin{align}
        \E_{y\sim\mu_\mathrm{avg}}\left[ \frac{d\mu_{(a_1,\dots,a_p)}}{d\mu_\mathrm{avg}}(y) \cdot \frac{d\mu_{(b_1,\dots,b_p)}}{d\mu_\mathrm{avg}}(y)\right] = \E_{y\sim\mu_\mathrm{avg}}\left[ \frac{d\mu_{(a_{\sigma(1)},\dots,a_{\sigma(p)})}}{d\mu_\mathrm{avg}}(y) \cdot \frac{d\mu_{(b_{\sigma(1)},\dots,b_{\sigma(p)})}}{d\mu_\mathrm{avg}}(y)\right].
    \end{align}
    Note this is implied by a strongly symmetric GSBM, i.e., if for all $a \in [k]^p$ and $\sigma \in \mathrm{Sym}([p])$, $\mu_{a_1,\dots,a_p} = \mu_{a_{\sigma(1)}, \dots, a_{\sigma(p)}}$.
\end{enumerate}
\end{assumption}

The key object in showing the lower bounds is the characteristic tensor, which captures $\chi^2$ divergence on the diagonal and a cross-term $\chi^2$-like quantity between different $\mu_a$ relative to $\mu_\mathrm{av}$.

\begin{definition}[Characteristic tensor]
    For a GSBM specified by $(\mu_a)_{a\in[k]^p}$, we define its characteristic tensor to be $T = T^{(p)} \in (\R^{[k]\times[k]})^{\otimes p}$ having entries
    \begin{align}
        T_{(a_1,b_1),\dots,(a_p,b_p)} = \frac{1}{p!} \E_{y\sim\mu_\mathrm{avg}}\left[\lr{\frac{d\mu_{(a_1,\dots,a_p)}}{d\mu_\mathrm{avg}}(y)-1}\lr{\frac{d\mu_{(b_1,\dots,b_p)}}{d\mu_\mathrm{avg}}(y)-1}\right].
    \end{align}
\end{definition}

To identify information captured by marginals, we require the ability to contract the tensor.

\begin{definition}[Partial tensor contraction]
    Let $T \in (\R^N)^{\otimes p}$ be a symmetric tensor and let $v_1,\dots,v_m \in \R^N$ for some $1 \leq m \leq p$. We write $T[v_1,\dots,v_m,\cdot,\dots,\cdot] \in (\R^N)^{\otimes p-m}$ for the tensor with entries
    \begin{align}
        (T[v_1,\dots,v_m,\cdot,\dots,\cdot])_{i_1,\dots,i_{p-m}} = \sum_{j_1,\dots,j_m=1}^N T_{j_1,\dots,j_m,i_1,\dots,i_{p-m}}(v_1)_{j_1}\cdots (v_m)_{j_m}.
    \end{align}
\end{definition}

The SNR-computation tradeoff is then established by a quantity known as \emph{marginal order}.

\begin{definition}[Marginal order]\label{def:morder}
    For characteristic tensor $T^{(p)}$ of a GSBM as above, define the sequence of tensors $T^{(p-j)} \in (\R^{[k]\times[k]})^{\otimes(p-j)}$ by
    \begin{align}
        T^{(p-j)} = \frac{1}{k^{2j}} T^{(p)}[1,\dots,1,\cdot,\dots,\cdot]
    \end{align}
    for $j$ entries 1 and $p-j$ entries $\cdot$, where $1$ is the vector all of whose entries are 1 (in this case of dimension $k^2$).
    The marginal order of a GSBM is the smallest $p_*$ for which $T^{(p_*)} \neq 0$.
\end{definition}

Equivalently, the marginal characteristic tensor $T^{(p-j)}$ is the characteristic tensor of another GSBM formed by marginalizing $\mu_a$ of the original GSBM. That is, it is the characteristic tensor of a model with
\begin{align}
    \mu_{a_1,\dots,a_{p-j}}^{(p-j)} = \frac{1}{k^j} \sum_{a_{p-j+1},\dots,a_p=1}^k \mu_{a_1,\dots,a_p}.
\end{align}
Operationally, one can sample from this distribution by extending it to length $p$ with uniformly random entries and sampling from $\mu_a$. The marginal order obstructs LCDF, as given by the following recent result.

\begin{theorem}[Theorem 1.13 of~\cite{kunisky2024low}]\label{thm:kun}
    Consider a GSBM with marginal order $p_* \geq 2$, and denote its marginal characteristic tensor by $T^{(j)}$. There is a constant $c = c_{k,p}$ depending only on $p$ and $k$ such that, if for all sufficiently large $n$ we have $D(n) \leq cn$ and
    \begin{align}
        \max_{p_* \leq j \leq p} \max_{\norm{v}=1} \left|\sum_{i_1,\dots,i_j} T^{(j)}_{i_1,\dots,i_j} v_{i_1}\cdots v_{i_j}\right| \leq cn^{-p+p_*/2} D(n)^{1-p_*/2}
    \end{align}
    then no sequence of functions of coordinate degree at most $D(n)$ can strongly separate $\mathbb Q_n$ from $\mathbb P_n$.
\end{theorem}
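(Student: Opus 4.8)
The plan is to bound the projection of the likelihood ratio onto $V_{\leq D}$ --- the \emph{low coordinate degree likelihood ratio} --- and invoke the standard reduction from strong separation. Write $L_n = d\PP_n/d\QQ_n$, which lies in $L^2(\QQ_n)$ (and in particular gives $\PP_n \ll \QQ_n$) by the \emph{Regular} part of \Cref{ass:gsbm}, and let $L_n^{\leq D}$ denote its orthogonal projection onto $V_{\leq D}$. For every $f \in V_{\leq D}$, since $L_n^{\leq D} - 1$ has mean zero under $\QQ_n$, Cauchy--Schwarz gives
\begin{align}
    \left|\E_{\PP_n} f - \E_{\QQ_n} f\right| = \left|\langle L_n^{\leq D} - 1,\, f - \E_{\QQ_n} f\rangle_{\QQ_n}\right| \leq \left\|L_n^{\leq D} - 1\right\|_{L^2(\QQ_n)} \sqrt{\Var_{\QQ_n} f}.
\end{align}
So once I show $\big\|L_n^{\leq D} - 1\big\|_{L^2(\QQ_n)}^2$ is bounded by an absolute constant, the separation of any $f \in V_{\leq D}$ is $O\big(\sqrt{\Var_{\QQ_n} f}\big)$, never $\omega\big(\sqrt{\Var_{\QQ_n} f} + \sqrt{\Var_{\PP_n} f}\big)$, hence no sequence in $V_{\leq D}$ strongly separates. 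The whole task reduces to this bound.

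Next I would expand the quantity via the characteristic tensor. Since $\QQ_n$ is a product measure over the hyperedge coordinates $S \in \binom{[n]}{p}$ and $L_n = \E_{x \sim \unif([k]^n)} \prod_{S} \frac{d\mu_{x_S}}{d\mu_\mathrm{avg}}(Y_S)$, writing each factor as $1 + \big(\frac{d\mu_{x_S}}{d\mu_\mathrm{avg}} - 1\big)$ and using $\E_{y\sim\mu_\mathrm{avg}}\big[\frac{d\mu_a}{d\mu_\mathrm{avg}}(y) - 1\big] = 0$ produces exactly the Efron--Stein decomposition of $L_n$: the component $(L_n)_\cE = \E_x \prod_{S\in\cE}\big(\frac{d\mu_{x_S}}{d\mu_\mathrm{avg}} - 1\big)(Y_S)$ depends only on $y_\cE$ and is mean-zero in each of its coordinates, so distinct $\cE$ are orthogonal, $(L_n)_\emptyset = 1$, and $L_n^{\leq D} = 1 + \sum_{1\leq|\cE| \leq D} (L_n)_\cE$. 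Hence
\begin{align}
    \left\|L_n^{\leq D} - 1\right\|_{L^2(\QQ_n)}^2 = \sum_{1 \leq |\cE| \leq D} \E_{x,x'} \prod_{S\in\cE} M(x_S, x'_S), \qquad M(a,b) := \E_{y\sim\mu_\mathrm{avg}}\!\left[\Big(\tfrac{d\mu_a}{d\mu_\mathrm{avg}} - 1\Big)\!\Big(\tfrac{d\mu_b}{d\mu_\mathrm{avg}} - 1\Big)\right] = p!\, T^{(p)}_{(a_1,b_1),\dots,(a_p,b_p)},
\end{align}
with $x, x'$ independent uniform labels. The \emph{Weakly symmetric} part of \Cref{ass:gsbm} is precisely what makes $T^{(p)}$ a genuine symmetric tensor on $\R^{[k]\times[k]}$ (so the contractions below are unambiguous), and \emph{Non-triviality} only keeps the claim non-vacuous.

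The marginal order enters next. I would read $\E_{x,x'} \prod_{S\in\cE} M(x_S, x'_S)$ as a tensor-network contraction over the hypergraph $\cE$: each vertex $v$ carries the normalized average $\frac{1}{k^2}\sum_{x_v, x'_v \in [k]}$, which contracts the $\deg_\cE(v)$ modes (one per incident hyperedge) in which $v$ occurs. A hyperedge $S$ with $t$ degree-one vertices, once those $t$ modes are contracted against the normalized all-ones vector, reduces to $T^{(p-t)}$ on its remaining $p-t$ modes --- and by \Cref{def:morder}, $T^{(p-t)} = 0$ whenever $p-t < p_*$. Hence only configurations $\cE$ in which \emph{every} hyperedge keeps at least $p_*$ vertices of degree $\geq 2$ contribute.

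Finally I would bound the surviving sum. For a surviving $\cE$ with $m$ hyperedges, deleting the degree-one leaves leaves a minimum-degree-$\geq 2$ core: each hyperedge $S$ now carries a tensor $T^{(j_S)}$ with $j_S \in \{p_*,\dots,p\}$, glued along the core vertices. I would bound the contraction value --- by iterated Cauchy--Schwarz / a flattening argument --- by a product of $m$ injective norms $\max_{p_*\leq j\leq p}\max_{\|v\|=1}\big|\sum_{i_1,\dots,i_j} T^{(j)}_{i_1,\dots,i_j} v_{i_1}\cdots v_{i_j}\big|$ times $k$-dependent factors from the contracted dimensions, and bound the number of surviving $\cE$ with $m$ hyperedges combinatorially, using that the core has at most $pm/2$ vertices, at most $(p-p_*)m$ leaves, and $m \leq D \leq cn$ (which constrains the gluings and keeps error factors like $1+O(D/n)$ bounded). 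Carried through, this gives for each $m$ a bound whose base is at most $C_{k,p}\, n^{p-p_*/2}\, D^{1-p_*/2}\max_{p_*\leq j\leq p}\|T^{(j)}\|_{\mathrm{inj}}$ (up to lower-order factors absorbed into $C_{k,p}$); under the hypothesis $\max_j\|T^{(j)}\|_{\mathrm{inj}} \leq c\,n^{-p+p_*/2}D^{1-p_*/2}$ this base is at most $C_{k,p}c < 1$ once $c = c_{k,p}$ is small enough, so the geometric series over $m\geq 2$ converges and $\big\|L_n^{\leq D}-1\big\|_{L^2(\QQ_n)}^2 = O(1)$, as required. The hard part will be exactly this last counting-and-contraction estimate: keeping the partial contractions that interpolate between tensor orders $j\in[p_*,p]$, the extra gluing freedom from core vertices of degree $>2$, and the precise dependence on $D$ (where $D\leq cn$ and the distinction between core topologies genuinely matter) all simultaneously tight enough to match the threshold $n^{-p+p_*/2}D^{1-p_*/2}$.
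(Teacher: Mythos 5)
This theorem is cited verbatim from \cite{kunisky2024low} (Theorem 1.13 there); the paper provides no proof of it and uses it as a black box. So there is no in-paper proof to compare against.

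That said, your outline is a sound high-level reconstruction of the argument in \cite{kunisky2024low}. The initial reduction is correct and tight: writing $L_n = d\PP_n/d\QQ_n$, projecting to $L_n^{\leq D}$, and observing that $|\E_{\PP_n} f - \E_{\QQ_n} f| = |\langle L_n^{\leq D}-1, f\rangle_{\QQ_n}| \leq \|L_n^{\leq D}-1\|\sqrt{\Var_{\QQ_n} f}$ for any $f \in V_{\leq D}$ does rule out strong separation whenever $\|L_n^{\leq D}-1\|^2 = O(1)$. The identification of the Efron--Stein components $(L_n)_\cE = \E_x\prod_{S\in\cE}\bigl(\tfrac{d\mu_{x_S}}{d\mu_\mathrm{avg}}-1\bigr)(Y_S)$ as the orthogonal decomposition of $L_n$ over hyperedge subsets $\cE$ is correct (each factor is conditionally mean-zero, so distinct $\cE$ are orthogonal), as is the rewriting of the squared norm as a tensor-network contraction weighted by $M(a,b) = p!\,T^{(p)}_{(a_1,b_1),\dots,(a_p,b_p)}$. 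The pruning step --- that averaging a degree-one vertex produces a partial contraction of $T^{(p)}$ against the all-ones vector, i.e.\ a marginal tensor $T^{(j)}$ with $j<p$, and that marginal order $p_*$ kills any hyperedge retaining fewer than $p_*$ degree-$\geq 2$ vertices --- correctly explains where the threshold $n^{-p+p_*/2}D^{1-p_*/2}$ originates.

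The genuine gap is exactly where you flag it: the final counting-and-contraction bound. You do not carry out the enumeration of surviving hypergraphs $\cE$ with $m$ hyperedges subject to the degree constraints, nor the bound of the mixed contraction of tensors $T^{(j_S)}$, $j_S \in \{p_*,\dots,p\}$, glued along the core, in terms of the injective norm. This is not a small omission: the full proof in \cite{kunisky2024low} spends most of its length on precisely this (bounding graph matrices / generalized tensor networks with a careful vertex-by-vertex charging scheme, handling vertices of degree $>2$ and partial contractions to intermediate orders, and tracking $D$-dependence in the gluing combinatorics). A rough ``at most $pm/2$ core vertices, $(p-p_*)m$ leaves'' count does not by itself produce the right power of $D$ in the threshold --- it is easy to lose factors of $D$ in naive gluing counts when core vertices have high degree, and recovering the clean $D^{1-p_*/2}$ exponent is the substance of the theorem. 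As it stands this is a correct but incomplete proof sketch --- appropriate for citing the result, but not a self-contained proof.
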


\subsection{Lower bounds for \texorpdfstring{$p$}\ -marginal Bernoulli GSBMs}

We prove \Cref{thm:ldm_lowerbounds} by showing that the $p$-marginal HSBM is a GSBM with marginal order $p$ satisfying \Cref{ass:gsbm}, and then obtaining a lower bound via \Cref{thm:kun}. In fact, our lower bounds will hold for a larger class of models that we call $p$-marginal Bernoulli GSBMs; we will then show that a $p$-marginal HSBM is a type of $p$-marginal Bernoulli GSBM. These models look like a $p$-marginal HSBM but with a generic choice of $f$ that satisfies some conditions.

\begin{definition}[$p_*$-marginal Bernoulli GSBM]\label{def:pberngsbm}
Let $p \geq 2$, let $k,n \geq1$, and let $\theta_0 \in (0, 1)$. Let $\epsilon \in \R$ and $f:[k]^p \to \R$ be a symmetric function satisfying
\begin{align}
    \E_{a \sim \unif([k]^p)}[f(a)] = 0, \qquad 0 < \theta_0 + \epsilon \norm{f}_\infty < 1 , \qquad \exists\, a \in [k]^p \text{ s.t. } f(a) \neq 0
\end{align}
and for all $a_1,\dots,a_r \in [k]$ and all $r < p_*$
\begin{align}
    \E_{a_{r+1},\dots,a_p \in [k]}[f(a_1,\dots,a_p) \,|\, a_1,\dots,a_r] = 0
\end{align}
where $p_* \geq 2$. Define (for $a \in [k]^p$) the probability measures
\begin{align}
    \mu_a = \bern(\theta_0 + \epsilon f(a)), \qquad \mu_\mathrm{avg} = \bern(\theta_0).
\end{align}
The $p_*$\emph{-marginal Bernoulli GSBM} then consists of the following two probability measures over $Y \in \{0,1\}^{\binom{[n]}{p}}$:
\begin{enumerate}
    \item Under $\QQ$, draw $Y \sim \QQ$ with $Y_S \sim \mu_\mathrm{avg}$ independently for each $S \in \binom{[n]}{p}$.
    \item Under $\PP$, first draw $x \sim \unif([k]^n)$. Then for each $S = \{s_1 < \cdots < s_p\} \in \binom{[n]}{p}$, draw $Y_S \sim \mu_{x_{s_1},\dots,x_{s_p}}$ independently.
\end{enumerate}
\end{definition}

We show that this model satisfies \Cref{ass:gsbm} and has a marginal order $p_*$.

\begin{lemma}[Marginal order of $p_*$-marginal Bernoulli GSBM]\label{lem:pberngsbm}
    A $p_*$-marginal Bernoulli GSBM is a non-trivial, regular, strongly symmetric GSBM with marginal order $p_*$.
\end{lemma}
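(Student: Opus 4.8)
The plan is to verify the four asserted properties directly from \Cref{def:pberngsbm}; three of them (non-trivial, regular, strongly symmetric) are essentially immediate, and the only real computation is the characteristic tensor and its contractions. First note that $\frac{1}{k^p}\sum_{a\in[k]^p}\mu_a$ is a mixture of Bernoullis with mean $\theta_0+\epsilon\,\E_{a\sim\unif([k]^p)}[f(a)]=\theta_0$, hence equals $\bern(\theta_0)=\mu_\mathrm{avg}$, so the notion of $\mu_\mathrm{avg}$ in \Cref{def:gsbm} agrees with the one in \Cref{def:pberngsbm}. \emph{Non-trivial:} picking $a$ with $f(a)\neq 0$ (which exists by hypothesis) and using $\epsilon\neq 0$ gives $\mu_a=\bern(\theta_0+\epsilon f(a))\neq\bern(\theta_0)=\mu_\mathrm{avg}$. \emph{Regular:} $\Omega=\{0,1\}$ is finite and $\mu_\mathrm{avg}=\bern(\theta_0)$ with $\theta_0\in(0,1)$ has full support, so each $d\mu_a/d\mu_\mathrm{avg}$ is a well-defined finite function on a finite space and therefore lies in $L^2(\mu_\mathrm{avg})$. \emph{Strongly symmetric:} since $f$ is a symmetric function, $\theta_0+\epsilon f(a_1,\dots,a_p)$ depends only on the multiset $\{a_1,\dots,a_p\}$, so $\mu_{a_1,\dots,a_p}=\mu_{a_{\sigma(1)},\dots,a_{\sigma(p)}}$ for every $\sigma\in\Sym([p])$; by the remark in \Cref{ass:gsbm}, this implies the weakly symmetric condition as well.

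The heart of the argument is the characteristic tensor. For a Bernoulli likelihood ratio one computes $\frac{d\mu_a}{d\mu_\mathrm{avg}}(1)-1=\frac{\epsilon f(a)}{\theta_0}$ and $\frac{d\mu_a}{d\mu_\mathrm{avg}}(0)-1=-\frac{\epsilon f(a)}{1-\theta_0}$, so
\begin{align}
    \E_{y\sim\mu_\mathrm{avg}}\!\left[\lr{\tfrac{d\mu_{a}}{d\mu_\mathrm{avg}}(y)-1}\lr{\tfrac{d\mu_{b}}{d\mu_\mathrm{avg}}(y)-1}\right]
    = \theta_0\cdot\tfrac{\epsilon f(a)}{\theta_0}\tfrac{\epsilon f(b)}{\theta_0}+(1-\theta_0)\cdot\tfrac{\epsilon f(a)}{1-\theta_0}\tfrac{\epsilon f(b)}{1-\theta_0}
    = \frac{\epsilon^2}{\theta_0(1-\theta_0)}\,f(a)\,f(b),
\end{align}
and hence
\begin{align}
    T^{(p)}_{(a_1,b_1),\dots,(a_p,b_p)} = \frac{\epsilon^2}{p!\,\theta_0(1-\theta_0)}\,f(a_1,\dots,a_p)\,f(b_1,\dots,b_p).
\end{align}
The key structural point is that $T^{(p)}$ factorizes across the two copies of $[k]^p$ into a product of $f$-tensors.

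To read off the marginal order, I would contract $j$ of the index pairs against the all-ones vector $1\in\R^{[k]\times[k]}$ and divide by $k^{2j}$ as in \Cref{def:morder}; the factorization is preserved and the contraction acts separately on the two copies, giving
\begin{align}
    T^{(p-j)}_{(a_{j+1},b_{j+1}),\dots,(a_p,b_p)} = \frac{\epsilon^2}{p!\,\theta_0(1-\theta_0)}\,g_j(a_{j+1},\dots,a_p)\,g_j(b_{j+1},\dots,b_p), \qquad g_j := \E_{a_1,\dots,a_j\sim\unif([k])} f(a_1,\dots,a_p).
\end{align}
By symmetry of $f$, the function $g_j$ of its $p-j$ remaining arguments is exactly the conditional expectation $\E_{a_{r+1},\dots,a_p}[f(a_1,\dots,a_p)\mid a_1,\dots,a_r]$ with $r=p-j$, which vanishes identically whenever $r<p_*$ by the hypothesis of \Cref{def:pberngsbm}. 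Hence $T^{(p-j)}=0$ for every $j>p-p_*$, i.e. $T^{(1)}=\cdots=T^{(p_*-1)}=0$, while $g_0=f\not\equiv 0$ forces $T^{(p)}\neq 0$; so the marginal order lies in $\{p_*,\dots,p\}$. It remains to observe that $g_{p-p_*}=\E_{a_{p_*+1},\dots,a_p}[f\mid a_1,\dots,a_{p_*}]\not\equiv 0$, which is precisely the assertion that the model is genuinely $p_*$-marginal (and not $p'$-marginal for some $p'>p_*$); it holds, for instance, for the explicit $f$ of \Cref{def:pbernsbm}, where $g_j\equiv 0$ for all $j\geq 1$ and so $p_*=p$. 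Then $T^{(p_*)}$ is proportional to $g_{p-p_*}\otimes g_{p-p_*}\neq 0$, so the marginal order equals $p_*$.

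I expect the only mild obstacle to be bookkeeping: getting the index conventions straight in the contraction identity and confirming, via symmetry of $f$, that the surviving contracted tensors are exactly the products of the conditional expectations appearing in \Cref{def:pberngsbm}. Everything else — the three structural properties, the Bernoulli likelihood-ratio computation, and the nonvanishing $T^{(p_*)}\neq 0$ (which is the non-degeneracy already built into the hypothesis on $f$) — is routine.
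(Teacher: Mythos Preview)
Your proposal is correct and follows essentially the same approach as the paper: verify the three structural properties from the hypotheses on $f$, compute the Bernoulli likelihood ratio to see that the characteristic tensor factorizes as a constant times $f(a)f(b)$, and then use the vanishing-marginals hypothesis on $f$ to conclude $T^{(r)}=0$ for $r<p_*$. The only organizational difference is that you contract $T^{(p)}$ directly against the all-ones vector per \Cref{def:morder}, whereas the paper invokes the equivalent description of $T^{(p-j)}$ as the characteristic tensor of the marginalized Bernoulli model $\mu^{(p-j)}$; both routes yield the same product-of-conditional-expectations expression, and your handling of the $T^{(p_*)}\neq 0$ step is in fact slightly more explicit than the paper's.
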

\begin{proof}
    Since $\EE{f(a)} = 0$, the probability measures above satisfy $\mu_\mathrm{avg} = \frac{1}{k^p} \sum_{a \sim \unif([k]^p)} \mu_a$ and thus a Bernoulli GSBM is indeed a special case of a GSBM (Definition~\ref{def:gsbm}). It is immediately symmetric (since $f$ is symmetric) and non-trivial (since $f(a) \neq 0$ for some a) and regular (since $0 < \theta_0 + \epsilon \norm{f}_\infty < 1$). We now compute its marginal order.
    
    The marginal characteristic tensor $T^{(r)}$ is the characteristic tensor of
    \begin{align}
        \mu_{a_1,\dots,a_r}^{(r)} &= \frac{1}{k^{p-r}} \sum_{a_{r+1},\dots,a_p\in [k]} \mu_{a_1,\dots,a_p}.
    \end{align}
    This is another Bernoulli distribution with variable $\theta_{a_1,\dots,a_r}$ given by
    \begin{align}
        \theta_{a_1,\dots,a_r} = \frac{1}{k^{p-r}} \sum_{a_{r+1},\dots,a_p\in [k]} \theta_{a_1,\dots,a_p}.
    \end{align}
    For $\theta_a = \theta_0 + f(a_1,\dots,a_p)$, we have
    \begin{align}
        \mu_{a_1,\dots,a_r}^{(r)} &= \theta_0 + \frac{1}{k^{p-r}} \sum_{a_{r+1},\dots,a_p\in [k]} f(a) = \theta_0 + f^{(r)}(a_1,\dots,a_r).
    \end{align}
    Since $\EE{f(a)} = 0$ by assumption, we have that $\EE{\mu_a} = \mu_\mathrm{avg}$ and that
    \begin{align}
        \frac{d \mu_{a_1,\dots,a_r}^{(r)}}{d \mu_\mathrm{avg}}(1) - 1 = \frac{f^{(r)}(a_1,\dots,a_r)}{\theta_0}, \quad \frac{d \mu_{a_1,\dots,a_r}^{(r)}}{d \mu_\mathrm{avg}}(0) - 1 = -\frac{f^{(r)}(a_1,\dots,a_r)}{1-\theta_0}.
    \end{align}
    Hence, the marginal characteristic tensor has elements
    \begin{align}
        \EE{\lr{\frac{d \mu_{a_1,\dots,a_r}^{(r)}}{d \mu_\mathrm{avg}}(y) - 1}\lr{\frac{d \mu_{b_1,\dots,b_r}^{(r)}}{d \mu_\mathrm{avg}}(y) - 1}} = \lr{\frac{1}{\theta_0} + \frac{1}{1-\theta_0}} f^{(r)}(a_1,\dots,a_r)f^{(r)}(b_1,\dots,b_r).
    \end{align}
    A sufficient condition for the marginal order to be $\geq p_*$ is
    \begin{align}
        f^{(r)}(a_1,\dots,a_r) = \frac{1}{k^{p-r}} \sum_{a_{r+1},\dots,a_p} f(a_1,\dots,a_p) = 0
    \end{align}
    for all $a_1, \dots, a_r \in [k]$ and $r < p_*$.
\end{proof}

To show that a $p$-marginal HSBM is a $p$-marginal Bernoulli GSBM, we introduce a generic procedure for constructing $f$ that satisfies the conditions of $f$ in the definition of a $p$-marginal Bernoulli GSBM. This procedure decomposes $f$ over its marginals and sets marginals smaller than $p_*$ to zero.

\begin{lemma}[$p_*$-whitening]\label{lem:white}
    Let $p, k \geq 2$ and let $\E$ denote expectations over uniform $a \in [k]^p$. Any $\tilde f:[k]^p \to \R$ can be decomposed as
    \begin{align}
        \tilde f(a_1,\dots,a_p) = \sum_{S\subseteq [p]} \sum_{T\subseteq S} (-1)^{|S\setminus T|} \EE{\tilde f(a)|a_T}.
    \end{align}
    For $p \geq p_* \geq 2$, the $p_*$-\emph{whitened} function
    \begin{align}
        f(a_1,\dots,a_p) = \sum_{\substack{S\subseteq [p]\\ |S| \geq p_*}} \sum_{T\subseteq S} (-1)^{|S\setminus T|} \EE{\tilde f(a)|a_T}
    \end{align}
    satisfies, for all $R \subseteq [p]$ such that $|R| < p_*$,
    \begin{align}
        \EE{f(a_1,\dots,a_p) \,|\, a_R} = 0.
    \end{align}
\end{lemma}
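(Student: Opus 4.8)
The plan is to recognize this identity as the Efron--Stein / ANOVA (Hoeffding) decomposition of $\tilde f$ with respect to the product measure $\unif([k]^p)$, organized via Möbius inversion on the Boolean lattice of $[p]$. For $T \subseteq [p]$, set $g_T := \EE{\tilde f(a)\mid a_T}$, a function depending only on the coordinates $(a_i)_{i\in T}$, and define the \emph{pure component}
\[
  \hat f_S := \sum_{T \subseteq S} (-1)^{|S \setminus T|}\, g_T, \qquad S \subseteq [p].
\]
With this notation the first displayed identity in the lemma is exactly $\tilde f = \sum_{S\subseteq[p]} \hat f_S$, and the $p_*$-whitened function is $f = \sum_{S:\,|S|\ge p_*} \hat f_S$. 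So the whole lemma reduces to two claims: (i) $\sum_{S} \hat f_S = \tilde f$; and (ii) $\EE{\hat f_S \mid a_R} = 0$ whenever $S \not\subseteq R$.

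For (i) I would swap the order of summation and use $\sum_{U\subseteq W}(-1)^{|U|} = (1-1)^{|W|}$:
\[
  \sum_{S\subseteq[p]} \hat f_S = \sum_{T\subseteq[p]} g_T \sum_{S:\,T\subseteq S\subseteq[p]} (-1)^{|S\setminus T|} = \sum_{T\subseteq[p]} g_T\cdot \mathbf 1[T=[p]] = g_{[p]} = \tilde f,
\]
which is just Möbius inversion and is entirely routine.

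The only genuine content is (ii): each pure component averages to zero over any single coordinate it depends on. The cleanest route is to first record, for independent uniform coordinates, the composition rule $\EE{g_T \mid a_R} = g_{T\cap R}$ (conditioning the function $g_T$ of $a_T$ on $a_R$ averages over the coordinates in $T\setminus R$ and leaves a function of $a_{T\cap R}$; equivalently, nested conditional expectations over nested coordinate blocks compose by the tower property). This gives $\EE{\hat f_S\mid a_R} = \sum_{T\subseteq S}(-1)^{|S\setminus T|} g_{T\cap R}$. If $S\not\subseteq R$, fix any $i\in S\setminus R$; since $i\notin R$ we have $g_{T\cap R} = g_{(T\triangle\{i\})\cap R}$ for every $T$, while $T\mapsto T\triangle\{i\}$ is a fixed-point-free involution on the subsets of $S$ that flips the parity of $|S\setminus T|$. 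Pairing each $T$ with $T\triangle\{i\}$ therefore cancels the sum to zero. (Equivalently, one pairs subsets containing $i$ with those not containing it and uses $\EE_{a_i}[g_T]=g_{T\setminus\{i\}}$ — this is the standard orthogonality argument for Efron--Stein components.)

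Combining the pieces finishes the proof: for $R\subseteq[p]$ with $|R| < p_*$,
\[
  \EE{f \mid a_R} = \sum_{S:\,|S|\ge p_*} \EE{\hat f_S\mid a_R} = \sum_{\substack{S:\,|S|\ge p_*\\ S\subseteq R}} \hat f_S = 0,
\]
since no subset of $R$ has size $\ge p_*$ when $|R|<p_*$. I do not expect a real obstacle; the only places requiring care are the sign bookkeeping in the Möbius inversion and a clean statement of the composition rule $\EE{g_T\mid a_R}=g_{T\cap R}$ for disjoint blocks of independent uniform coordinates, both of which are standard. One could alternatively cite a reference for the ANOVA/Hoeffding decomposition and its orthogonality and then specialize to this setting.
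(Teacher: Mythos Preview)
Your proof is correct and follows essentially the same route as the paper: define the Efron--Stein components $\hat f_S$ via M\"obius inversion over $g_T = \EE{\tilde f\mid a_T}$, recover $\tilde f = \sum_S \hat f_S$ by swapping sums and using $\sum_{X\subseteq C}(-1)^{|X|}=\mathbf 1[C=\emptyset]$, and then show each $\hat f_S$ with $S\not\subseteq R$ has zero conditional expectation on $a_R$. The only cosmetic difference is in the vanishing step: you pair terms via the involution $T\mapsto T\triangle\{i\}$ for a fixed $i\in S\setminus R$, whereas the paper groups terms by $U=T\cap R$ and then invokes $\sum_{T'\subseteq S\setminus R}(-1)^{|T'|}=0$ --- these are the same cancellation viewed two ways.
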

\begin{proof}
    For each $S \subseteq [p]$, define
    \begin{align}\label{eq:fsdef}
        m_S(a_S) = \EE{\tilde f(a)\,|\,a_S}, \quad \tilde f_S(a_S) = \sum_{T\subseteq S} (-1)^{|S\setminus T|} m_T(a_T).
    \end{align}
    We will use the identity, for any set $C$, that
    \begin{align}\label{eq:setid}
        \sum_{X \subseteq C} (-1)^{|X|} = 1_{C = \emptyset}.
    \end{align}
    This implies that, for any $R \subseteq [p]$,
    \begin{align}\label{eq:mr}
        \sum_{S \subseteq R} \tilde f_S(a_S) &= \sum_{T\subseteq R} m_T(a_T) \sum_{S\supseteq T} (-1)^{|S\setminus T|} = \sum_{T\subseteq R} m_T(a_T) \sum_{U \subseteq R\setminus T} (-1)^{|U|} = \sum_{T\subseteq R} m_T(a_T) 1_{T = R} \nonumber\\
        &= m_R(a_R) = \EE{\tilde f(a)\,|\,a_R}.
    \end{align}
    Choosing $R=[p]$ so $\EE{\tilde f(a)\,|\,a} = \tilde f(a)$ proves our first claim that
    \begin{align}
        \tilde f(a_1,\dots,a_p) = \sum_{S\subseteq [p]} \tilde f_S(a_S).
    \end{align}
    We then whiten $\tilde f$ by setting $m_T$ with $|T| < p_*$ to zero. We claim that
    \begin{align}
        f(a_1,\dots,a_p) = \sum_{\substack{S\subseteq [p]\\ |S| \geq p_*}} \tilde f_S(a_S)
    \end{align}
    satisfies for all $|R| < p_*$
    \begin{align}\label{eq:fcond}
        \EE{f(a_1,\dots,a_p) \,|\, a_R} = 0.
    \end{align}
    To obtain this result, it suffices to show that
    \begin{align}\label{eq:fs0}
        \EE{\tilde f_S(a_S) \,|\, a_R} = 0.
    \end{align}
    By the definition of $\tilde f_S$ \eqref{eq:fsdef}, we have that
    \begin{align}
        \EE{\tilde f_S(a_S) \,|\, a_R} &= \sum_{T\subseteq S} (-1)^{|S\setminus T|} \EE{m_T(a_T) \,|\, a_R}.
    \end{align}
    Since $m_T$ only depends on the coordinates in $T$, we can rewrite this as
    \begin{align}
        \EE{\tilde f_S(a_S) \,|\, a_R} &= \sum_{T\subseteq S} (-1)^{|S\setminus T|} \EE{m_T(a_T) \,|\, a_{R\cap T}}.
    \end{align}
    We separate out $U = R \cap T$ to obtain
    \begin{align}
        \EE{\tilde f_S(a_S) \,|\, a_R} &= \sum_{U \subseteq R} \sum_{\substack{T\subseteq S\\ R \cap T = U}} (-1)^{|S\setminus T|} \EE{m_T(a_T) \,|\, a_U}.
    \end{align}
    Due to the expectation, $\EE{m_T(a_T) \,|\, a_U} = g_U(a_U)$ is only a function of $U$ and $a_U$, so
    \begin{align}
        \EE{\tilde f_S(a_S) \,|\, a_R} &= \sum_{U \subseteq R} \sum_{\substack{T\subseteq S\\ R \cap T = U}} (-1)^{|S\setminus T|} g_U(a_U)\\
        &= \sum_{U \subseteq R} g_U(a_U) \sum_{T' \subseteq S \setminus R} (-1)^{|S\setminus (U \cup T')|}\\
        &= \sum_{U \subseteq R} g_U(a_U) (-1)^{|S\setminus U|} \sum_{T' \subseteq S \setminus R} (-1)^{|T'|}.
    \end{align}
    By \eqref{eq:setid}, we have
    \begin{align}
        \sum_{T' \subseteq S \setminus R} (-1)^{|T'|} = 0
    \end{align}
    since $S \setminus R$ is never empty due to the condition $|S| \geq p_*$ and $|R| < p_*$. Hence, \eqref{eq:fs0} is satisfied, implying our claim \eqref{eq:fcond}.
\end{proof}

\begin{remark}
    Up to checking \Cref{ass:gsbm}, a $p_*$-whitened $f$ satisfies the conditions needed for a $p_*$-marginal Bernoulli GSBM.
\end{remark}

By applying this whitening procedure to obtain a marginal order of $p_* > 2$, we obtain a generic method to produce an SBM-like model with an SNR-computation tradeoff by \Cref{thm:kun}. We anticipate that these models also generally have matching upper bounds via the Kikuchi method, although our proof strategy is specialized to the $p$-marginal HSBM. We also expect whitening to be a generic approach to construct toy models for datasets that cannot be learned from two-body marginals.

\subsection{Lower bounds and properties of \texorpdfstring{$p$}\ -marginal HSBM}
The $p$-marginal HSBM can be constructed by $p$-whitening a choice of $f$ that sets the probability of a hyperedge on a set of $p$ vertices based on if the set is all part of the same community.

\begin{lemma}[Whitened indicator function]\label{lem:fprop}
    Let $k,p \geq 2$. Let $\tilde f(a) = 1_{a_1=\cdots=a_p} - k^{1-p}$. Then $p$-whitening (Lemma~\ref{lem:white}) produces the function
    \begin{align}\label{eq:white}
        f(a_1, \dots, a_p) = \sum_{i=1}^k \prod_{j=1}^p \lr{1_{a_j=i} - \frac{1}{k}},
    \end{align}
    which satisfies the following properties.
    \begin{enumerate}
        \item $\sum_{a_1,\dots,a_p} f(a_1,\dots,a_p) = 0$.
        \item $\sum_{a_{r+1},\dots,a_p} f(a_1,\dots,a_p) = 0$ for all $r < p$.
        \item $f$ is symmetric with respect to its arguments.
        \item $\exists \, a \in [k]^p$ such that $f(a) \neq 0$.
        \item If there are $d \geq 2$ distinct values in $(a_1,\dots,a_p)$ then $|f(a)| \leq 2/k$. Moreover, $1 > f(a_1,\dots,a_p) > 1 - \frac{p+1}{k}$ for $a_1=\cdots=a_p$.
    \end{enumerate}
\end{lemma}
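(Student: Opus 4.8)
The plan is to first pin down the $p$-whitened function explicitly via \Cref{lem:white}, and then read off all five properties, most of them immediately. Taking $p_*=p$ collapses the whitening sum to its top term, so $f(a)=\sum_{T\subseteq[p]}(-1)^{p-|T|}\,\E[\tilde f(a)\mid a_T]$. With $\tilde f(a)=1_{a_1=\cdots=a_p}-k^{1-p}$ I would use $\E[1_{a_1=\cdots=a_p}\mid a_T]=e_T(a_T)\,k^{|T|-p}$, where $e_T(a_T):=\sum_{v=1}^k\prod_{j\in T}1_{a_j=v}$ is the indicator that the coordinates indexed by $T$ all agree (with $e_\emptyset=k$). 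The constant $-k^{1-p}$ contributes $-k^{1-p}\sum_{T\subseteq[p]}(-1)^{p-|T|}=-k^{1-p}(1-1)^p=0$, and the surviving sum factorizes coordinate by coordinate: $\sum_{v=1}^k\sum_{T\subseteq[p]}\prod_{j\in T}1_{a_j=v}\prod_{j\notin T}(-1/k)=\sum_{v=1}^k\prod_{j=1}^p(1_{a_j=v}-1/k)$, which is \eqref{eq:white}. (Conceptually: the right-hand side of \eqref{eq:white} already has vanishing marginals over every proper subset of $[p]$, so it lies in the top Efron--Stein layer, while $\tilde f$ minus it depends on fewer than $p$ coordinates; hence \eqref{eq:white} is exactly the $p$-whitening of $\tilde f$.)

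Properties 1--3 are then essentially free. Property 2 --- and its $r=0$ instance, Property 1 --- is precisely the conclusion of \Cref{lem:white}, which guarantees $\E[f\mid a_R]=0$ for every $|R|<p_*=p$; equivalently, summing \eqref{eq:white} over any unconstrained coordinate $a_j$ leaves a factor $\sum_{a_j}(1_{a_j=v}-1/k)=0$. Property 3 is manifest since the product in \eqref{eq:white} ranges over all $j\in[p]$. For Property 4 I would evaluate at the all-equal point, $f(1,\dots,1)=(1-1/k)^p+(k-1)(-1/k)^p=k^{-p}\big((k-1)^p+(-1)^p(k-1)\big)$, which is strictly positive for even $p$ (the regime relevant to our Kikuchi upper bounds), so $f\not\equiv 0$.

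Property 5 contains the only real computation. Expanding \eqref{eq:white} gives the closed form $f(a)=k^{-p}\sum_{v=1}^k(1-k)^{c_v}$ with $c_v=|\{j:a_j=v\}|$, since $\prod_j(1_{a_j=v}-1/k)=(1-1/k)^{c_v}(-1/k)^{p-c_v}$. When $a_1=\cdots=a_p$ one count equals $p$ and the rest vanish, recovering the displayed value; the two-sided bound $1-\tfrac{p+1}{k}<f<1$ then follows from Bernoulli's inequality $(1-1/k)^p\ge 1-p/k$ together with the crude estimate $(k-1)/k^p<1/k$. When $d\ge 2$ distinct values occur every $c_v\le p-1$, so by the triangle inequality $\big|\sum_v(1-k)^{c_v}\big|\le(k-d)+\sum_{i=1}^d(k-1)^{m_i}$, where $m_1,\dots,m_d\ge1$ are the positive counts and $\sum_i m_i=p$; convexity of $t\mapsto(k-1)^t$ bounds this last sum by its value at the most concentrated admissible partition $(p-1,1,\dots,1)$, i.e.\ by $(k-1)^{p-1}+(k-1)$, and combining with $k^{p-1}-(k-1)^{p-1}\ge(p-1)(k-1)+1$ yields $|f(a)|\le 2/k$ for $p\ge 3$; the residual case $p=2$ forces $d=2$, where a direct evaluation gives $f(a)=-1/k$.

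I expect the only nontrivial point to be this $O(1/k)$ bound in Property 5: the naive super-additivity $(k-1)^a+(k-1)^b\le(k-1)^{a+b}$ is far too lossy --- it would allow the counts to concentrate as $(k-1)^p$ --- so one must instead exploit that $d\ge 2$ caps every count at $p-1$ and use the convexity/extreme-partition estimate, and then handle the small cases $p=2$ and $k=2$ by hand. Establishing the whitening identity and Properties 1--4 is routine bookkeeping with \Cref{lem:white}.
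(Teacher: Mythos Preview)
Your proposal is correct and follows essentially the same route as the paper: compute the $p$-whitening by collapsing to $S=[p]$, evaluate $\E[\tilde f\mid a_T]$ and factorize via the binomial identity to obtain \eqref{eq:white}, then read off Properties 1--3 from \Cref{lem:white}, check Property 4 at the all-equal point, and for Property 5 combine Bernoulli's inequality (for $a_1=\cdots=a_p$) with the triangle-inequality bound $|f(a)|\le k^{-p}\sum_v (k-1)^{c_v}$ maximized at the extreme partition $(p-1,1,0,\dots,0)$. Your treatment of Property 5 is in fact a bit more careful than the paper's---you justify the convexity/extreme-partition step explicitly and handle $p=2$ by direct evaluation, whereas the paper simply asserts the $d=2$, $m=(p-1,1)$ case is worst and writes a terse final inequality; one small slip is your closed form $f(a)=k^{-p}\sum_v(1-k)^{c_v}$, which is off by a global factor $(-1)^p$, but this is harmless since only $|f|$ is used downstream.
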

\begin{proof}
    We apply the whitening procedure of Lemma~\ref{lem:white} with $S = [p]$ to set $p_* = p$:
    \begin{align}
        f(a_1,\dots,a_p) = \sum_{T\subseteq [p]} (-1)^{|[p]\setminus T|} \EE{1_{a_1=\cdots = a_p} - \frac{1}{k^{p-1}} \,|\, a_T}.
    \end{align}
    We evaluate the expectation as
    \begin{align}
        \EE{1_{a_1=\cdots = a_p} \,|\, a_T} = \begin{cases}
            \frac{1}{k^{p-|T|}} & a_t \text{ equal for all } t \in T\\
            0 & \text{else}
        \end{cases} = \sum_{i=1}^k \frac{1}{k^{p-|T|}} \prod_{j \in T} 1_{a_j = i},
    \end{align}
    giving (using the identity \eqref{eq:setid} and the binomial theorem)
    \begin{align}
        f(a_1,\dots,a_p) &= \sum_{T\subseteq [p]} (-1)^{p-|T|} \lr{ - \frac{1}{k^{p-1}} + \sum_{i=1}^k \frac{1}{k^{p-|T|}} \prod_{j \in T} 1_{a_j = i}} \\
        &= \sum_{T\subseteq [p]} (-1)^{p-|T|} \sum_{i=1}^k \frac{1}{k^{p-|T|}} \prod_{j \in T} 1_{a_j = i}\\
        &= \sum_{i=1}^k \sum_{T \subseteq [p]} \lr{-\frac{1}{k}}^{p-|T|} \prod_{j \in T} 1_{a_j=i} \\
        &= \sum_{i=1}^k \prod_{j=1}^p \lr{1_{a_j=i} - \frac{1}{k}}
    \end{align}
    as claimed in the lemma statement.
    It is also useful to rewrite $f$ as
    \begin{align}\label{eq:frewrite}
        f(a) = \sum_{i=1}^k \lr{1 - \frac{1}{k}}^{m_i} \lr{-\frac{1}{k}}^{p-m_i}, \qquad m_i = |\{j \leq p : a_j = i\}|.
    \end{align}
    The first three properties are immediate with the aid of Lemma~\ref{lem:white}.
    To show that a nontrivial choice of $a$ exists, we check that
    \begin{align}\label{eq:f11}
        f(1,\dots,1) = \lr{1-\frac{1}{k}}^p + (k-1)\lr{-\frac{1}{k}}^p \neq 0.
    \end{align}
    We now show both parts of the last property. To analyze $a_1=\cdots=a_p$, we set without loss of generality all $a_i=1$. We apply Bernoulli's inequality $(1-x)^n \geq 1 - nx$ to \eqref{eq:f11} to obtain that $1 > f(1,\dots,1) \geq 1 - \frac{p}{k} + (k-1) (-k)^{-p}$. Since $|(k-1) (-k)^{-p}| < k^{-p+1} \leq k^{-1}$, we have that $f(1,\dots,1) > 1 - \frac{p+1}{k}$. To analyze $d$ distinct values, it suffices to bound $|f(a)|$ by the largest case: $d=2$ and $p-1$ values are equal. Since $|f(a)| \leq k^{-p}\sum_{i=1}^k (k-1)^{m_i}$, this choice gives $|f(a)| \leq (k-1)^{-1} + (k-1)k^{-p} \leq 2/k$.
\end{proof}

\begin{corollary}[Properties of $p$-marginal HSBM]
    A $p$-marginal HSBM is a non-trivial, regular, strongly symmetric GSBM with marginal order $p$ by Lemmas~\ref{lem:pberngsbm} and~\ref{lem:fprop}. Moreover, its $f$ satisfies
    \begin{align}
        f(a) \to 1_{a_1 = \cdots = a_p} \quad \mathrm{as} \quad k\to\infty,
    \end{align}
    giving a community detection interpretation where vertices in the same community have a larger probability of receiving a hyperedge.
\end{corollary}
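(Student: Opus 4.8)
The plan is to derive the corollary almost entirely from Lemmas~\ref{lem:pberngsbm} and~\ref{lem:fprop}, adding only the short bridge that identifies the $p$-marginal HSBM of Definition~\ref{def:pbernsbm} as a $p_*$-marginal Bernoulli GSBM (Definition~\ref{def:pberngsbm}) with $p_* = p$. First I would note that the function $f(a) = \sum_{i=1}^k \prod_{j=1}^p(1_{a_j = i} - \tfrac1k)$ appearing in Definition~\ref{def:pbernsbm} is exactly the $p$-whitening (Lemma~\ref{lem:white}, with $p_* = p$, $S = [p]$) of $\tilde f(a) = 1_{a_1 = \cdots = a_p} - k^{1-p}$, which is the content of equation~\eqref{eq:white} in Lemma~\ref{lem:fprop}. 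Consequently, properties~1--5 of Lemma~\ref{lem:fprop} apply to this $f$; properties~1, 2, 3, and~4 supply, respectively, $\E_{a\sim\unif([k]^p)}[f(a)] = 0$, the vanishing conditional means $\E_{a_{r+1},\dots,a_p}[f(a_1,\dots,a_p) \mid a_1,\dots,a_r] = 0$ for every $r < p$, symmetry of $f$, and the existence of some $a$ with $f(a) \neq 0$ --- i.e., all the structural requirements of a $p$-marginal Bernoulli GSBM except the normalization bound.

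For the normalization bound $0 < \theta_0 + \epsilon\norm{f}_\infty < 1$, I would use property~5 of Lemma~\ref{lem:fprop}: when $a$ has at least two distinct coordinate values, $|f(a)| \leq 2/k \leq 1$, and when $a_1 = \cdots = a_p$ one has the explicit value $f(a) = (1-\tfrac1k)^p + (k-1)(-\tfrac1k)^p$, whose absolute value is at most $(1-\tfrac1k)^2 + (k-1)k^{-2} = 1 - \tfrac1k < 1$; hence $\norm{f}_\infty \leq 1$. Combining this with the standing constraints $0 < \epsilon < \theta_0 < 1/2$ from Definition~\ref{def:pbernsbm} gives $0 < \theta_0 \leq \theta_0 + \epsilon\norm{f}_\infty \leq \theta_0 + \epsilon < 2\theta_0 < 1$, so all hypotheses of Definition~\ref{def:pberngsbm} hold with $p_* = p$. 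Lemma~\ref{lem:pberngsbm} then immediately certifies that the $p$-marginal HSBM is a non-trivial, regular, strongly symmetric GSBM with marginal order $p$, which is the first assertion of the corollary.

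It remains to establish the limit. Again invoking property~5 of Lemma~\ref{lem:fprop} --- which bounds $|f(a)| \leq 2/k$ whenever $a$ has two or more distinct coordinate values and $|f(a) - 1| < (p+1)/k$ whenever $a_1 = \cdots = a_p$ --- I obtain the uniform estimate $\sup_{a \in [k]^p} |f(a) - 1_{a_1 = \cdots = a_p}| \leq (p+1)/k \to 0$ as $k \to \infty$, which is precisely $f(a) \to 1_{a_1 = \cdots = a_p}$. The community-detection reading follows at once: under $\PP$, the hyperedge on $S$ is present with probability $\theta_0 + \epsilon f(x_S)$, which tends to $\theta_0 + \epsilon > \theta_0$ when all vertices of $S$ lie in a common community and to $\theta_0$ otherwise. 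I do not expect any genuine obstacle here: the combinatorial substance is entirely contained in Lemmas~\ref{lem:pberngsbm} and~\ref{lem:fprop}, and the only point that needs a moment's attention is the normalization bound, which is immediate once one notes $\norm{f}_\infty \leq 1$.
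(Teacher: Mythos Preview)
Your proposal is correct and follows exactly the route the paper intends: the corollary in the paper is stated without proof, and your argument supplies precisely the bridge the paper leaves implicit---verifying that properties 1--4 of Lemma~\ref{lem:fprop} match the hypotheses of Definition~\ref{def:pberngsbm}, checking $\norm{f}_\infty \le 1$ via property~5 so that the normalization constraint holds under $0 < \epsilon < \theta_0 < 1/2$, and then reading off the limit from the quantitative bounds in property~5. There is nothing to add.
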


We conclude by proving \Cref{thm:ldm_lowerbounds} for the $p$-marginal HSBM by bounding the injective norm of its marginal characteristic tensor.

\begin{lemma}[LCDF lower bound for $p$-marginal HSBM]
    Let $\PP_n$ and $\QQ_n$ be the planted and null distributions of a $p$-marginal HSBM on $n$ vertices with $p \geq 3$. No sequence of functions of coordinate degree at most $D(n)$ can strongly separate $\QQ_n$ from $\PP_n$ for
    \begin{align}
        D(n) \lesssim \lr{\frac{\epsilon}{\sqrt \theta_0}}^{-1/(p-2)} n^{-1/4 + 1/2p}.
    \end{align}
\end{lemma}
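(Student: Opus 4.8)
The plan is to obtain the lemma as a direct application of \Cref{thm:kun} (Theorem 1.13 of~\cite{kunisky2024low}), with essentially all of its hypotheses already supplied by the preceding results. By the corollary on properties of the $p$-marginal HSBM together with \Cref{lem:pberngsbm,lem:fprop}, the $p$-marginal HSBM is a non-trivial, regular, strongly (hence weakly) symmetric GSBM whose marginal order is exactly $p_* = p$. In particular, in the hypothesis of \Cref{thm:kun} the maximum over $p_* \le j \le p$ collapses to the single term $j = p$, since $T^{(j)} = 0$ for all $j < p$. So the entire argument reduces to controlling the injective norm $\|T^{(p)}\|_{\mathrm{inj}} := \max_{\|w\| = 1}\left|\sum_{i_1,\dots,i_p} T^{(p)}_{i_1,\dots,i_p} w_{i_1}\cdots w_{i_p}\right|$ of the characteristic tensor.

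First I would compute $T^{(p)}$ in closed form. With $\mu_a = \bern(\theta_0 + \epsilon f(a))$ and $\mu_{\mathrm{avg}} = \bern(\theta_0)$ one has $\frac{d\mu_a}{d\mu_{\mathrm{avg}}}(1) - 1 = \epsilon f(a)/\theta_0$ and $\frac{d\mu_a}{d\mu_{\mathrm{avg}}}(0) - 1 = -\epsilon f(a)/(1-\theta_0)$, so evaluating the defining expectation gives
\begin{align}
    T_{(a_1,b_1),\dots,(a_p,b_p)} = \frac{1}{p!}\,\E_{y\sim\mu_{\mathrm{avg}}}\!\left[\Big(\tfrac{d\mu_a}{d\mu_{\mathrm{avg}}}(y)-1\Big)\!\Big(\tfrac{d\mu_b}{d\mu_{\mathrm{avg}}}(y)-1\Big)\right] = \frac{\beta^2}{p!}\, f(a_1,\dots,a_p)\, f(b_1,\dots,b_p),
\end{align}
where $\beta = \epsilon/\sqrt{\theta_0(1-\theta_0)}$; this is the same computation underlying \Cref{lem:pberngsbm}, now performed without marginalizing.

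The heart of the proof is bounding $\|T^{(p)}\|_{\mathrm{inj}}$. Using $f(a) = \sum_{i=1}^k \prod_{j=1}^p (1_{a_j=i} - 1/k)$ from \Cref{lem:fprop}, for a unit vector $w \in \R^{k^2}$ (indexed by pairs $(a,b) \in [k]^2$) one factorizes
\begin{align}
    \sum_{(a_1,b_1),\dots,(a_p,b_p)} T^{(p)}_{(a_1,b_1),\dots,(a_p,b_p)}\, w_{(a_1,b_1)}\cdots w_{(a_p,b_p)} = \frac{\beta^2}{p!}\sum_{i,i'=1}^k \langle u_{i,i'}, w\rangle^p,
\end{align}
where $u_{i,i'} = v_i \otimes v_{i'} \in \R^{k^2}$ with $(v_i)_a = 1_{a=i}-1/k$. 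Two elementary facts then finish the bound. (i) $\|v_i\|^2 = (k-1)/k$, so $\|u_{i,i'}\| = \|v_i\|\,\|v_{i'}\| = (k-1)/k < 1$ and hence $|\langle u_{i,i'},w\rangle| \le (k-1)/k$. (ii) The Gram-type matrix $\sum_{i,i'} u_{i,i'} u_{i,i'}^\top = M \otimes M$ with $M = \sum_i v_i v_i^\top = I_k - \tfrac1k J_k$, which has operator norm $1$ (kernel spanned by the all-ones vector $\mathbf{1}$, identity on $\mathbf{1}^\perp$), so $\sum_{i,i'} \langle u_{i,i'}, w\rangle^2 = w^\top (M\otimes M) w \le 1$. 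Combining, $\big|\sum_{i,i'}\langle u_{i,i'},w\rangle^p\big| \le \big(\tfrac{k-1}{k}\big)^{p-2}\sum_{i,i'}\langle u_{i,i'},w\rangle^2 \le 1$, and therefore $\|T^{(p)}\|_{\mathrm{inj}} \le \beta^2/p! =: C'_{k,p}\,\beta^2$ (and, taking $w \propto v_1\otimes v_1$, this is tight up to a $k,p$-dependent constant, so nothing is lost).

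Finally I would substitute into \Cref{thm:kun} with $p_* = p$: since $p \ge 3$ we have $1 - p_*/2 < 0$, and its hypothesis $\|T^{(p)}\|_{\mathrm{inj}} \le c\,n^{-p+p_*/2}D(n)^{1-p_*/2} = c\,n^{-p/2}D(n)^{-(p-2)/2}$ becomes $C'_{k,p}\beta^2 \le c\,n^{-p/2}D(n)^{-(p-2)/2}$. Rearranging, this is precisely the SNR threshold $\beta \le C_{k,p}\,D(n)^{1/2-p/4}n^{-p/4}$ of \Cref{thm:ldm_lowerbounds}; rewriting in terms of $\epsilon/\sqrt{\theta_0}$ (using $\beta = \Theta(\epsilon/\sqrt{\theta_0})$, valid since $0 < \theta_0 < 1/2$) and solving for $D(n)$ yields a bound of the stated form. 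One also checks that the resulting $D(n)$ satisfies the remaining hypothesis $D(n) \le cn$ throughout the hard regime $\beta \gg n^{(1-p)/2}$. The main obstacle is step (ii) — the injective-norm bound — where the essential point is to recognize that the rank-one directions $u_{i,i'}$ produced by $f$ form an almost orthonormal family, their Gram matrix being $M\otimes M$ of unit norm; this is exactly what pins $\|T^{(p)}\|_{\mathrm{inj}}$ at scale $\Theta_{k,p}(\beta^2)$. Once this structure is identified, inverting the inequality from \Cref{thm:kun} is routine bookkeeping.
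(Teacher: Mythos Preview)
Your proposal is correct and follows the same route as the paper: identify the model as a GSBM of marginal order $p_*=p$, compute the characteristic tensor $T^{(p)}_{(a,b)}=\tfrac{\beta^2}{p!}f(a)f(b)$, and plug into \Cref{thm:kun}. The one notable difference is that the paper only exhibits a \emph{lower} bound on the injective norm (by evaluating at $v_{(1,1)}=1$), leaving the upper bound $\|T^{(p)}\|_{\mathrm{inj}}=O_{k,p}(\beta^2)$ implicit; you instead prove the sharp upper bound $\|T^{(p)}\|_{\mathrm{inj}}\le \beta^2/p!$ via the clean observation that $\sum_i v_iv_i^\top = I_k-\tfrac1kJ_k$ is a projector, which is precisely the direction actually required by \Cref{thm:kun}.
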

\begin{proof}
    Writing
    \begin{align}
        f(a) = \sum_{i=1}^k \lr{1 - \frac{1}{k}}^{m_i} \lr{-\frac{1}{k}}^{p-m_i}, \qquad m_i = |\{j \leq p : a_j = i\}|,
    \end{align}
    the characteristic tensor has elements
    \begin{align}
        T_{a,b} = \frac{\epsilon^2}{p!} \lr{\frac{1}{\theta_0} + \frac{1}{1-\theta_0}} f(a) f(b) = \frac{\epsilon^2}{p!} \lr{\frac{1}{\theta_0} + \frac{1}{1-\theta_0}} \sum_{i,j=1}^k \lr{1 - \frac{1}{k}}^{m_i(a) + m_j(b)} \lr{-\frac{1}{k}}^{2p-m_i(a) - m_j(b)}
    \end{align}
    and thus, choosing $v_{(1,1)} = 1$ and 0 elsewhere,
    \begin{align}
        \max_{\norm{v}=1} \left|\sum_{a,b \in [k]^p} T_{a,b} v_{(a_1,b_1)}\cdots v_{(a_p,b_p)}\right| &\geq \frac{\epsilon^2}{p!} \lr{\frac{1}{\theta_0} + \frac{1}{1-\theta_0}}\left[\lr{1 - \frac{1}{k}}^p + (k-1)k^{-p}\right]^2.
    \end{align}
    Asymptotically, for $\theta_0 < 1/2$, the relation
    \begin{align}
        \max_{\norm{v}=1} \left|\sum_{a,b \in [k]^p} T_{a,b} v_{(a_1,b_1)}\cdots v_{(a_p,b_p)}\right| \leq cn^{-p/2} D(n)^{1-p/2}.
    \end{align}
    from \Cref{thm:kun} then gives threshold
    \begin{align}
        \beta \leq C'(k,p) n^{-p/4} D(n)^{1/2-p/4}
    \end{align}
    for some constant $C'(k,p)$ independent of $n$ and $\beta$.
\end{proof}

\section{Correctness and efficiency of the quantum Kikuchi algorithm}
\label{sec:qalg}
In this section, we describe a quantum algorithm that achieves a quartic quantum speedup over the classical Kikuchi method presented in \Cref{sec:Classical}. The core idea is to instantiate hypergraph community detection as an instance of a fine-grained version of the guided sparse Hamiltonian problem. 
\subsection{Quantized Kikuchi algorithm}
Recall our quantum algorithm from \Cref{sec:Quantum}.

\begin{center}
    \textbf{ Quantized Kikuchi method}
\end{center}
\noindent\hrulefill\\
\noindent\textbf{Input:} A $p$-marginal HSBM$(n,k,\theta_0,\eps)$ instance $\mathbf{Y}$.  \\
\noindent\textbf{Preprocessing:} As in the classical Kikuchi algorithm, choose a sufficiently large $\ell$ and a threshold $\tau$ in accordance with a classical ``Kikuchi Theorem'' at SNR $\beta$, such as \Cref{thm:kikuchi_bosonic}.\\
\textbf{Quantum algorithm:}
Encode the following in Amplitude Amplification and repeat $O(n^{\ell/4})$ times:
\begin{itemize}[left = 0.2cm]
        \item Prepare $\ell/p$ unentangled copies of a guiding state $\ket{\phi}$. Symmetrize the resulting state to obtain a guiding state $\ket{\Phi}$. 
        \item Perform Quantum Phase Estimation with the sparse Hamiltonian $\cK_\ell$ on the initial state $\ket{\Phi}$.
        \item Measure the eigenvalue register and record whether an eigenvalue above the threshold $\tau$ was sampled.
\end{itemize}
\textbf{Output:} If during any of the repetitions, an eigenvalue above $\tau$ was found, return ``Planted''. Otherwise, return ``Random''. 

\noindent\hrulefill

A main technical analysis of this algorithm is already provided by the classical discussion in \Cref{sec:classical_support}.
The main challenge we address in this section is the existence of an efficiently preparable guiding state that has improve overlap with the leading eigenspace of the Kikuchi matrix, that is, the eigenspace that certifies the presence of a community structure. We show this in two main steps: First, that there exists an efficiently preparable guiding state $\ket{u}$ that has improved overlap with the certificate vector $\ket{v}$.  \begin{definition}[Guiding vector] \label{def:guide_vector}
    Let $\ket{u'}$ be the unnormalized vector given by
    \begin{align}
        \ket{u'} = \sum_{S \in \binom{[n]}{p}} \frac{Y_S - \theta_0}{\sqrt{\theta_0(1-\theta_0)}} \ket{S}.
    \end{align}
    The guiding vector $\ket{u}$ is the unnormalized vector $\Pi_\lambda \ket{u'}^{\otimes \ell/p}$.
\end{definition}
Then, we prove lower bounds on the overlap between a slight modification of this guiding state and the leading eigenspace of the Kikuchi matrix. Our final result is summarized below. 
\begin{theorem}\label{thm:final_kikuchi_bosonic}
Consider the $p$-marginal HSBM with $p$ even. Let $\ell \in[p / 2, n-p / 2]$ and $\epsilon, \theta_0$ in accordance with \eqref{eq:paramconds_sec}, and let 
\begin{equation}
    \beta := \frac{\epsilon}{\sqrt{\theta_0(1-\theta_0)}} \geq \frac{3\sqrt{6}}{C_{k,p}} \ell^{1/2 - p/4} n^{-p/4}\sqrt{\log(n)},
\end{equation} in accordance with the classical Kikuchi theorem \Cref{thm:kikuchi_bosonic}. Let $\cKt$ be the $\ell$-th order bosonic Kikuchi matrix obtained by sample splitting as described in \Cref{sec:sample_splitting}. Let $\Pi_\geq(\cKt)$ denote the projector onto the eigenvectors of $\cKt$ with eigenvalues at least $\tau.$ Then there exists an efficiently preparable guiding state $\ket{\Tilde u}$ such that 
\begin{equation}
    \Pr_\PP\left[    \frac{\bra{\Tilde u} \Pi_\geq(\cKt) \ket{\Tilde u}}{\braket{\Tilde u}{\Tilde u}} \geq \exp(-\Tilde{O}(\ell))
    \cdot \Tilde{O}\left( n^{-\ell/2} \right) \cdot \Tilde{O}\left( \log(n)^{-\ell/p} \right) \right] \geq 1-o(1).
\end{equation}
\end{theorem}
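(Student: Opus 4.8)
(For \Cref{thm:final_kikuchi_bosonic}.) The strategy, following \cite{schmidhuber2025quartic}, has three parts: decouple the guiding state from the Kikuchi matrix by sample splitting; show that the classical certificate $\ket v$ of \Cref{def:certificate} keeps a constant fraction of its mass in the leading eigenspace $\Pi_\geq(\cKt)$; and transfer this overlap to the efficiently-preparable guiding state $\ket{\Tilde u}$ by a second-moment argument. For the splitting I would partition the hyperedge slots $\binom{[n]}{p}$ into a ``Kikuchi pool'' and a ``guiding pool'' (\Cref{sec:sample_splitting}); $\cKt$ is the bosonic Kikuchi matrix built from the Kikuchi pool, $\ket{\Tilde u}=\Pi_\lambda\ket{\Tilde u'}^{\otimes\ell/p}$ with $\ket{\Tilde u'}=\sum_{S\text{ in guiding pool}}A_S\ket S$ (whose preparation as a tensor product of ``small'' states followed by projection onto the collision-free subspace $\Pi_\lambda$ is the construction of \Cref{sec:symmetrization}), and $\ket v$ is the certificate of \Cref{def:certificate} restricted to the guiding pool. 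The essential structural fact is that, conditioned on the community labels $x$, the matrix $\cKt$ and the state $\ket{\Tilde u}$ depend on disjoint, hence independent, Bernoulli families, while $\ket v$ is a deterministic function of $x$.

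\emph{Step 1: the certificate survives the spectral truncation.} I would show $\langle v|\Pi_\geq(\cKt)|v\rangle\geq\delta\norm v^2$ for a constant $\delta=\delta_{k,p}>0$, w.h.p.\ over the Kikuchi pool (given $x$). Since sample splitting affects only constants, the sample-split forms of \Cref{lem:kmean,lem:kvar}, together with the concentration of $\norm v^2$ (\Cref{lem:certnorm}), give $\langle v|\cKt|v\rangle\geq\tfrac43\tau\norm v^2$ exactly as in \Cref{cor:planted_energy} (with the constant bookkeeping deferred to \Cref{sec:sample_splitting}). For an upper bound on $\norm{\cKt}$ I split $\cKt=\cKt_{\mathrm{sig}}+\cKt_{\mathrm{noise}}$ into its conditional mean (entries $\beta f(x_S)$) and a mean-zero remainder: the signal part has operator norm $O(\beta n^{p/2}\ell^{p/2})$ by a row-sum bound using $|f|\leq1$, while the remainder has norm $O(\sqrt{n^{p/2}\ell^{1+p/2}\log n})=O(\beta n^{p/2}\ell^{p/2})$ by the Matrix Bernstein estimate of \Cref{lem:bosonic_nullknorm}, the two being of the same order precisely at the threshold SNR. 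Hence $\norm{\cKt}=O(\tau)$, and the elementary inequality $\langle v|\cKt|v\rangle\leq\norm{\cKt}\,\langle v|\Pi_\geq(\cKt)|v\rangle+\tau\norm v^2$ rearranges to $\langle v|\Pi_\geq(\cKt)|v\rangle\geq(\langle v|\cKt|v\rangle-\tau\norm v^2)/\norm{\cKt}=\Omega(\norm v^2)$.

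\emph{Step 2: second-moment transfer.} Condition on $x$ and on a Kikuchi pool for which Step 1 holds. The key identity is $\E[\ket{\Tilde u}\mid x]=\beta^{\ell/p}\ket v$: although $\ket{\Tilde u'}^{\otimes\ell/p}$ is a tensor power of a single random vector, $\Pi_\lambda$ retains only entries indexed by pairwise-disjoint hyperedge tuples, where $\E[\prod_i A_{S_i}\mid x]=\prod_i\beta f(x_{S_i})$. Writing $\ket{\Tilde u}=\beta^{\ell/p}\ket v+\ket\eta$ with $\E[\ket\eta\mid x]=0$,
\begin{align*}
\langle\Tilde u|\Pi_\geq(\cKt)|\Tilde u\rangle=\beta^{2\ell/p}\langle v|\Pi_\geq(\cKt)|v\rangle+2\beta^{\ell/p}\mathrm{Re}\,\langle v|\Pi_\geq(\cKt)|\eta\rangle+\langle\eta|\Pi_\geq(\cKt)|\eta\rangle,
\end{align*}
so, the last term being $\geq0$, it suffices to control the cross term, for which I would prove $\Var\big(\langle v|\Pi_\geq(\cKt)|\Tilde u\rangle\big)=o\big(\beta^{2\ell/p}\big(\langle v|\Pi_\geq(\cKt)|v\rangle\big)^2\big)$ over the guiding pool and invoke Chebyshev. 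Combined with the standard concentration of $\norm{\Tilde u}^2$ and $\norm v^2$, a union bound over these $O(1)$ events gives, w.h.p.,
\begin{align*}
\frac{\langle\Tilde u|\Pi_\geq(\cKt)|\Tilde u\rangle}{\langle\Tilde u|\Tilde u\rangle}\;\gtrsim\;\delta\cdot\frac{\beta^{2\ell/p}\,\E\norm v^2}{\E\norm{\Tilde u}^2}\;\gtrsim\;\delta\cdot\frac{|\langle\Tilde u|v\rangle|^2}{\norm{\Tilde u}^2\norm v^2},
\end{align*}
i.e.\ a constant fraction of the squared guiding/certificate overlap. Plugging in $|\langle\Tilde u|v\rangle|/(\norm{\Tilde u}\norm v)\geq\Tilde O(n^{-\ell/4}(\ell!)^{1/2p-1/4})$ from \Cref{lem:guideoverlap} (the formal version of \Cref{lem:momenttail_informal}), using $(\ell!)^{1/p-1/2}=\exp(-\Tilde O(\ell))$ and $\mu=\Theta(1)$, and noting the bound is monotone in $\beta$ so the worst case is $\beta$ at threshold (where $\beta^{2\ell/p}\gtrsim n^{-\ell/2}(\log n)^{\ell/p}\geq n^{-\ell/2}(\log n)^{-\ell/p}$), yields the claimed $\exp(-\Tilde O(\ell))\cdot\Tilde O(n^{-\ell/2})\cdot\Tilde O(\log(n)^{-\ell/p})$.

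\emph{Main obstacle.} The crux is the variance estimate in Step 2. Unlike $\cKt$, the spectral projector $\Pi_\geq(\cKt)$ carries no combinatorial structure, so the bound must be obtained by expanding $\ket{\Tilde u'}^{\otimes\ell/p}$ over coincidence patterns of hyperedges exactly as in \Cref{lem:kvar}, using only the crude operator bound $\norm{\Pi_\geq(\cKt)\ket v}\leq\norm v$. What makes it work is that the fully-non-coincident contribution to the second moment equals $\beta^{2\ell/p}\big(\langle v|\Pi_\geq(\cKt)|v\rangle\big)^2$ exactly and cancels against the squared mean, so every surviving term carries at least one coincidence and hence a factor $O(1/(\beta^2 n^p))$; since $\beta^2 n^p=\omega(\ell^2)$ under \eqref{eq:paramconds_sec}, the geometric sum over patterns converges and gives the $o(\cdot)$ bound. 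A secondary bookkeeping point, handled in \Cref{sec:sample_splitting}, is checking that after splitting $\cKt$ still clears the planted threshold $\tau$ with the constant $C_{k,p}$ of \Cref{thm:kikuchi_bosonic}.
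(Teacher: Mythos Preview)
Your three-step architecture (sample splitting, certificate survives $\Pi_\geq$, second-moment transfer) is exactly the paper's. The one substantive slip is your choice to \emph{restrict the certificate $\ket v$ to the guiding pool}. If you do that, Step~1 fails: writing $M_S$ for the guiding-pool indicator, the restricted certificate has $\|v_{\mathrm{guide}}\|^2\approx\zeta^\lambda\|v\|^2$, while each nonzero term of $\langle v_{\mathrm{guide}}|\cKt|v_{\mathrm{guide}}\rangle$ carries $\lambda+|D|$ distinct $M$-indicators (the $\lambda-|D|$ shared blocks, the $2|D|$ unshared ones), so its expectation picks up $\zeta^{\lambda+|D|}$ with the dominant $|D|=p/2$. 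The Rayleigh quotient therefore drops by $\zeta^{p/2}=(\log n)^{-p/2}$ and no longer clears $\tau$, so your Markov argument for $\langle v|\Pi_\geq|v\rangle\ge\delta\|v\|^2$ does not go through. The paper avoids this by keeping $\ket v$ \emph{unrestricted} (it depends only on $x$), so that \eqref{eq:batched_ineqs} gives $\langle v|\cKt|v\rangle\ge\frac43(1-\zeta)\tau\|v\|^2$ directly. With unrestricted $\ket v$ the mean identity becomes $\E[\ket{\Tilde u}\mid x,M]=(\zeta\beta)^{\lambda}\ket v$, not $\beta^\lambda\ket v$; the extra $\zeta^\lambda=(\log n)^{-\ell/p}$ is precisely where the $\Tilde O((\log n)^{-\ell/p})$ factor in the theorem comes from (your final paragraph manufactures it from a loose inequality on $\beta^{2\lambda}$, which is not its actual source).

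On Step~2 your mechanism differs from the paper's. You propose a conditional (fixed-$x$) coincidence expansion with geometric decay $O(1/(\beta^2 n^p))$ per repeated block; this is workable but messier than advertised, since $\Pi_\geq\ket v$ has no tensor structure and you must pass through an operator-norm/row-sum bound on the covariance kernel. The paper instead fixes the single unit vector $\ket s=\Pi_\geq\ket v/\|\Pi_\geq\ket v\|$, applies Chebyshev to the scalar $\langle s|\Tilde u\rangle$, and controls the \emph{$x$-averaged} conditional variance: by \Cref{lem:main_technical}, $\E_x W(a,b)=0$ for every overlap level $a<\lambda$, so only the fully-coincident term survives and $\E_x\Var_{\PP(x)}\langle s|\Tilde u\rangle\le\lambda!$. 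This is both cleaner and what actually produces the $\lambda!$ (hence the $\ell^{\ell/4-\ell/2p}$ in \Cref{thm:kikuchi_q}) that your expansion would have to rediscover.
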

The exponential factors in $\ell$ and polynomial factors in $n$ are negligible compared to the overall scaling of $n^{O(\ell)}$. The inverse of the square root of the guiding state overlap is the source of our quartic quantum speedup. We prove in \Cref{sec:symmetrization} that the guiding state can be prepared efficiently. 
\subsection{Guiding state overlap}
The following moments will be useful.
\begin{lemma}\label{lem:momentquantities}
    For sets $S, T \in \binom{[n]}{p}$, define
    \begin{align}
        \mu = \EE{f(x_S)^2} , \qquad \gamma = \EE{f(x_S)^3}, \qquad \alpha = \EE{f(x_S)^4}, \qquad \beta_r = \EE{f(x_S)^2 f(x_T)^2} \text{ for } |S\cap T| = r
    \end{align}
    where expectations are over uniform $x \in [k]^n$.
    Then for even $p$,
    \begin{align}
        \mu &= \frac{(k-1)^p+(-1)^p(k-1)}{k^{2p-1}}\\
        \gamma &= k^{-2p}\lr{(k-1)^p(k-2)^p + 3(k-1)(k-2)^p + (k-1)(k-2)2^p}\\
        \alpha &= k \Bigg((k-3) (k-2) (k-1) 3^p k^{-4 p}+4 (k-1) \left(\frac{(k-1)^3+1}{k^5}\right)^p+\left(\frac{(k-1)^4+k-1}{k^5}\right)^p\nonumber\\
        &\qquad\quad +6 (k-2) (k-1) \left(\frac{k-3}{k^4}\right)^p+3 (k-1) \left(\frac{2 k-3}{k^4}\right)^p\Bigg)\\
        \beta_r &\leq \alpha.
    \end{align}
\end{lemma}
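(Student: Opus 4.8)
The plan is to reduce each moment of $f(x_S)$ to a sum of single-coordinate (``single-site'') moments, exploiting that for $S\in\binom{[n]}{p}$ the coordinates $(x_j)_{j\in S}$ are i.i.d.\ uniform on $[k]$. Writing $f(x_S)=\sum_{i=1}^k\prod_{j\in S}\lr{1_{x_j=i}-\tfrac{1}{k}}$ and expanding the $m$-th power over labels $i_1,\dots,i_m\in[k]$, independence across $j\in S$ gives
\begin{align}
    \EE{f(x_S)^m} = \sum_{i_1,\dots,i_m\in[k]} \prod_{j\in S}\EE{\prod_{\ell=1}^m\lr{1_{x_j=i_\ell}-\tfrac{1}{k}}} = \sum_{i_1,\dots,i_m\in[k]} \phi_m(i_1,\dots,i_m)^p,
\end{align}
where $\phi_m(i_1,\dots,i_m):=\EE{\prod_{\ell=1}^m\lr{1_{\xi=i_\ell}-\tfrac{1}{k}}}$ for a single $\xi\sim\unif([k])$. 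The key point is that $\phi_m$ depends on $(i_1,\dots,i_m)$ only through the set partition of $[m]$ recording which of the labels coincide: if that partition has blocks of sizes $\lambda_1,\dots,\lambda_r$, then conditioning on $\xi$ (which equals at most one of the $r$ distinct labels) gives the closed form
\begin{align}
    \phi_m = \frac{1}{k}\sum_{s=1}^r\lr{1-\tfrac{1}{k}}^{\lambda_s}\prod_{s'\neq s}\lr{-\tfrac{1}{k}}^{\lambda_{s'}} + \frac{k-r}{k}\lr{-\tfrac{1}{k}}^m.
\end{align}

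The next step is to group the label-tuples by their induced partition. Writing $(k)_r:=k(k-1)\cdots(k-r+1)$, there are $(k)_r$ tuples realizing a fixed partition into $r$ blocks, and letting $M_\lambda$ be the number of set partitions of $[m]$ whose block-size multiset equals the integer partition $\lambda$ of $m$, this collects to
\begin{align}
    \EE{f(x_S)^m} = \sum_{\lambda} M_\lambda\,(k)_{r(\lambda)}\,\phi_\lambda^p,
\end{align}
the sum being over integer partitions $\lambda$ of $m$, with $r(\lambda)$ the number of parts and $\phi_\lambda$ the common value of $\phi_m$ on that profile. Plugging in then gives the stated formulas: for $m=2$, $\phi_{(2)}=\tfrac{k-1}{k^2}$ and $\phi_{(1,1)}=-\tfrac{1}{k^2}$, yielding $\mu$; for $m=3$, with $M_{(3)}=1,\ M_{(2,1)}=3,\ M_{(1^3)}=1$ and $\phi_{(3)}=\tfrac{(k-1)(k-2)}{k^3},\ \phi_{(2,1)}=-\tfrac{k-2}{k^3},\ \phi_{(1^3)}=\tfrac{2}{k^3}$, yielding $\gamma$ after collecting the falling-factorial weights and absorbing signs ($p$ being even); and for $m=4$, with $M_{(4)}=1,\ M_{(3,1)}=4,\ M_{(2,2)}=3,\ M_{(2,1^2)}=6,\ M_{(1^4)}=1$ and $\phi_{(4)}=\tfrac{(k-1)^4+(k-1)}{k^5},\ \phi_{(3,1)}=-\tfrac{(k-1)^3+1}{k^5},\ \phi_{(2,2)}=\tfrac{2k-3}{k^4},\ \phi_{(2,1^2)}=\tfrac{k-3}{k^4},\ \phi_{(1^4)}=-\tfrac{3}{k^4}$, which after multiplying by the $M_\lambda\,(k)_{r(\lambda)}$ and extracting a common factor $k$ yields exactly the five listed terms of $\alpha$.

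For $\beta_r$ no explicit formula is needed: since $f(x_S)^2f(x_T)^2\le\tfrac{1}{2}\bigl(f(x_S)^4+f(x_T)^4\bigr)$ pointwise (equivalently $(f(x_S)^2-f(x_T)^2)^2\ge 0$), and since $\EE{f(x_S)^4}=\EE{f(x_T)^4}=\alpha$ for any two $p$-subsets $S,T$ by exchangeability of the coordinates of $x$, taking expectations gives $\beta_r\le\alpha$ for every $r$. The only part requiring genuine care is the $m=4$ case of the main computation: each of the five single-site moments $\phi_\lambda$ is elementary to evaluate (expand $\lr{1_{\xi=v}-\tfrac{1}{k}}^a$ via $1_{\xi=v}^a=1_{\xi=v}$ and condition on $\xi$), but one must correctly enumerate the $15$ set partitions of $[4]$ by profile and track the falling-factorial weights so that the assembled sum matches the stated form. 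The $\mu$ and $\gamma$ computations and the bound $\beta_r\le\alpha$ are short, and nothing beyond independence/exchangeability of the $x_j$ enters anywhere.
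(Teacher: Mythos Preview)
Your proof is correct and is a fully worked-out version of the paper's argument, which simply records that $\mu,\gamma,\alpha$ are ``direct evaluations'' and obtains $\beta_r\le\alpha$ by Cauchy--Schwarz. The only cosmetic difference is that you bound $\beta_r$ via the pointwise AM--GM inequality $f(x_S)^2f(x_T)^2\le\tfrac12\bigl(f(x_S)^4+f(x_T)^4\bigr)$ rather than Cauchy--Schwarz; both give $\beta_r\le\alpha$ immediately.
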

\begin{proof}
    The computations of $\mu, \gamma, \alpha$ are direct evaluations. Cauchy-Schwarz gives
    \begin{align}
        \beta_r \leq \sqrt{\EE{f(x_S)^4} \EE{f(x_T)^4}} = \alpha.
    \end{align}
\end{proof}

\begin{lemma}[Certificate state norm]\label{lem:certnorm}
    For $\ell = o(\sqrt n)$ and any constant $\delta > 0$
    \begin{align}
        \pr{\left|\norm{v}^2 - \binom{n}{p}^\lambda \mu^\lambda\right| \geq \delta \binom{n}{p}^\lambda \mu^\lambda} = o(1).
    \end{align}
\end{lemma}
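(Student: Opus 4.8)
The plan is a second moment (Chebyshev) computation, parallel to but strictly simpler than the variance bound of \Cref{lem:kvar}. First I would rewrite the squared norm combinatorially. Since $\{\ket{S}\}_{S\in\binom{[n]}{p}}$ is orthonormal and $\Pi_\lambda$ projects onto the span of the orthonormal family $\{\ket{S_1}\otimes\cdots\otimes\ket{S_\lambda} : (S_1,\dots,S_\lambda)\in\cC_\lambda\}$, expanding $\ket{v} = \Pi_\lambda\ket{v'}^{\otimes\lambda}$ (with $\lambda = \ell/p$) gives
\begin{align}
    \norm{v}^2 = \sum_{(S_1,\dots,S_\lambda)\in\cC_\lambda} \prod_{i=1}^\lambda f(x_{S_i})^2.
\end{align}
For the mean, in a collision-free tuple the $p$-sets $S_1,\dots,S_\lambda$ are pairwise disjoint, so the coordinates of $x$ in distinct blocks are independent and $\E_x \prod_{i=1}^\lambda f(x_{S_i})^2 = \mu^\lambda$. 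Combined with $|\cC_\lambda| = \binom{n}{p}^\lambda(1 + O(\lambda^2/n))$ from \Cref{lem:coverlap} (using $\lambda = o(\sqrt n)$), this yields $\E\norm{v}^2 = \binom{n}{p}^\lambda\mu^\lambda(1 + o(1))$.

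The crux is the variance. I would expand $\E\norm{v}^4 = \sum_{S, S' \in \cC_\lambda} \E_x[\prod_{i}f(x_{S_i})^2 \prod_{j}f(x_{S'_j})^2]$ and organize the sum by the number $r = r(S,S')$ of coordinates lying in some block of $S$ and some block of $S'$, using $\#\{(S,S') : r(S,S')=r\} \leq |\cC_\lambda|^2 \binom{\lambda p}{r}(\lambda p/n)^r$ from \Cref{lem:coverlap}. When $r=0$ the two block families sit on disjoint coordinate sets and the term equals $\mu^{2\lambda}$ exactly, so these contribute at most $|\cC_\lambda|^2\mu^{2\lambda} = (\E\norm{v}^2)^2$ and are cancelled up to a nonpositive remainder when forming $\Var\norm{v}^2 = \E\norm{v}^4 - (\E\norm{v}^2)^2$. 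When $r\geq 1$, at most $r$ blocks of $S$ and at most $r$ blocks of $S'$ meet the shared coordinates; the $\geq\lambda-r$ remaining blocks on each side lie on coordinates disjoint from everything else and each contribute a factor $\mu$ by independence, while the product over the "touched" blocks has absolute value at most $1$ since $|f|\leq 1$ by \Cref{lem:fprop}. Hence each such term is bounded in absolute value by $\mu^{2\lambda-2r}$, and summing,
\begin{align}
    \Var\norm{v}^2 \leq |\cC_\lambda|^2\mu^{2\lambda} \sum_{r=1}^{\lambda p} \binom{\lambda p}{r}\lr{\frac{\lambda p}{\mu^2 n}}^r = |\cC_\lambda|^2 \mu^{2\lambda}\left[\lr{1 + \frac{\lambda p}{\mu^2 n}}^{\lambda p} - 1\right].
\end{align}
Since $\mu = \mu_{k,p}\in(0,1)$ and $p$ are constants and $\lambda = o(\sqrt n)$, the bracket is $O(\lambda^2/n)$, so $\Var\norm{v}^2 = O(\lambda^2/n)\cdot(\E\norm{v}^2)^2 = o((\E\norm{v}^2)^2)$.

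Finally, Chebyshev's inequality concentrates $\norm{v}^2$ around $\E\norm{v}^2$ to any constant relative error with probability $1-o(1)$; absorbing the $1 + O(\lambda^2/n)$ factor relating $\E\norm{v}^2$ to $\binom{n}{p}^\lambda\mu^\lambda$ then gives the claim for every constant $\delta>0$. The only delicate point is the variance step: one must verify that the constant $\mu^{-2}$ in the per-$r$ bound is genuinely independent of $n$ (so the geometric series telescopes to $O(\lambda^2/n)$ rather than blowing up with $\ell$), and count carefully which blocks are touched by the shared coordinates. This is mechanically the same bookkeeping as in \Cref{lem:kvar}, but without the symmetric-difference constraint imposed by $\cK$, so no new idea beyond $|f|\leq 1$ and \Cref{lem:coverlap} is needed.
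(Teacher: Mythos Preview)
Your proposal is correct and follows essentially the same second-moment/Chebyshev approach as the paper: compute $\E\norm{v}^2 = |\cC_\lambda|\mu^\lambda$, organize $\E\norm{v}^4$ by the overlap count $r(S,S')$ via \Cref{lem:coverlap}, and show the $r\geq 1$ contribution is $O(\lambda^2/n)\cdot(\E\norm{v}^2)^2$. The only (cosmetic) difference is that the paper bounds the ``touched'' factor by $\alpha = \E_x f(x_S)^4$ (giving constant $\alpha/\mu^2$ per overlap) using Cauchy--Schwarz as in \Cref{lem:momentquantities}, whereas you use the cruder $|f|\leq 1$ to get constant $1/\mu^2$; since both constants depend only on $k,p$, the conclusion is unchanged.
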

\begin{proof}
    The first moment is
    \begin{align}
        \E \norm{v}^2 = \E_x \sum_{(S_1, \dots, S_\lambda) \in \cC_\lambda} \prod_{i=1}^\lambda f(x_{S_i})^2 = |\cC_\lambda| \mu^\lambda
    \end{align}
    by independence of the $x$ coordinates. The second moment is computed via \Cref{lem:coverlap} using
    \begin{align}
        \E \norm{v}^4 = (\E \norm{v}^2)^2 + \E \sum_{r=1}^{\lambda p} \sum_{C,C' \in \cC_\lambda \,:\, r(C, C') = r} \prod_{i=1}^\lambda f(x_{C_i})^2 f(x_{C'_i})^2.
    \end{align}
    Since at most $r$ blocks agree when $r(C, C')=r$, we have that
    \begin{align}
        \E \prod_{i=1}^\lambda f(x_{C_i})^2 f(x_{C'_i})^2 \leq \lr{\frac{\alpha}{\mu^2}}^r \mu^{2\lambda}
    \end{align}
    for $\alpha = \E f(x_S)^4$ computed in \Cref{lem:momentquantities}. Hence,
    \begin{align}
        \E \norm{v}^4 &\leq (\E \norm{v}^2)^2 + |\cC_\lambda|^2 \mu^{2\lambda} \sum_{r=1}^{\lambda p} \binom{\lambda p}{r}\lr{\frac{\alpha \lambda p}{\mu^2 n}}^r\\
        &= (\E \norm{v}^2)^2 + |\cC_\lambda|^2 \mu^{2\lambda} \left[\lr{1 + \frac{\alpha \lambda p}{\mu^2 n}}^{\lambda p} - 1\right].
    \end{align}
    For $\ell = o(\sqrt n)$, we thus obtain variance
    \begin{align}
        \Var \norm{v}^2 &\leq (\E \norm{v}^2)^2 \left[\lr{1 + \frac{\alpha \lambda p}{\mu^2 n}}^{\lambda p} - 1\right] = (\E \norm{v}^2)^2 \cdot O\lr{\frac{\ell^2}{n}} = o\lr{(\E \norm{v}^2)^2}.
    \end{align}
    Chebyshev's inequality thus gives the claimed statement.
\end{proof}

\begin{lemma}[Guiding state norm]\label{lem:guidenorm}
    For $\ell = o(\sqrt n)$ and any constant $\delta > 0$,
    \begin{align}
        \pr{\left|\norm{u}^2 - \binom{n}{p}^\lambda\right| \geq \delta \binom{n}{p}^\lambda} = o(1).
    \end{align}
\end{lemma}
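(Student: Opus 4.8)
The plan is a second-moment argument against $\norm{u}^2$, mirroring the proof of \Cref{lem:certnorm}. Writing $\lambda = \ell/p$ and using that $\ket{u} = \Pi_\lambda\ket{u'}^{\otimes\ell/p}$ is supported on the orthonormal family $\{\ket{S_1}\otimes\cdots\otimes\ket{S_\lambda} : (S_1,\dots,S_\lambda)\in\cC_\lambda\}$, one has $\norm{u}^2 = \sum_{(S_1,\dots,S_\lambda)\in\cC_\lambda}\prod_{i=1}^\lambda A_{S_i}^2$ with $A_S = (Y_S-\theta_0)/\sqrt{\theta_0(1-\theta_0)}$. I would carry out the argument under $\PP$ (the case needed for \Cref{thm:final_kikuchi_bosonic}); under $\QQ$ it is identical with $\epsilon = 0$ and strictly simpler. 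Conditioning on the labels $x$, the two moments that enter are $\E[A_S^2\cond x] = 1 + c_0\beta f(x_S)$, where $c_0 = (1-2\theta_0)/\sqrt{\theta_0(1-\theta_0)}$ and the $\beta^2 f(x_S)^2$ term cancels, leaving something affine in $f(x_S)$, and $\E[A_S^4\cond x]$, also affine in $f(x_S)$ with constant term $H_0 = \frac{\theta_0^3+(1-\theta_0)^3}{\theta_0(1-\theta_0)} = \Theta(1/\theta_0)$ and linear coefficient $O(1/\theta_0)$. Since $|c_0\beta f(x_S)| = O(\epsilon/\theta_0) = O(1)$ by \eqref{eq:paramconds_sec}, the factor $g_1(x_S) := \E[A_S^2\cond x]$ is bounded and nonnegative. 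The first moment is then immediate: within a tuple of $\cC_\lambda$ the blocks are disjoint, so the $f(x_{S_i})$ are independent under $x$ with mean $0$, giving $\E_\PP\prod_i A_{S_i}^2 = 1$ and $\E_\PP\norm{u}^2 = |\cC_\lambda| = \binom{n}{p}^\lambda(1+O(\lambda^2/n))$ by \Cref{lem:coverlap}.

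For the second moment I would expand $\E_\PP\norm{u}^4 = \sum_{C,C'\in\cC_\lambda}\E_x\prod_S g_{m_S}(x_S)$, where $m_S\in\{1,2\}$ is the multiplicity of a set $S$ across the two tuples, $g_2(x_S) := \E[A_S^4\cond x]$, and $m_S = 2$ precisely for the shared blocks $S = C_i = C_j'$. Two structural facts organize the sum. First, a shared block is disjoint from every other block in the collection (a coordinate of $C_i = C_j'$ lies in no other $C$- or $C'$-block), so $x_S$ integrates out freely and contributes exactly the constant $\E_{x_S}g_2(x_S) = H_0$. Second, if a non-shared block $S$ has a coordinate $t$ in no other block, then $\E_{x_t}[g_1(x_S)\cond x_{S\setminus\{t\}}] = 1$, because $g_1$ is affine in $f(x_S)$ and $\E[f(a)\cond a_R] = 0$ whenever $|R| = p - 1 < p$, i.e. the $p$-marginal HSBM has marginal order $p$ (part 2 of \Cref{lem:fprop}). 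Iterating the second fact to stabilization, every summand equals $H_0^{k}$ times the $x$-expectation of a product of $g_1$-factors over a ``$2$-core'' of non-shared blocks — all contained in the common-coordinate set of $C$ and $C'$ and split into two sub-families of pairwise disjoint blocks, hence numbering at most $2(r - pk)/p$, where $k = k(C,C')$ is the number of shared blocks and $r = r(C,C')$ the number of common coordinates. Bounding each $g_1$-factor by an absolute constant, every summand is at most $H_0^k\,C_1^{(r-pk)/p}$; in particular a summand equals $1$ whenever $C,C'$ share no block and no non-shared block lies inside their common-coordinate set.

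To conclude, since each summand is nonnegative I would write $\Var_\PP\norm{u}^2 = \E_\PP\norm{u}^4 - (\E_\PP\norm{u}^2)^2 \le \sum_{C,C'}(\text{summand}-1)$; the pairs whose summand is identically $1$ number at most $|\cC_\lambda|^2 = (\E_\PP\norm{u}^2)^2$ and drop out. I would stratify the rest by $k$ and by the excess overlap $r'' = r - pk$, counting as in \Cref{lem:coverlap}: pick $C$, then $k$ shared blocks and their positions in $C'$ ($\lesssim\lambda^{2k}$ ways), then the remaining $\lambda-k$ blocks of $C'$ with the prescribed excess overlap ($\lesssim\binom{n}{p}^{\lambda-k}\binom{\lambda p}{r''}(\lambda p/n)^{r''}$). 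Summing over $r''\ge 0$ gives $(1 + O(\lambda p/n))^{\lambda p} = 1 + O(\ell^2/n) = 1+o(1)$ since $\ell = o(\sqrt n)$, and summing over $k\ge 1$ gives $\sum_{k\ge 1}\bigl(O(\lambda^2 H_0/\tbinom{n}{p})\bigr)^k = o(1)$, where the key input is that $\theta_0\gtrsim n^{1-p}$ forces $H_0/\binom{n}{p} = O(1/n)$, hence $\lambda^2 H_0/\binom{n}{p} = O(\ell^2/n) = o(1)$. Together with the $k = 0$, $r''\ge 1$ contribution (of size $O(\ell^2/n)(\E_\PP\norm{u}^2)^2$, exactly as in \Cref{lem:certnorm}), this yields $\Var_\PP\norm{u}^2 = o\bigl((\E_\PP\norm{u}^2)^2\bigr)$, and Chebyshev's inequality finishes.

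The hard part is the second-moment estimate: a coincident block carries the large factor $H_0 = \Theta(1/\theta_0)$, possibly as large as $\Theta(n^{p-1})$, with no analogue in the purely combinatorial analysis of \Cref{lem:certnorm}. Controlling it requires both (a) extracting the marginal-order cancellation from the surrounding $g_1$-factors, so that the crude $|g_1|_\infty$ bound on the $2$-core produces only a constant to the power $O(r''/p)$ rather than a factor exponential in $\ell$, and (b) using the rarity of coincidences — each costing $\lambda^2/\binom{n}{p}$ in the count — which beats $H_0$ only thanks to the hypothesis $\theta_0\gtrsim n^{1-p}$ in \eqref{eq:paramconds_sec}. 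This is the same mechanism that produces the $1/(\beta^2 n^p)$ term in \Cref{lem:kvar}, but the overlap bookkeeping is more delicate here because each coincidence contributes a full power of $1/\theta_0$.
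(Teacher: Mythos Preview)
Your argument is correct and actually more careful than the paper's. The paper's proof is much shorter: it records $\E_\PP(u'_S)^2=1$ and $\alpha':=\E_\PP(u'_S)^4=\tfrac1{\theta_0}+\tfrac1{1-\theta_0}-3$, bounds $\E_\PP[(u'_S)^2(u'_T)^2]\le\alpha'$ by Cauchy--Schwarz for \emph{every} pair, and then runs the \Cref{lem:certnorm} template verbatim (stratify only by $r=r(C,C')$, count by \Cref{lem:coverlap}) to get $\Var\norm{u}^2\le(\E\norm{u}^2)^2\bigl[(1+\alpha'\lambda p/n)^{\lambda p}-1\bigr]$, which it declares to be $O(\ell^2/n)\cdot(\E\norm{u}^2)^2$. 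That last simplification silently requires $\alpha'\ell^2/n=O(1)$, i.e.\ $\theta_0=\Omega(\ell^2/n)$; near the lower end $\theta_0\sim n^{1-p}$ of \eqref{eq:paramconds_sec} the bound diverges for $p\ge 4$.

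What you do differently is condition on $x$ first and exploit that $g_1(x_S)=\E[A_S^2\cond x]=1+c_0\beta f(x_S)$ is affine in $f$, together with the marginal-order-$p$ property (\Cref{lem:fprop}). Peeling then removes every non-shared block with a free coordinate at cost exactly~$1$, so the only place the large factor $H_0=\alpha'=\Theta(1/\theta_0)$ can enter is at a \emph{fully coincident} block $C_i=C'_j$; each such coincidence carries combinatorial cost $O(\lambda^2/\tbinom{n}{p})=O(\ell^2/n^p)$, which dominates $H_0\lesssim n^{p-1}$ uniformly in $\theta_0\gtrsim n^{1-p}$. The residual $2$-core of partially overlapping blocks has at most $2r''/p$ members and each contributes only the bounded factor $\|g_1\|_\infty=O(1)$. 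So your route costs more bookkeeping but buys validity across the full range of $\theta_0$, whereas the paper's one-line Cauchy--Schwarz works as written only when $\theta_0$ is bounded away from zero. One small presentational point: your sentence ``the pairs whose summand is identically $1$ \dots\ drop out'' should just read $\Var=\sum_{C,C'}(\text{summand}-1)$ and then bound $|\text{summand}-1|$ termwise; summands need not exceed $1$ in general, but your upper bound $H_0^kC_1^{r''/p}$ and the observation that $(k,r'')=(0,<p)$ forces summand $=1$ are enough to close the estimate.
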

\begin{proof}
    Direct computation for random variable $u'_S = (Y_S - \theta_0)/\sqrt{\theta_0(1-\theta_0)}$ gives $\E_\PP (u'_S)^2 = 1$ and
    \begin{align}
        \alpha' := \E_\PP (u'_S)^4 = \frac{1}{\theta_0} + \frac{1}{1-\theta_0} - 3
    \end{align}
    and thus by Cauchy-Schwarz, for any $S, T \in \binom{[n]}{p}$,
    \begin{align}
        \E_\PP (u'_S)^2 (u'_T)^2 \leq \sqrt{\E_\PP (u'_S)^4 \E_\PP (u'_T)^4} \leq \alpha'.
    \end{align}
    The first moment is, by independence of edges and coordinates of $x$,
    \begin{align}
        \E_\PP \norm{u}^2 = \sum_{(S_1,\dots,S_\lambda)\in\cC_\lambda} \prod_{i=1}^\lambda \E_\PP (u'_{S_i})^2 = |\cC_\lambda|.
    \end{align}
    Note that by \Cref{lem:coverlap}, for $\ell = o(\sqrt n)$ we have
    \begin{align}
        |\cC_\lambda| = \binom{n}{p}^\lambda \lr{1 + o(1)}.
    \end{align}
    The second moment is
    \begin{align}
        \E_\PP \norm{u}^4 &= \lr{\E_\PP \norm{u}^2}^2 + \E_\PP \sum_{r=1}^{\lambda p} \sum_{C,C' \in \cC_\lambda \,:\, r(C, C') = r} \prod_{i=1}^\lambda (u'_{C_i})^2 (u'_{C'_i})^2\\
        &\leq (\E_\PP \norm{u}^2)^2 + |\cC_\lambda|^2  \left[\lr{1 + \frac{\alpha' \lambda p}{n}}^{\lambda p} - 1\right]
    \end{align}
    by \Cref{lem:coverlap} following similar steps to the proof of \Cref{lem:certnorm}. For $\ell = o(\sqrt n)$, we thus obtain variance
    \begin{align}
        \Var \norm{u}^2 &\leq (\E \norm{u}^2)^2 \left[\lr{1 + \frac{\alpha' \lambda p}{n}}^{\lambda p} - 1\right] = (\E \norm{u}^2)^2 \cdot O\lr{\frac{\ell^2}{n}} = o\lr{(\E \norm{u}^2)^2}.
    \end{align}
    Chebyshev's inequality thus gives the claimed statement.
\end{proof}

\begin{lemma}[Guiding state overlap]\label{lem:guideoverlap}
    For $\ell = o(\sqrt n)$ and any constant $\delta > 0$, we have
    \begin{align}
        \pr{\left|\frac{\bra{u}\ket{v}}{\norm{u}\norm{v}} - \lr{\beta\sqrt\mu}^\lambda\right| \geq \delta \lr{\beta\sqrt\mu}^\lambda} = o(1).
    \end{align}
    In particular, choosing $\beta = \tilde \Theta(\ell^{1/2-p/4}n^{-p/4})$ gives
    \begin{align}
        \pr{\frac{\bra{u}\ket{v}}{\norm{u}\norm{v}} \leq \tilde O\lr{n^{-\ell/4} \ell^{\ell(1/2p-1/4)}}} = o(1).
    \end{align}
\end{lemma}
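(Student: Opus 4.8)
The plan is a second-moment argument for the inner product $\bra{u}\ket{v}$, combined with the norm estimates of \Cref{lem:certnorm,lem:guidenorm}. Write $\lambda := \ell/p$. Since $\Pi_\lambda$ is the orthogonal projector onto the span of the orthonormal family $\{\ket{C_1}\otimes\cdots\otimes\ket{C_\lambda} : (C_1,\dots,C_\lambda)\in\cC_\lambda\}$, and $\ket{u} = \Pi_\lambda\ket{u'}^{\otimes\lambda}$, $\ket{v} = \Pi_\lambda\ket{v'}^{\otimes\lambda}$ (\Cref{def:certificate,def:guide_vector}), idempotence and self-adjointness of $\Pi_\lambda$ give
\begin{align}
    \bra{u}\ket{v} = \sum_{(C_1,\dots,C_\lambda)\in\cC_\lambda}\prod_{i=1}^\lambda u'_{C_i}\,f(x_{C_i}), \qquad u'_S := \frac{Y_S - \theta_0}{\sqrt{\theta_0(1-\theta_0)}}.
\end{align}
For a collision-free tuple the sets $C_i$ are pairwise disjoint, so the $Y_{C_i}$ are independent given $x$ with $\E_\PP[u'_{C_i}\mid x] = \beta f(x_{C_i})$, and the blocks $x_{C_i}$ are supported on disjoint coordinates; hence $\E_\PP\bra{u}\ket{v} = |\cC_\lambda|\,\beta^\lambda\mu^\lambda$ with $\mu = \EE{f(x_S)^2}$.

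For the variance, I would expand $(\bra{u}\ket{v})^2$ over pairs $(C,C')\in\cC_\lambda\times\cC_\lambda$ and take $\E_{Y\mid x}$ first: each $p$-set $E$ occurs among the $2\lambda$ blocks with multiplicity $0$, $1$, or $2$; a block shared by $C$ and $C'$ contributes $\E_{Y\mid x}[(u'_E)^2]\,f(x_E)^2 = (1 + c\beta f(x_E))\,f(x_E)^2$ with $c = (1-2\theta_0)/\sqrt{\theta_0(1-\theta_0)}$, while an unshared block contributes $\E_{Y\mid x}[u'_E]\,f(x_E) = \beta f(x_E)^2$. I would organize the sum by the number $t$ of $p$-sets shared by $C$ and $C'$ and, for fixed $t$, by the number $r$ of additional coordinates lying in sets of both $C$ and $C'$. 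Using the deterministic bound $\norm{f}_\infty \le 1$ (\Cref{lem:fprop}) to absorb each $c\beta$-correction on a shared block into a constant factor $1+c\beta \le C = O(1)$ (recall $\epsilon/\theta_0 = O(1)$ under \eqref{eq:paramconds_sec}), the count $|\cC_\lambda| = \binom{n}{p}^\lambda(1+o(1))$ and the overlap counts of \Cref{lem:coverlap} (extended to also count shared $p$-sets), the fourth-moment bound $\E f^4 = \alpha = O(1)$ of \Cref{lem:momentquantities}, and the elementary estimate $\E_x\prod_{E\in U\cup U'}f(x_E)^2 = \mu^{2\lambda - t}$ in the absence of additional vertex overlap, the contribution of pairs with exactly $t\ge 1$ shared $p$-sets (and no extra overlap) is at most $(\E_\PP\bra{u}\ket{v})^2$ times $\bigl(C\lambda^{O(1)}/(\binom{n}{p}\beta^2\mu)\bigr)^t$, while the pairs with $t=0$ but $r\ge 1$ extra shared vertices contribute at most $(\E_\PP\bra{u}\ket{v})^2\bigl((1 + O(\lambda p/(n\mu^2)))^{\lambda p}-1\bigr) = (\E_\PP\bra{u}\ket{v})^2\,O(\ell^2/n)$; the mixed terms are handled identically and are no larger, and the pairs with $t=r=0$ reproduce $(\E_\PP\bra{u}\ket{v})^2(1+o(1))$, which cancels the square of the mean. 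Since $\binom{n}{p}\beta^2\mu = \tilde\Theta(n^{p/2}\ell^{1-p/2})$ at the stated SNR (and larger for larger SNR) and $\ell = o(\sqrt n)$ with $p > 2$, both resulting geometric series are $o(1)$, so $\Var_\PP\bra{u}\ket{v} = o\bigl((\E_\PP\bra{u}\ket{v})^2\bigr)$.

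Chebyshev's inequality then gives $\bra{u}\ket{v} = |\cC_\lambda|\beta^\lambda\mu^\lambda(1+o(1))$ with probability $1-o(1)$. Union-bounding with \Cref{lem:certnorm,lem:guidenorm} and using $|\cC_\lambda| = \binom{n}{p}^\lambda(1+o(1))$ yields
\begin{align}
    \frac{\bra{u}\ket{v}}{\norm{u}\,\norm{v}} = \frac{\binom{n}{p}^\lambda\beta^\lambda\mu^\lambda}{\sqrt{\binom{n}{p}^\lambda\mu^\lambda}\,\sqrt{\binom{n}{p}^\lambda}}\,(1+o(1)) = (\beta\sqrt\mu)^\lambda(1+o(1))
\end{align}
with probability $1-o(1)$, which is the first claim (the $(1+o(1))$ being absorbed by any fixed $\delta > 0$). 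Substituting $\lambda = \ell/p$, $\mu = \Theta_{k,p}(1)$, and $\beta = \tilde\Theta(\ell^{1/2-p/4}n^{-p/4})$ gives $(\beta\sqrt\mu)^{\ell/p} = \tilde\Theta\bigl(\ell^{\ell(1/2p-1/4)}n^{-\ell/4}\bigr)$, which is the second claim.

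The main obstacle I anticipate is the bookkeeping in the variance bound: cleanly parametrizing the overlap patterns of $(C,C')$ by the pair $(t,r)$ (shared $p$-sets and extra shared vertices) and verifying that the $c\beta f(x_E)$ corrections on shared blocks — which, unlike in the Tensor PCA and $p$XORSAT analyses, do not factor over community labels — are subdominant once the $\norm{f}_\infty \le 1$ estimate is applied, so that the entire second moment collapses into two convergent geometric series. The remaining steps (the mean computation, the invocation of \Cref{lem:certnorm,lem:guidenorm,lem:coverlap}, and the final substitution) are routine.
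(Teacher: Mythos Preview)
Your proposal is correct and follows the same second-moment strategy as the paper: compute $\E_\PP\bra{u}\ket{v}=|\cC_\lambda|(\beta\mu)^\lambda$, bound the variance by organizing pairs $(C,C')\in\cC_\lambda^2$ according to their overlap and invoking \Cref{lem:coverlap}, then apply Chebyshev and union-bound with \Cref{lem:certnorm,lem:guidenorm}. Your explicit $(t,r)$ decomposition---separating fully shared $p$-sets from merely vertex-overlapping blocks---is in fact a cleaner treatment of the shared-block case than the paper's, which lumps all overlap into a single coordinate count $r$ and applies the bound $\E_\PP u'_S f(x_S)u'_T f(x_T)\le\beta^2\alpha$ uniformly (a bound that is not literally valid when $S=T$, though the resulting geometric series still converges for the same reason yours does).
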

\begin{proof}
    We have first moment
    \begin{align}\label{eq:uv1}
        \E_\PP \bra{u}\ket{v} = \E_\PP \sum_{(S_1, \dots, S_\lambda) \in \cC_\lambda} \prod_{i=1}^\lambda u'_{S_i} f(x_{S_i}) = \beta^\lambda \sum_{(S_1, \dots, S_\lambda) \in \cC_\lambda} \prod_{i=1}^\lambda \E_x f(x_{S_i})^2 = |\cC_\lambda| (\beta \mu)^\lambda.
    \end{align}
    To show the second moment, we compute
    \begin{align}
        \E_\PP \bra{u}\ket{v}^2 &= \E_\PP \sum_{\substack{(S_1, \dots, S_\lambda) \in \cC_\lambda\\ (T_1, \dots, T_\lambda) \in \cC_\lambda}} \lr{\prod_{i=1}^\lambda u'_{S_i} f(x_{S_i})} \lr{\prod_{i=1}^\lambda u'_{T_i} f(x_{T_i})}\\
        &= \lr{\E_\PP \bra{u}\ket{v}}^2 + \E_\PP \sum_{r=1}^{\lambda p} \sum_{\substack{S,T \in \cC_\lambda\\ r(S,T)=r}} \lr{\prod_{i=1}^\lambda u'_{S_i} f(x_{S_i})} \lr{\prod_{i=1}^\lambda u'_{T_i} f(x_{T_i})}.
    \end{align}
    For any $S \in \binom{[n]}{p}$, observe that
    \begin{align}
        \alpha' := \E_P (u'_S)^2 f(x_S)^2 = \E_x \lr{1 + \frac{\beta(1-2\theta_0)}{\sqrt{\theta_0(1-\theta_0)}} f(x_S)}f(x_S)^2 = \mu + \frac{\beta(1-2\theta_0)}{\sqrt{\theta_0(1-\theta_0)}} \gamma.
    \end{align}
    If $S, T \in \binom{[n]}{p}$ satisfy $|S \cap T| = k \geq 1$, then we have
    \begin{align}
        \E_\PP u'_S f(x_S) u'_T f(x_T) = \beta^2 \E_x f(x_S)^2 f(x_T)^2 \leq \beta^2 \alpha
    \end{align}
    by \Cref{lem:momentquantities}. If $|S \cap T| = 0$, then we have
    \begin{align}
        \E_\PP u'_S f(x_S) u'_T f(x_T) = \beta^2 \E_x f(x_S)^2 f(x_T)^2 = \beta^2 \mu^2.
    \end{align}
    Hence, we obtain the second moment bound
    \begin{align}
        \E_\PP \bra{u}\ket{v}^2 &\leq \lr{\E_\PP \bra{u}\ket{v}}^2 + |\cC_\lambda|^2 \sum_{r=1}^{\lambda p} \binom{\lambda p}{r} \lr{\frac{\lambda p}{n}}^r \lr{\frac{\alpha}{\mu^2}}^r (\beta \mu)^{2\lambda}\\
        &\leq \lr{\E_\PP \bra{u}\ket{v}}^2 + |\cC_\lambda|^2 (\beta \mu)^{2\lambda} \left[\lr{1 + \frac{\alpha \lambda p}{\mu^2 n}}^{\lambda p}-1\right]
    \end{align}
    for $\lambda = o(\sqrt n)$. By \eqref{eq:uv1}, this gives variance
    \begin{align}
        \Var \bra{u}\ket{v} \leq \lr{\E \bra{u}\ket{v}}^2\left[\lr{1 + \frac{\alpha \lambda p}{\mu^2 n}}^{\lambda p}-1\right] = \lr{\E \bra{u}\ket{v}}^2 \cdot O\lr{\frac{\ell^2}{n}} = o\lr{\lr{\E \bra{u}\ket{v}}^2}.
    \end{align}
    Chebysehv's inequality thus implies concentration of $\bra{u}\ket{v}$ to its mean with probability $1-o(1)$, i.e., for any $\delta > 0$,
    \begin{align}
        \pr{\left|\frac{\bra{u}\ket{v}}{\norm{u}\norm{v}} - \left[\binom{n}{p}\beta\mu\right]^\lambda\right| \geq \delta \left[\binom{n}{p}\beta\mu\right]^\lambda} = o(1).
    \end{align}
    Finally, we union bound all bad events for the quantities $\bra{u}\ket{v}, \norm{u}^2$ and $\norm{v}^2$ from \Cref{lem:certnorm} and \Cref{lem:guidenorm} to obtain the claimed concentration. Substituting
    \begin{align}
        |\cC_\lambda| = \binom{n}{p}^\lambda \lr{1 + o(1)}
    \end{align}
    from \Cref{lem:coverlap}, we find that for any constant $\delta > 0$,
    \begin{align}
        \pr{\left|\frac{\bra{u}\ket{v}}{\norm{u}\norm{v}} - \lr{\beta\sqrt\mu}^\lambda\right| \geq \delta \lr{\beta\sqrt\mu}^\lambda} = o(1).
    \end{align}
    Taking $\beta = \tilde \Theta(\ell^{1/2-p/4}n^{-p/4})$ gives
    \begin{align}
        \pr{\frac{\bra{u}\ket{v}}{\norm{u}\norm{v}} \leq \tilde O\lr{n^{-\ell/4} \ell^{\ell(1/2p-1/4)}}} = o(1).
    \end{align}
\end{proof}

\subsection{Cutoff eigenspace proof}
Fix even $p$, write $\ell=\lambda p$ with $\ell=o(\sqrt n)$, and keep the scaling assumptions from the previous sections. Set \begin{equation}
    \beta := \frac{\epsilon}{\sqrt{\theta_0(1-\theta_0)}} \geq \frac{3\sqrt{6}}{C} \ell^{1/2 - p/4} n^{-p/4}\sqrt{\log(n)},
\end{equation} and \begin{equation}\label{eq:def-tau-spectral}
\tau := \frac{1}{2}C \beta n^{p/2} \ell^{p/2},
\end{equation} in accordance with \Cref{thm:kikuchi_bosonic}.
The previous sections defined two vectors, a certificate $\ket{v}$ and a guiding state $\ket{u}$, and established that with probability $1-o(1)$ over $\PP$, the following two statements hold:
\begin{equation}
    \frac{\bra{v} \cKt \ket{v}}{\|v\|^2} \geq \frac{4}{3} \tau,
   \qquad \text{ and } \qquad 
    \frac{\bra{\Tilde u}\ket{v}}{\norm{u}\norm{v}} > \lr{\zeta \beta \sqrt{\mu}}^{\lambda} \lr{1 - o(1)}.
\end{equation}
To show correctness of our quantum algorithm, we have to prove a subtler statement establishing that the guiding state has improved overlap not just with the certificate but with the leading eigenspace of the Kikuchi matrix. Intuitively, the leading eigenspace of $\calK$ in the planted case is the space of eigenvectors of $\cK$ with an eigenvalue that is larger than $\tau(\cK)$ in the null case. We choose an arbitrary constant, as in \Cref{thm:kikuchi_bosonic}, and define $\Pi_\geq(\cK)$ to be the projector onto the eigenspaces of $\cK$ with eigenvalues larger than $\tau$. By \Cref{thm:kikuchi_bosonic}, the largest eigenvalue of $\cK$ in the null case does not exceed $\frac{2}{3}\tau$ with high probability over $\QQ$.
Our main result is summarized in \Cref{thm:final_kikuchi_bosonic}.
\subsubsection{Step 1: Sample splitting}
\label{sec:sample_splitting}
Throughout, let $Y$ be a hypergraph instance sampled from the planted distribution as specified in \Cref{def:pbernsbm}. Fix $L:=\lceil\log n\rceil$ and define $\zeta = 1/L$. Independently of ($x, Y$), assign each $p$-set $S \in\binom{[n]}{p}$ uniformly at random to one of the $L$ batches; let $M_S \in\{0,1\}$ be the indicator that $S$ lands in Batch $B_1$, and let $M^\perp_S$ be the indicator function of the complement. (Thus $\E M_S=\zeta$, and for $S \neq T, \E\left[M_S M_T\right]=\zeta^2 ; M_S^2=M_S$.) We construct our guiding state from the first batch. 
\begin{definition}[Split guiding vector]
    Let $\ket{u''}$ be the unnormalized vector given by
    \begin{align}
        \ket{u''} = \sum_{S \in \binom{[n]}{p}} M_S\frac{Y_S - \theta_0}{\sqrt{\theta_0(1-\theta_0)}} \ket{S}.
    \end{align}
    The split guiding vector $\ket{\Tilde{u}}$ is the unnormalized vector $\Pi_\lambda \ket{u''}^{\otimes \ell/p}$.
\end{definition}
Similarly, define a split the bosonic Kikuchi matrix $\cKt$ 
on $\calT_n(\ell)$ entry-wise by \begin{equation}
        \cKt_{S,V} = \begin{cases}
            M^\perp_{\{\mu_1, \dots, \mu_p\}} \frac{Y_{\{\mu_1, \dots, \mu_p\}} - \theta_0}{\sqrt{\theta_0(1-\theta_0)}} \qquad \text{ if } (\mu_1, \dots, \mu_p) = S \ominus V, \\
            0 \qquad \qquad \qquad \qquad \qquad \ \ \ \text{ otherwise.}
        \end{cases}
    \end{equation}
Here and in what follows, we absorb the distribution of the random uniform splitting into the notation $\PP(x)$. Under the probability distribution $\PP(x)$ conditioned on $x$, the matrix and guiding state are independent. Due to independence of $M_S$,
the split Kikuchi matrix satisfies \begin{equation}
\label{eq:batched_ineqs}
    \frac{\bra{v} \cK \ket{v}}{\|v\|^2} \geq \frac{4}{3} (1-\zeta)\tau,
   \qquad \text{ with probability } 1-o(1) \text{ over } \PP.  
\end{equation}

\subsubsection{Step 2: Leading eigenspace mass}
We record the simple observation that $\ket{v}$ has large support on the leading eigenspace of $\cKt$.
\begin{theorem}
\label{thm:markov_arg}
    Let $\Pi_\geq(\cKt)$ denote the projector onto the eigenvectors of $\cKt$ with eigenvalues at least $\tau$.
    Then with probability $1-o(1)$ over $\PP$, \begin{equation}
        \frac{\bra{v} \Pi_\geq(\cKt) \ket{v}}{\braket{v}{v}} \geq \frac{1-4\zeta}{6/C' - 1} = \Omega(1).
    \end{equation}
\end{theorem}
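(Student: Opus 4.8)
Let $P := \Pi_\geq(\cKt)$ and $P^\perp := I - P$. Since $P,P^\perp$ are spectral projectors of $\cKt$ they commute with $\cKt$, so the cross terms vanish and
\begin{align}
    \bra{v}\cKt\ket{v} = \bra{v}P\cKt P\ket{v} + \bra{v}P^\perp\cKt P^\perp\ket{v} \leq \lambda_{\max}(\cKt)\,\bra{v}P\ket{v} + \tau\,\bra{v}P^\perp\ket{v},
\end{align}
using that all eigenvalues of $\cKt$ restricted to $\mathrm{range}(P)$ are at most $\lambda_{\max}(\cKt)$, that all eigenvalues restricted to $\mathrm{range}(P^\perp)$ lie strictly below $\tau>0$, and $P^2=P$. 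Writing $\bra{v}P^\perp\ket{v} = \braket{v}{v} - \bra{v}P\ket{v}$ and inserting the planted lower bound $\bra{v}\cKt\ket{v} \geq \tfrac{4}{3}(1-\zeta)\tau\braket{v}{v}$ from \eqref{eq:batched_ineqs} (which holds with probability $1-o(1)$ over $\PP$) and rearranging yields
\begin{align}
    \bigl(\lambda_{\max}(\cKt) - \tau\bigr)\,\bra{v}P\ket{v} \;\geq\; \Bigl(\tfrac{4}{3}(1-\zeta) - 1\Bigr)\tau\,\braket{v}{v} \;=\; \frac{1-4\zeta}{3}\,\tau\,\braket{v}{v}.
\end{align}

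So the theorem reduces to an upper bound $\lambda_{\max}(\cKt) \leq c\,\tau$ for an explicit constant $c = c_{k,p}$: this gives $\bra{v}\Pi_\geq(\cKt)\ket{v}/\braket{v}{v} \geq \tfrac{1-4\zeta}{3(c-1)} = \Omega(1)$, and tracking the constants below (sharpening, if needed, the ``almost arbitrary'' $\sqrt 6$ in \Cref{lem:bosonic_nullknorm} or the factor-$2$ threshold gap of the threshold-choice remark) reproduces the stated form $\tfrac{1-4\zeta}{6/C'-1}$. Unlike the null case (\Cref{lem:bosonic_nullknorm}) or the planted \emph{lower} bound (\Cref{thm:kikuchi_bosonic}), no planted operator-norm bound is available off the shelf, so I derive one by splitting $\cKt = \cKt_{\mathrm{sig}} + \cKt_{\mathrm{noise}}$ into the bosonic Kikuchi matrices of the tensors $T^{\mathrm{sig}}_S = M^\perp_S\,\beta f(x_S)$ and $T^{\mathrm{noise}}_S = M^\perp_S (Y_S - \E[Y_S\mid x])/\sqrt{\theta_0(1-\theta_0)}$, so $\lambda_{\max}(\cKt) \leq \norm{\cKt_{\mathrm{sig}}} + \norm{\cKt_{\mathrm{noise}}}$. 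The signal part is bounded deterministically by a row-sum (Gershgorin) estimate: each row of $\cKt_{\mathrm{sig}}$ has at most $\binom{\ell}{p/2}\,n^{p/2}$ nonzero entries, each of magnitude $\leq \beta\norm{f}_\infty < \beta$ by \Cref{lem:fprop}, so $\norm{\cKt_{\mathrm{sig}}} \leq (1+o(1))\,\beta\, n^{p/2}\ell^{p/2}/(p/2)! = O(\tau)$ (recall $\beta n^{p/2}\ell^{p/2} = \Theta(\tau)$). For the noise part, the entries are mean-zero, independent conditionally on $x$ and the batch assignment, and bounded, with conditional variances $(\theta_0+\epsilon f)(1-\theta_0-\epsilon f)/(\theta_0(1-\theta_0)) = O(1)$ — here $\epsilon < \theta_0 < 1/2$ from \eqref{eq:paramconds_sec} keeps every per-entry variance an absolute constant — so the Matrix Bernstein computation of \Cref{lem:bosonic_nullknorm} applies verbatim up to that constant in $\sigma^2$, giving $\norm{\cKt_{\mathrm{noise}}} \leq O\bigl(\sqrt{n^{p/2}\ell^{1+p/2}\log n}\bigr) = O(\tau)$ with probability $1-o(1)$ throughout the SNR regime of \Cref{thm:kikuchi_bosonic}. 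A union bound over these two events and \eqref{eq:batched_ineqs} finishes the argument.

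The main obstacle is precisely this planted-case bound $\lambda_{\max}(\cKt) = O(\tau)$: one must show that the signal-carrying part of $\cKt$ cannot by itself manufacture an eigenvalue much larger than $\tau$ (this is exactly where the crude but uniform $\norm{f}_\infty < 1$ of \Cref{lem:fprop} and the sparsity of the Kikuchi pattern are both essential; a signal tensor $f$ with large injective norm would break the row-sum bound), and that the fluctuation part still obeys the Matrix Bernstein tail despite having slightly non-unit, heterogeneous Bernoulli variances. Everything else — the commutation identity, the two-term split, and the final constant-matching to $\tfrac{1-4\zeta}{6/C'-1}$ — is routine; in particular any $O(\tau)$ bound on $\lambda_{\max}(\cKt)$ already suffices for the $\Omega(1)$ guarantee that the rest of the quantum analysis uses.
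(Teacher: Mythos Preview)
Your proof is correct and follows essentially the same approach as the paper's: both split $\cKt$ into its conditional mean (your $\cKt_{\mathrm{sig}} = \E_{\PP(x)}[\cKt]$) and a centered fluctuation, bound the former by a row-sum using $\|f\|_\infty<1$ and the latter by the same Matrix Bernstein argument as \Cref{lem:bosonic_nullknorm}, then feed the resulting $\lambda_{\max}(\cKt)\le (2/C+2/3)\tau$ into the Markov-style rearrangement you wrote out. Your presentation is more explicit than the paper's, and your constant tracking $3(c-1)=6/C-1$ indeed reproduces the stated denominator with $C'=C$.
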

\begin{proof}
    Conditioned on labels x, $\E_{\PP(x)}[A_S]=\beta\,f(x_S)$ with $|f|\le 1$, hence we have the loose upper bound $||\E_{\PP(x)}[\cKt]|| \leq \beta n^{p/2}\ell^{p/2}$. The centered random variable $\cKt - \E_{\PP(x)}[\cKt]$ is bounded by $\frac{2}{3}\tau$ whp, which is $\leq \frac{1}{3} C\beta n^{p/2}\ell^{p/2}$ by choice of $\beta$ in \Cref{thm:kikuchi_bosonic}) except with probability $o(1)$ following the analysis of \Cref{lem:bosonic_nullknorm}. Hence $\cKt \leq (1 + C/3)\beta n^{p/2}\ell^{p/2} =(2/C + 2/3) \cdot \tau$ with probability $1-o(1)$. The claim follows from \cref{eq:batched_ineqs} and a Markov-style argument applied to the eigenvalues of $\cKt$, absorbing the negligible loss of $1-\zeta$ due to sample splitting into the constant.
\end{proof}
\subsubsection{Step 3: Directional unbias}
Our third main ingredient states that the guiding vector $\ket{\Tilde u}$ is directionally unbiased. 
\begin{theorem}\label{thm:2nd_markov_arg}
Let $p\ge 1$, $\lambda\ge 1$, and $\ell=\lambda p$.  Let $\ket{s}$ be a unit vector (indexed by
$\mathcal T_n(\ell)$) that is independent of $\ket{\Tilde u}$ under $\PP(x)$.  Then
\[
\braket{s}{\Tilde u}\ \ge\ \frac12\,\zeta^\lambda \beta^\lambda \braket{s}{v}
\]
except with probability (over $\PP$) of at most
\begin{equation}\label{eq:XXX-basic}
\mathrm{FAIL}\ :=\ \frac{4\,\lambda!}{\zeta^{2\lambda}\beta^{2\lambda}\braket{s}{v}^2}\,.
\end{equation}
\end{theorem}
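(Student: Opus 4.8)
The plan is a second-moment (Chebyshev) argument. First I would condition on the randomness that fixes $\ket s$ and the certificate $\ket v$ while leaving only the batch-$1$ data random: concretely, condition on $x$, the random batch assignment $\{M_S\}$, and the batch-$2$ hyperedges, so that $\cKt$ (hence, in the intended application, $\ket s$) and $\ket v$ are determined, and the only remaining randomness is $\{A_S : M_S=1\}$, which is independent across $S$ given $x$ with $\E[A_S\mid x]=\beta f(x_S)$ and second moment $\rho_S:=\E[A_S^2\mid x]=O(1)$ uniformly (here one uses $\epsilon<\theta_0<\tfrac12$, so that $\Var(Y_S)\le 2\theta_0$). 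Writing $\braket{s}{\Tilde u}=\bra{s}\Pi_\lambda\ket{u''}^{\otimes\lambda}=\sum_{C=(S_1,\dots,S_\lambda)\in\cC_\lambda}w_C\prod_{i=1}^\lambda M_{S_i}A_{S_i}$, where $w_C$ is the coordinate of $\ket s$ on the symmetrized basis state $\ket{S_1}\otimes\cdots\otimes\ket{S_\lambda}$, the disjointness of the $S_i$ inside a collision-free tuple makes the $M_{S_i}A_{S_i}$ independent, and $\Pi_\lambda$ annihilates every monomial with a repeated set; hence the conditional mean is $m:=\E\braket{s}{\Tilde u}=(\zeta\beta)^\lambda\bra{s}\Pi_\lambda\ket{v'}^{\otimes\lambda}=(\zeta\beta)^\lambda\braket{s}{v}$ (up to the negligible batch fluctuations already handled in \Cref{sec:sample_splitting}; we may take $\braket{s}{v}\ge0$ after flipping the sign of $\ket s$). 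Chebyshev then gives $\Pr[\braket{s}{\Tilde u}<\tfrac12 m]\le 4\,\Var(\braket{s}{\Tilde u})/m^2$, so it suffices to bound the conditional variance by $\lambda!$.

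Next I would compute this variance by expanding $\E\braket{s}{\Tilde u}^2$ over pairs $C,C'\in\cC_\lambda$ and subtracting $m^2$. By independence of $\{M_SA_S\}$ given $x$, the covariance of the monomials for $C$ and $C'$ vanishes unless $C$ and $C'$ share a set, and I would organize the surviving terms by the number $d\in\{1,\dots,\lambda\}$ of shared sets (equivalently, the size of the partial matching between the $\lambda$ tensor slots). Each such term factors over slots: unmatched slots each contribute a factor bounded by $\zeta\beta$ (since $|f|\le1$), matched slots each contribute $\zeta\rho_S-(\zeta\beta f(x_S))^2$, bounded by $\zeta\rho$ with $\rho=\max_S\rho_S$; what remains is a weighted count $\Sigma_d=\sum|w_C||w_{C'}|$ over the relevant pairs. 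The key structural input is that $\ket{\Tilde u}$ lies in the symmetric subspace (both $\ket{u''}^{\otimes\lambda}$ and $\Pi_\lambda$ commute with permuting the $\lambda$ factors), so $\ket s$ may be taken permutation-symmetric and $w_C$ depends only on the underlying $\lambda$-set of $C$; this collapses all sums over orderings of shared and unshared sets into clean factorials, and together with $\sum_{C\in\cC_\lambda}|w_C|^2=\|\Pi_\lambda s\|^2\le1$ it gives, for the dominant fully-matched term $d=\lambda$, exactly $(\zeta\rho)^\lambda\lambda!\sum_C|w_C|^2\le\lambda!$ (using $\zeta\rho\le1$ for large $n$, since $\zeta=1/\lceil\log n\rceil$).

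The main obstacle is controlling the partially-matched terms $d<\lambda$: each unmatched pair of slots ranges over a free set in $\binom{[n]}{p}$, so $\Sigma_d$ is as large as $\sim\binom{n}{p}^{\lambda-d}$, and one must argue that the factors $(\zeta\beta)^{2(\lambda-d)}(\zeta\rho)^d$ together with the symmetry- and Cauchy--Schwarz-based bookkeeping for $\Sigma_d$ suppress this back to at most a $(1+o(1))$ multiple of $\lambda!$. This is the same type of estimate as in the second-moment proofs of \Cref{lem:kvar,lem:guideoverlap} (and of \cite{schmidhuber2025quartic}): I would bound the ratio of consecutive $d$-terms and invoke the parameter conditions \eqref{eq:paramconds_sec} --- in particular $\ell=o(\sqrt n)$, the lower bound on $\beta$ linking it to $\ell$, and the sample-splitting suppression $\zeta^{O(\lambda)}$ --- so that the sum over $d$ telescopes down to $\lambda!(1+o(1))$, which the constant $4$ in $\mathrm{FAIL}$ absorbs. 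Feeding the resulting bound $\Var(\braket{s}{\Tilde u})\le\lambda!$ into Chebyshev yields the claimed failure probability $\mathrm{FAIL}=4\lambda!/(\zeta^{2\lambda}\beta^{2\lambda}\braket{s}{v}^2)$.
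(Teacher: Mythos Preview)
Your overall scaffold --- condition, compute the conditional mean $(\zeta\beta)^\lambda\braket{s}{v}$, apply Chebyshev --- is the same as the paper's. The gap is in your plan for bounding the conditional variance by $\lambda!$.

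For fixed $x$, the $d$-matched contribution to the variance (after your row-sum/Cauchy--Schwarz bound on $\Sigma_d$) is of order $\rho^{d}\,\beta^{2(\lambda-d)}\,\binom{n}{p}^{\lambda-d}$ up to factors polynomial in $\lambda$. Under the standing assumptions one has $\beta^{2}\binom{n}{p}=\Theta(n^{p/2})$ (up to $\ell$ and $\log n$ factors), so each step from $d$ to $d-1$ multiplies the bound by roughly $n^{p/2}$: the partially-matched terms are polynomially \emph{large} in $n$, and neither $\ell=o(\sqrt n)$ nor the sample-splitting factor $\zeta=1/\log n$ can compensate. (There is also a small inconsistency: once you condition on the batch assignment, an unmatched slot contributes $M_S\,\beta f(x_S)\in\{0,\beta f(x_S)\}$, not $\zeta\beta$; correcting this only makes the bound worse.) Thus the conditional variance is not $O(\lambda!)$ for fixed $x$, and the proposed ``telescope to $\lambda!(1+o(1))$'' cannot go through.

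The paper does not try to control $\Var_{\PP(x)}\braket{s}{\tilde u}$ pointwise in $x$; instead it bounds $\E_{x}\bigl[\Var_{\PP(x)}\braket{s}{\tilde u}\bigr]$ and then averages the Chebyshev inequality over $x$. The purpose of Lemma~\ref{lem:main_technical} is precisely that every partially-matched covariance ($a<\lambda$ shared sets) has $x$-expectation equal to \emph{zero}: each unmatched slot carries a factor $\E_{\PP(x)}[A_S]=\beta f(x_S)$, and since that set is disjoint from all others in the collision-free tuple, $\E_{x}f(x_S)=0$. After the $x$-average, only the fully-matched term $a=\lambda$ survives, contributing $\lambda!\bigl(1-(\beta^{2}\mu)^{\lambda}\bigr)\le\lambda!$. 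This cancellation in the average over the hidden labels --- not smallness for any fixed $x$ --- is what kills the $d<\lambda$ terms, and it is the step your proposal is missing.
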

To prove this, we start with an auxiliary lemma.
\begin{lemma}\label{lem:main_technical}
Let $p \ge 1$ and $\lambda \ge 1$, and fix integers $a,b \ge 0$ with $a+b=\lambda$. Let $(A_S)_{S}$ be i.i.d. for $S \in \binom{n}{p}$. For distinct $p$-sets $S_1,\dots,S_a, S'_1,\dots,S'_b, S''_1,\dots,S''_b$, define \begin{equation}
W(a,b) = \Cov_{\PP(x)} \left[ \prod_{i=1}^{a} A_{S_i} \prod_{j=1}^{b} A_{S'_j}, \ \prod_{i=1}^{a} A_{S_i} \prod_{j=1}^{b} A_{S''_j} \right].
\end{equation}  
Then:
\begin{align}
\E_x W(a,b) &=
\begin{cases}
0, & a=0,1,\dots,\lambda-1,\\[2pt]
1-(\beta^2\mu)^\lambda, & a=\lambda \ (b=0),
\end{cases}
\label{eq:main-Ex}
\\[6pt]
\Var_x W(a,b) &=
\begin{cases}
(\beta^2 \mu)^{2b}\left( (1+\eps^2 \mu c^2)^{a} + (\beta^4 \alpha)^{a} - 2 \left[\beta^2(\mu+\eps \gamma c)\right]^{a} \right), & a=0,1,\dots,\lambda-1,\\[4pt]
(1+\eps^2 \mu c^2)^{\lambda} + (\beta^4 \alpha)^{\lambda} - 2 \left[\beta^2(\mu+\eps \gamma c)\right]^{\lambda} - \left(1-(\beta^2 \mu)^{\lambda}\right)^2, & a=\lambda \ (b=0).
\end{cases}
\label{eq:main-Varx}
\end{align}
\end{lemma}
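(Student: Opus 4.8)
The plan is a two-stage moment computation. \textbf{First} I would freeze the community labels $x$ and collapse the covariance using the conditional independence of the edge variables; \textbf{then} I would average over $x$ using the first four moments of $f$ recorded in \Cref{lem:momentquantities} together with the mean-zero property $\EE{f(a)} = 0$ and the fact that the $a+2b$ sets in the statement are pairwise disjoint (the regime of interest, since they will index collision-free tuples), so that $f(x_{S_1}),\dots,f(x_{S_a}),f(x_{S'_1}),\dots,f(x_{S''_b})$ are mutually independent.

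\textbf{Stage 1 (condition on $x$).} I would first record the two conditional moments $\E_{\PP(x)}[A_S] = \beta f(x_S)$ and $\E_{\PP(x)}[A_S^2] = 1 + \eps c\, f(x_S)$, both immediate from $Y_S \sim \bern(\theta_0 + \eps f(x_S))$ and the identity $\E[(Y_S-\theta_0)^2] = \theta_0(1-\theta_0) + \eps f(x_S)(1-2\theta_0)$, where $c = (1-2\theta_0)/(\theta_0(1-\theta_0))$. Since the block $\prod_{i=1}^a A_{S_i}$ is common to both arguments of the covariance and, conditionally on $x$, is independent of $\prod_{j=1}^b A_{S'_j}$ and of $\prod_{j=1}^b A_{S''_j}$ (which are themselves conditionally independent), the covariance factorizes, and substituting the two moments gives
\begin{align*}
W(a,b) = \beta^{2b}\lr{\prod_{j=1}^b f(x_{S'_j})\, f(x_{S''_j})}\lr{\prod_{i=1}^a \lr{1 + \eps c\, f(x_{S_i})} - \prod_{i=1}^a \beta^2 f(x_{S_i})^2}.
\end{align*}

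\textbf{Stage 2 (average over $x$).} For the mean, disjointness lets every product factorize under $\E_x$: if $b \ge 1$ the factor $\E_x f(x_{S'_1}) = 0$ annihilates $\E_x W(a,b)$, while for $b = 0$ (so $a = \lambda$) one gets $\prod_i \E_x[1 + \eps c f(x_{S_i})] - \prod_i \beta^2 \E_x[f(x_{S_i})^2] = 1 - (\beta^2\mu)^\lambda$, which is \eqref{eq:main-Ex}. For the variance I would compute $\E_x W(a,b)^2$ by squaring the display: the $2b$ ``primed'' factors contribute $\mu^{2b}$ and pull out $(\beta^2\mu)^{2b}$, and for the remaining square I use $\lr{\prod_i u_i - \prod_i w_i}^2 = \prod_i u_i^2 - 2\prod_i u_i w_i + \prod_i w_i^2$ with $u_i = 1 + \eps c f(x_{S_i})$ and $w_i = \beta^2 f(x_{S_i})^2$, evaluating the per-block expectations $\E_x u_i^2 = 1 + \eps^2\mu c^2$, $\E_x u_i w_i = \beta^2(\mu + \eps\gamma c)$, $\E_x w_i^2 = \beta^4\alpha$ via \Cref{lem:momentquantities} and $\EE{f(a)} = 0$. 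This produces $\E_x W(a,b)^2 = (\beta^2\mu)^{2b}\lr{(1+\eps^2\mu c^2)^a + (\beta^4\alpha)^a - 2(\beta^2(\mu+\eps\gamma c))^a}$, and subtracting the square of the mean (zero when $a<\lambda$, and $(1-(\beta^2\mu)^\lambda)^2$ when $a=\lambda$) yields \eqref{eq:main-Varx}.

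I do not expect a genuine analytic obstacle: once the conditional-independence reduction is in place, the rest is bookkeeping. The one point that needs care is the disjointness hypothesis — the clean factorization under $\E_x$, and in particular the vanishing $\E_x W(a,b) = 0$ for $b \ge 1$, relies on all $a+2b$ sets being pairwise disjoint, so that no cross moments $\beta_r = \EE{f(x_S)^2 f(x_T)^2}$ with $r>0$ enter. Any residual overlaps between sets would instead be absorbed into the error analysis of the calling statements (\Cref{thm:2nd_markov_arg} and the guiding-state overlap bound), not into this lemma.
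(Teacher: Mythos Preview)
Your proposal is correct and follows essentially the same approach as the paper: factor the conditional covariance as a product of the ``primed'' block and the difference $\prod_i \E_{\PP(x)}[A_{S_i}^2] - \prod_i \E_{\PP(x)}[A_{S_i}]^2$, then average over $x$ using independence across disjoint $p$-sets. The only cosmetic difference is that the paper keeps the abstract dictionary entries $\E_{\PP(x)}[A_S]$, $\E_{\PP(x)}[A_S^2]$, etc., and substitutes them at the end, whereas you plug in the explicit expressions $\beta f(x_S)$ and $1+\eps c\,f(x_S)$ from the start; your closing remark about needing pairwise disjointness (not merely distinctness) for the $\E_x$-factorization is exactly the hypothesis the paper's proof invokes as well.
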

\begin{proof}
Recall the dictionary of $x$-averaged moments:
\begin{align}
\E_x \E_{\PP(x)}[A_S] &= 0, \\
\E_x \E_{\PP(x)}[A_S]^2 &= \beta^2 \mu, \\
\E_x \E_{\PP(x)}[A_S]^4 &= \beta^4 \alpha, \\
\E_x \E_{\PP(x)}[A_S^2] &= 1, \\
\E_x \E_{\PP(x)}[A_S^2]^2 &= 1 + \eps^2 \mu c^2, \\
\E_x \E_{\PP(x)}[A_S^2 \cdot (\E_{\PP(x)}[A_S])^2] &= \beta^2 (\mu + \eps \gamma c),
\end{align}
where $c = \frac{1 - 2\theta_0}{\theta_0(1-\theta_0)}$ and $\mu,\alpha,\gamma$ depend only on $(k,p)$.

Let $U=\prod_{i=1}^{a} A_{S_i}$, $V'=\prod_{j=1}^{b} A_{S'_j}$, and $V''=\prod_{j=1}^{b} A_{S''_j}$. Because the sets are disjoint, independence under $\PP(x)$ gives
\begin{equation}
\E_{\PP(x)}[U^2 V' V''] = \prod_{i=1}^a \E_{\PP(x)}[A_{S_i}^2] \prod_{j=1}^b \E_{\PP(x)}[A_{S'_j}] \E_{\PP(x)}[A_{S''_j}],
\end{equation}
\begin{equation}
\E_{\PP(x)}[U V'] = \prod_{i=1}^a \E_{\PP(x)}[A_{S_i}] \prod_{j=1}^b \E_{\PP(x)}[A_{S'_j}], \qquad
\E_{\PP(x)}[U V''] = \prod_{i=1}^a \E_{\PP(x)}[A_{S_i}] \prod_{j=1}^b \E_{\PP(x)}[A_{S''_j}].
\end{equation}
It follows that
\begin{equation}\label{eq:W-factorization}
W(a,b)(x) = \left(\prod_{j=1}^b \E_{\PP(x)}[A_{S'_j}] \E_{\PP(x)}[A_{S''_j}]\right)\left( \prod_{i=1}^a \E_{\PP(x)}[A_{S_i}^2] - \prod_{i=1}^a \E_{\PP(x)}[A_{S_i}]^2 \right).
\end{equation}

Averaging over $x$ factorizes across distinct $p$-sets, so
\begin{equation}\label{eq:Ex-general}
\E_x W(a,b) = \left(\E_x \E_{\PP(x)}[A_S]\right)^{2b}\left( \left(\E_x \E_{\PP(x)}[A_S^2]\right)^a - \left(\E_x \E_{\PP(x)}[A_S]^2\right)^a \right).
\end{equation}
Using the dictionary, $\E_x \E_{\PP(x)}[A_S]=0$, $\E_x \E_{\PP(x)}[A_S^2]=1$, and $\E_x \E_{\PP(x)}[A_S]^2=\beta^2\mu$. Thus if $b \ge 1$ the expectation is zero, while if $b=0$ we obtain $\E_x W(\lambda,0)=1-(\beta^2\mu)^\lambda$, proving \eqref{eq:main-Ex}.

For the variance, write $W=G \cdot H$ where
\begin{equation}
G(x) = \prod_{j=1}^{b} \E_{\PP(x)}[A_{S'_j}] \E_{\PP(x)}[A_{S''_j}], \qquad
H(x) = \prod_{i=1}^{a} \E_{\PP(x)}[A_{S_i}^2] - \prod_{i=1}^{a} \E_{\PP(x)}[A_{S_i}]^2.
\end{equation}
The variables $G$ and $H$ depend on distinct sets and are independent under $\E_x$, so
\begin{equation}
\E_x[W^2] = \E_x[G^2] \cdot \E_x[H^2].
\end{equation}
Independence again gives
\begin{equation}
\E_x[G^2] = \left(\E_x \E_{\PP(x)}[A_S]^2\right)^{2b} = (\beta^2 \mu)^{2b},
\end{equation}
while
\begin{equation}
\E_x[H^2] = \left(\E_x \E_{\PP(x)}[A_S^2]^2\right)^a - 2\left(\E_x \E_{\PP(x)}[A_S^2 \cdot (\E_{\PP(x)}[A_S])^2]\right)^a + \left(\E_x \E_{\PP(x)}[A_S]^4\right)^a.
\end{equation}
Substituting the dictionary values yields
\begin{equation}
\E_x[W(a,b)^2] = (\beta^2 \mu)^{2b}\left((1+\eps^2 \mu c^2)^a - 2[\beta^2(\mu+\eps \gamma c)]^a + (\beta^4 \alpha)^a\right).
\end{equation}
Finally, subtracting $(\E_x W(a,b))^2$ (from \eqref{eq:Ex-general}) gives
\begin{equation}
\Var_x W(a,b) = (\beta^2 \mu)^{2b}\left((1+\eps^2 \mu c^2)^a + (\beta^4 \alpha)^a - 2[\beta^2(\mu+\eps \gamma c)]^a\right)
\end{equation}
when $a \le \lambda-1$ (so $b \ge 1$), while for $a=\lambda$ and $b=0$ we obtain
\begin{equation}
\Var_x W(\lambda,0) = (1+\eps^2 \mu c^2)^{\lambda} + (\beta^4 \alpha)^{\lambda} - 2[\beta^2(\mu+\eps \gamma c)]^{\lambda} - \left(1-(\beta^2 \mu)^{\lambda}\right)^2,
\end{equation}
which is exactly \eqref{eq:main-Varx}.
\end{proof}
\Cref{lem:main_technical} is the key ingredient in the below proof.
\begin{proof}[Proof of \Cref{thm:2nd_markov_arg}.]
Fix $x\in [k]^n$ and work under the conditional probability $\PP(x)$.  We first work with the full guiding state $\ket{u}$; the statement transfers to $\ket{\Tilde u}$ by taking the expectation over the splitting process (using $A_S \mapsto M_SA_S$ and $\E M_S=\zeta$, leading to $\beta \mapsto \zeta \beta$ in all dictionary entries).
By construction,
$\E_{\PP(x)}\ket{u}=\beta^\lambda\ket{v}$, so
\begin{equation}\label{eq:mean}
\E_{\PP(x)}\braket{s}{u}=\beta^\lambda\braket{s}{v}.
\end{equation}
Thus, for the one-sided event
\[
\mathcal E_x\ :=\ \Big\{\braket{s}{u}<\tfrac12\,\beta^\lambda\braket{s}{v}\Big\},
\]
Chebyshev’s inequality (applied to the centered variable
$\braket{s}{u}-\E_{\PP(x)}\braket{s}{u}$) gives
\begin{equation}\label{eq:cheb}
\Pr_{\PP(x)}\big(\mathcal E_x\big)\ \le\ 
\frac{4\,\Var_{\PP(x)}\braket{s}{u}}{\beta^{2\lambda}\,\braket{s}{v}^{2}}\,.
\end{equation}
We next bound the conditional variance.  Writing $u_T=\prod_{i=1}^\lambda A_{S_i}$ for
$T=(S_1,\dots,S_\lambda)\in\mathcal T_n(\ell)$, we have
\begin{equation}\label{eq:var-expansion}
\Var_{\PP(x)}\braket{s}{u}
=\sum_{T,V\in\mathcal T_n(\ell)} s_T s_V\ \Cov_{\PP(x)}(u_T,u_V)
\ \le\ \sum_{T\in\mathcal T_n(\ell)} s_T^{2}\ \sum_{V\in\mathcal T_n(\ell)} \Cov_{\PP(x)}(u_T,u_V),
\end{equation}
by Cauchy-Schwarz.  Taking expectation over $x$ and using Lemma~\ref{lem:main_technical}
(the terms with $a<\lambda$ vanish in expectation, and the $a=\lambda$ term equals $W(\lambda,0)$),
\begin{equation}\label{eq:Ex-var}
\E_x\,\Var_{\PP(x)}\braket{s}{u}
=\lambda!\,\Big(\sum_{T} s_T^{2}\Big)\,\E_x W(\lambda,0)
\ \le\ \lambda!\,.
\end{equation}
Finally, average \eqref{eq:cheb} over $x$ and use \eqref{eq:Ex-var}:
\[
\Pr_{x,\PP(x)}\!\Big(\braket{s}{u}<\tfrac12\,\beta^\lambda\braket{s}{v}\Big)
=\E_x\big[\Pr_{\PP(x)}(\mathcal E_x)\big]
\ \le\ \frac{4}{\beta^{2\lambda}\,\braket{s}{v}^{2}}\ \E_x\,\Var_{\PP(x)}\braket{s}{u}
\ \le\ \frac{4\,\lambda!}{\beta^{2\lambda}\,\braket{s}{v}^{2}}\,,
\]
as claimed. Setting $\beta \mapsto \zeta \beta$ completes the proof. 
\end{proof}

\begin{remark}
The bound \eqref{eq:XXX-basic} can be almost certainly improved, for example by a refined variance control using geometric decay in the overlap levels and the $a=1$ term domination, as in the proof of theorem 38 in \cite{schmidhuber2025quartic}.
\end{remark}
\subsection{Putting everything together}
\label{sec:putting_together}
We now prove \Cref{thm:final_kikuchi_bosonic}, and use that to establish \Cref{thm:kikuchi_q} stated in \Cref{sec:Quantum}.
\begin{proof}[Proof of \Cref{thm:final_kikuchi_bosonic}]
With $\Pi_\geq$ as defined in the theorem statement, define the normalized quantum state $\ket{s} = \frac{1}{||\Pi_\geq \ket{v}||}\Pi_\geq \ket{v}$. By \Cref{thm:markov_arg} and \Cref{lem:certnorm}, $\braket{s}{v} = ||\Pi_\geq \ket{v}||\geq \sqrt{C'/\sqrt{12} \binom{n}{p}^\lambda \mu^\lambda}$ except with probability $o(1)$. Following the proof of \Cref{thm:2nd_markov_arg}, we thus have \begin{equation}\braket{s}{\Tilde u} \geq \frac{1}{2}\left(\beta \zeta \right)^\lambda \braket{s}{v} \geq  \frac{\sqrt{C'}}{12}\left(\beta \zeta \right)^\lambda \binom{n}{p}^{\lambda/2} \mu^{\lambda/2}\end{equation}
except with probability (over $\PP$) of at most
\begin{equation}
\mathrm{FAIL}\ =\ O \left( \frac{\lambda!}{\beta^{2\lambda}\binom{n}{p}^{\lambda}}\, \right) = O\left(\frac{\lambda!}{n^\lambda}\right) = O(n^{-\lambda/2}) = o(1).
\end{equation}
Because $\Pi_\geq$ is a positive semi-definite operator, and using \Cref{lem:guidenorm} appropriately re-normalized with $\E[M_S] = \zeta$, this implies \begin{equation}
   \frac{ \bra{\Tilde u}\Pi_\geq(\cKt)\ket{\Tilde u}}{   \braket{\Tilde u}{\Tilde u}} \geq \frac{ \bra{\Tilde u}\ketbra{s}{s}\ket{\Tilde u}}{   \braket{\Tilde u}{\Tilde u}} \geq \frac{C' (\beta \zeta)^{2\lambda} \binom{n}{p}^\lambda \mu^\lambda}{12 \zeta^\lambda \binom{n}{p}^\lambda } \geq \Omega \left(\zeta^\lambda \mu^\lambda \beta^{2\lambda} \right) 
\end{equation} except with probability $o(1)$, where we have union bounded away all bad events for the quantities $\bra{s}\ket{v}, \bra{s}\ket{\Tilde u}, \norm{\Tilde u}^2$ and $\norm{v}^2$. The statement follows by plugging in $\zeta = \frac{1}{\lceil\log n\rceil}$ and $\beta = \Tilde\Omega\left(\ell^{1/2 - p/4} n^{-p/4} \right)$.
\end{proof}

\begin{proof}[Proof of \Cref{thm:kikuchi_q}]
    As shown already previously, sample slitting reduces the planted energy in \Cref{cor:planted_energy} by at most a factor of $(1-\zeta)$, which is well within the threshold gap in \Cref{thm:kikuchi_bosonic}. By \Cref{lem:efficient_guide_prep}, preparing the guiding state requires time $\tilde O\lr{\ell n^p}$. By the proof of \Cref{thm:final_kikuchi_bosonic}, Quantum Phase Estimation combined with Amplitude Amplification requires a number of repetitions scaling as $ O\lr{\log n^{\ell/2} \cdot \exp(O(\ell)) \cdot n^{\ell/4} \cdot \ell^{\frac{\ell}{4}-\frac{\ell}{2p}}}$ to find an eigenvector certifying the community structure. Implementing the sparse oracle for QPE costs $O(n^p \ell \log n)$ as stated in \Cref{sec:space_adv}, which combines additively with the state preparation step. Putting everything together establishes the claimed gate cost. The quantum and classical space requirements follow from the discussion in \Cref{sec:space_adv}.
\end{proof}

\subsection{Guiding state: efficient preparation}
\label{sec:guide_3}\label{sec:symmetrization}
We now show that the guiding state can be efficiently prepared. 
\begin{lemma}[Guiding state preparation]
\label{lem:efficient_guide_prep}
    Assume $\ell = o(\sqrt n)$ and let $\ket{u}$ be the unnormalized guiding vector of \Cref{def:guide_vector}. With probability $1-o(1)$, the normalized guiding state $\ket{\tilde u} = \ket{u}/\norm{u}$ can be prepared to trace distance $\epsilon$ with cost
    \begin{align}
        \tilde O\lr{\ell n^p \log \frac{1}{\epsilon}}.
    \end{align}
\end{lemma}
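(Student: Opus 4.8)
The plan is to reduce preparation of $\ket{\tilde u}$ to two steps: (i) prepare $\lambda = \ell/p$ independent copies of the ``small'' guiding state $\ket{\tilde u'} := \ket{u'}/\norm{u'}$ of \Cref{def:guide_vector}, and (ii) apply the collision-free projector $\Pi_\lambda$ to $\ket{\tilde u'}^{\otimes\lambda}$ and renormalize, since $\ket{\tilde u} = \Pi_\lambda\ket{u'}^{\otimes\lambda}/\norm{\Pi_\lambda\ket{u'}^{\otimes\lambda}} = \Pi_\lambda \ket{\tilde u'}^{\otimes\lambda}/\norm{\Pi_\lambda\ket{\tilde u'}^{\otimes\lambda}}$. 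For step (i), observe that $\ket{u'} = \sum_{S\in\binom{[n]}{p}} u'_S\ket{S}$ has only $\binom{n}{p} = O(n^p)$ nonzero amplitudes, each equal to one of the two explicit values $(1-\theta_0)/\sqrt{\theta_0(1-\theta_0)}$ or $-\theta_0/\sqrt{\theta_0(1-\theta_0)}$ according to whether $S$ is a hyperedge; after an $\tilde O(n^p)$ classical preprocessing of the instance into a lookup table, any amplitude is available in $\tilde O(1)$ time. I would then invoke textbook state preparation (a balanced binary tree of uniformly-controlled rotations, angles compiled to precision $\epsilon'$) to prepare $\ket{\tilde u'}$ to trace distance $\epsilon'$ in $\tilde O(n^p\log(1/\epsilon'))$ gates; no QRAM is needed since the amplitudes are classically known and the circuit can be hard-wired. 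Repeating on $\lambda$ separate registers produces $\ket{\tilde u'}^{\otimes\lambda}$ to trace distance $\lambda\epsilon'$ in $\tilde O(\ell n^p\log(1/\epsilon'))$ gates.

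For step (ii) I would compute into an ancilla a flag indicating whether the $\lambda$ blocks $(S_1,\dots,S_\lambda)$ are pairwise disjoint (this is $\binom{\lambda}{2}$ intersection tests on sorted $p$-tuples of $\lceil\log n\rceil$-bit labels, costing $\tilde O(\ell^2)$ gates, dominated by step (i) since $\ell = o(\sqrt n)$), and then amplify on the flag. The quantitative heart of the argument is that this projection removes essentially none of the mass, i.e.
\[
\frac{\norm{\Pi_\lambda\ket{u'}^{\otimes\lambda}}^2}{\norm{u'}^{2\lambda}} \;=\; \Pr_{S_1,\dots,S_\lambda \text{ i.i.d.}\sim q}\big[\text{all }S_i\text{ distinct}\big] \;\geq\; 1 - \tbinom{\lambda}{2}\sum_S q(S)^2 \;=\; 1 - \tbinom{\lambda}{2}\frac{\norm{u'}_4^4}{\norm{u'}_2^4},
\]
where $q(S) = (u'_S)^2/\norm{u'}^2$. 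It therefore suffices to show $\norm{u'}_4^4/\norm{u'}_2^4 = O(1/n)$ with probability $1-o(1)$ over $\bY\sim\PP$. Writing $N_e = \sum_S Y_S$ for the number of hyperedges, both quantities are explicit affine functions of $N_e$, namely $\norm{u'}_2^2 = \frac{1-2\theta_0}{\theta_0(1-\theta_0)}N_e + \frac{\theta_0}{1-\theta_0}\binom{n}{p}$ and $\norm{u'}_4^4 = \frac{(1-\theta_0)^4-\theta_0^4}{\theta_0^2(1-\theta_0)^2}N_e + \frac{\theta_0^2}{(1-\theta_0)^2}\binom{n}{p}$. Conditioned on labels $x$, $N_e$ is a sum of independent Bernoullis with mean $\theta_0\binom{n}{p} + \epsilon\sum_S f(x_S) = \Theta(\theta_0\binom{n}{p})$ (using $\epsilon < \theta_0$ and the boundedness of $f$ from \Cref{lem:fprop}), so a Chernoff bound gives $N_e = \Theta(\theta_0\binom{n}{p})$ with probability $1-o(1)$; plugging in yields $\norm{u'}_2^2 = \Theta(\binom{n}{p})$ and $\norm{u'}_4^4 = O(\binom{n}{p}/\theta_0)$, hence $\norm{u'}_4^4/\norm{u'}_2^4 = O\big(1/(\theta_0\binom{n}{p})\big) = O(1/n)$ by the hypothesis $\theta_0 \gtrsim n^{1-p}$ in \eqref{eq:paramconds_sec}. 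Combined with $\ell = o(\sqrt n)$, the success probability is $1 - O(\ell^2/n) = 1-o(1)$, so $O(1)$ rounds of amplitude amplification suffice to make the projection deterministic.

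Finally I would collect errors: the pre-projection state is within $O(\lambda\epsilon')$ in trace distance of $\ket{\tilde u'}^{\otimes\lambda}$, and since the projection succeeds with probability $1-o(1)$ the renormalized post-projection state is within $O(\lambda\epsilon')$ of $\ket{\tilde u}$; choosing $\epsilon' = \Theta(\epsilon/\ell)$ gives final error $\epsilon$ at total cost $\tilde O(\ell n^p\log(1/\epsilon))$. All of this is conditioned on the $1-o(1)$-probability event over $\bY$ on which $N_e$ concentrates, matching the ``with probability $1-o(1)$'' in the statement. I expect the only nonroutine step to be the collision-free mass bound of the second paragraph — everything else (explicit state preparation, a disjointness check, constant-overhead amplification) is standard — and the key structural point is that $\Pi_\lambda$ discards a negligible fraction of the amplitude, which is exactly where the assumptions $\theta_0 = \tilde\Omega(n^{1-p})$ and $\ell = o(\sqrt n)$ enter, paralleling the norm estimates of \Cref{lem:guidenorm}.
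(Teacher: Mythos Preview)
Your overall architecture matches the paper's: prepare $\lambda$ copies of the small state, project onto the collision-free subspace, and amplify. However, your key quantitative step --- the lower bound on the post-projection mass --- has a genuine gap.

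The projector $\Pi_\lambda$ keeps tuples $(S_1,\dots,S_\lambda)$ that are pairwise \emph{disjoint} ($S_a\cap S_b=\emptyset$, cf.\ \Cref{def:hom}), not merely pairwise \emph{distinct}. Your birthday bound
\[
\Pr\big[\text{all }S_i\text{ distinct}\big]\ \ge\ 1-\tbinom{\lambda}{2}\sum_S q(S)^2
\]
controls only $\Pr[S_i=S_j]$, not $\Pr[S_i\cap S_j\neq\emptyset]$. Since disjointness implies distinctness but not conversely, the inequality goes the wrong way and your $\norm{u'}_4^4/\norm{u'}_2^4$ calculation, while correct as stated, does not bound the quantity you need. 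Concretely, two $p$-sets sampled from $q$ can share a vertex without being equal, and when $\theta_0$ is small most of the mass of $q$ sits on the $\Theta(\theta_0\binom{n}{p})$ hyperedges, so you would additionally need to argue that those hyperedges are spread out (e.g.\ that vertex degrees concentrate) to control the intersection probability.

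The paper sidesteps this entirely: it already knows from \Cref{lem:guidenorm} that $\norm{u}^2=\norm{\Pi_\lambda\ket{u'}^{\otimes\lambda}}^2$ concentrates at $\binom{n}{p}^\lambda$, so it only needs to show $\norm{u'}^2=\binom{n}{p}(1\pm o(1/\lambda))$ via a first/second moment computation, and then $\eta=\norm{u}^2/\norm{u'}^{2\lambda}=\Theta(1)$ follows. You could either invoke \Cref{lem:guidenorm} directly, or repair your route by bounding $\Pr_{S,T\sim q}[S\cap T\neq\emptyset]\le\sum_{v\in[n]}(\sum_{S\ni v}q(S))^2$ and showing this is $O(p^2/n)$ via concentration of vertex degrees in the random hypergraph; either way the conclusion $\eta=1-O(\ell^2/n)$ is correct, but the argument as written does not establish it.
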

\begin{proof}
    Let $\ket{w'} = \sum_{S\in \binom{[n]}{p}} w'_S \ket{S}$ be a normalized vector. The success probability of preparing the state $\ket{w}$ proportional to $\Pi_\lambda \norm{w'}^{\otimes \lambda}$ is then
    \begin{align}
        \eta = \norm{\Pi_\lambda \ket{w'}^{\otimes \lambda}}^2 = \sum_{(S_1,\dots,S_\lambda)\in\cC_\lambda} \prod_{t=1}^\lambda |w'_{S_t}|^2.
    \end{align}
    Hence, if $C_{w'} = O(n^p)$ is the gate cost to prepare $\ket{w'}$ and $C_\Pi = \tilde O(\lambda p \log n)$ is the cost to test collision-freeness, the total cost of preparing the state $\ket{w}$ to $\epsilon$ trace distance is
    \begin{align}\label{eq:stateprep}
        \tilde O\lr{\frac{\lambda C_{w'} + C_\Pi}{\sqrt{\eta}} \log \frac{1}{\epsilon}}
    \end{align}
    via amplitude amplification. We proceed to show concentration bounds on $\eta$ for vector $\ket{u'}$.
    
    Recall that $\ket{u} = \Pi_\lambda \ket{u'}^{\otimes \lambda}$. As computed in \Cref{lem:guidenorm}, for any $\delta > 0$,
    \begin{align}
        \pr{\left|\norm{u}^2 - \binom{n}{p}^\lambda\right| \geq \delta \binom{n}{p}^\lambda} = o(1)
    \end{align}
    for $\ell = o(\sqrt n)$. Hence, it suffices to show that $\norm{u'}^2 = \binom{n}{p}\lr{1 \pm o(1/\lambda)}$ with high probability.
    The first moment is
    \begin{align}
        \E \norm{u'}^2 &= \E_x \sum_S \lr{1 + \epsilon f(x_S)\frac{1-2\theta_0}{\theta_0(1-\theta_0)}} = \binom{n}{p}
    \end{align}
    and the second moment is
    \begin{align}
        \E \norm{u'}^4 &= \frac{1}{[\theta_0(1-\theta_0)]^2}\lr{\E \sum_{S} (Y_S-\theta_0)^4 + \sum_{S\neq T}(Y_S-\theta_0)^2(Y_T-\theta_0)^2}\\
        &= \binom{n}{p}\frac{\theta_0^4(1-\theta_0) + \theta_0(1-\theta_0)^4}{{[\theta_0(1-\theta_0)]^2}}\nonumber\\
        &\qquad + \frac{1}{[\theta_0(1-\theta_0)]^2}\E_x \sum_{S\neq T} [\theta_0(1-\theta_0) + \epsilon f(x_S)(1-2\theta_0)][\theta_0(1-\theta_0) + \epsilon f(x_T)(1-2\theta_0)]\\
        &= \frac{1}{[\theta_0(1-\theta_0)]^2}\lr{\binom{n}{p}\lr{\theta_0^4(1-\theta_0) + \theta_0(1-\theta_0)^4 + \left[\binom{n}{p}-1\right][\theta_0(1-\theta_0)]^2}}\\
        &= \binom{n}{p}^2\lr{1 + O(n^{-p})}.
    \end{align}
    Hence, we obtain variance 
    \begin{align}
        \Var \norm{u'}^2 = O(n^p).
    \end{align}
    Taking $\ket{w'} = \ket{u'}/\sqrt{\binom{n}{p}}$ so $\Var \norm{w'}^2 = O(n^{-p})$, Chebyshev's inequality gives for any $\delta > 0$ that
    \begin{align}
        \pr{\left|\norm{w'}^2 - 1\right| \geq \frac{\delta}{\lambda}} = O\lr{\frac{\lambda^2}{n^p}}.
    \end{align}
    We complete the proof by taking $\lambda = o(\sqrt n)$.
\end{proof}

\end{document}